\colorlet{linkequation}{blue}
\newcommand*{\SavedEqref}{}
\let\SavedEqref\eqref
\renewcommand*{\eqref}[1]{%
  \begingroup
    \hypersetup{
      linkcolor=blue,
      linkbordercolor=blue,
    }%
    \SavedEqref{#1}%
  \endgroup
}
\newcommand{\Hom}{{\rm Hom}}
\def\ad{\mathrm{ad}}
\newcommand{\Sym}{\mathrm{Sym}}
\newcommand{\bDelta}{\boldsymbol{\Delta}}
\newcommand{\bl}{\begin{Lemma}}
	\newcommand{\el}{\end{Lemma}}
\newcommand{\bt}{\begin{Theorem}}
	\newcommand{\et}{\end{Theorem}}
\newcommand{\bd}{\begin{definition}}
	\newcommand{\ed}{\end{definition}}
\newcommand{\End}{\mathrm{End}}
\newcommand{\Aut}{\mathrm{Aut}}
\newcommand{\eqdef}{\stackrel{{\rm def.}}{=}}
\DeclareFontFamily{U}{rsf}{}
\DeclareFontShape{U}{rsf}{m}{n}{<5> <6> rsfs5 <7> <8> <9> rsfs7 <10-> rsfs10}{}
\DeclareMathAlphabet\Scr{U}{rsf}{m}{n}
\def\Z{\mathbb{Z}}
\def\R{\mathbb{R}}
\def\bD{\mathbb{D}}
\def\rk{{\rm rk}}
\def\dd{\mathrm{d}}
\def\ad{\mathrm{ad}}
\def\bcD{\boldsymbol{\mathcal{D}}}
\def\bJ{\mathbf{J}}
\def\bomega{\boldsymbol{\omega}}
\def\bJ{\mathbf{J}}
\def\bt{\mathbf{t}}
\def\U{\mathrm{U}}
\def\graph{\mathrm{graph}}
\def\ungraph{\mathrm{ungraph}}
\def\bTheta{\mathbf{\Theta}}
\def\bPsi{\mathbf{\Psi}}
\newcommand{\be}{\begin{equation*}}
\newcommand{\ee}{\end{equation*}}
\newcommand{\ben}{\begin{equation}}
\newcommand{\een}{\end{equation}}
\newcommand{\beqa}{\begin{eqnarray*}}
	\newcommand{\eeqa}{\end{eqnarray*}}
\newcommand{\beqan}{\begin{eqnarray}}
\newcommand{\eeqan}{\end{eqnarray}}
\newcommand{\nn}{\nonumber}
\newcommand{\id}{\mathrm{id}}
\newcommand{\Tr}{\mathrm{Tr}}
\def\cC{{\mathcal C}}
\def\loslash{\bm{\oslash}}
\newcommand{\cM}{\mathcal{M}}
\def\Hol{\mathrm{Hol}}
\def\cD{\mathcal{D}}
\def\cE{\check{E}}
\def\cX{\mathcal{X}}
\def\cP{\mathcal{P}}
\def\cG{\mathcal{G}}
\def\cF{\mathcal{F}}
\def\cC{\mathcal{C}}
\def\cH{\mathcal{H}}
\def\bXi{{\boldsymbol{\Xi}}}
\def\G_2{\mathrm{G_2}}
\def\cS{\mathcal{S}}
\def\cV{\mathcal{V}}
\def\mf{\mathbf{f}}
\def\G{\mathrm{G}}
\def\cE{\mathcal{E}}
\def\Isom{\mathrm{Isom}}
\def\bg{\mathbf{g}}
\def\mT{\mathbf{T}}
\def\bPhi{\mathbf{\Phi}}
\def\bConf{\mathbf{Conf}}
\def\bSol{\mathbf{Sol}}
\def\cX{\mathcal{X}}
\def\grad{\mathrm{grad}}
\def\bD{\mathbf{D}}
\def\bd{\boldsymbol{\dd}}
\def\bg{\mathbf{g}}
\def\bQ{\boldsymbol{Q}}
\def\bomega{\boldsymbol{\omega}}
\def\bgamma{\boldsymbol{\gamma}}
\def\rS{\mathrm{S}}
\def\Met{\mathrm{Met}}
\def\Iso{\mathrm{Iso}}
\def\i{\mathbf{i}}
\def\sc{\mathrm{sc}}
\def\Sol{\mathrm{Sol}}
\def\Conf{\mathrm{Conf}}
\def\ub{\mathrm{ub}}
\def\fg{\mathfrak{g}}
\def\triv{\mathrm{triv}}
\def\mG{\mathbb{G}}
\def\bcS{\boldsymbol{\mathcal{S}}}
\def\bD{\mathbf{D}}
\def\mq{\mathbf{q}}
\def\bsigma{\boldsymbol{\sigma}}
\def\bcV{\boldsymbol{\cV}}
\def\cl{\mathrm{cl}}
\def\fh{\mathfrak{h}}
\def\tvarphi{{\tilde \varphi}}
\begin{document}

\title{Section sigma models coupled to symplectic duality bundles on Lorentzian four-manifolds}
  
\author{C. I. Lazaroiu\inst{1} and C. S. Shahbazi\inst{2}}
\institute{Center for Geometry and Physics, Institute for Basic
  Science, Pohang 790-784, Republic of Korea, \email{calin@ibs.re.kr}
  \and Department of Mathematics, University of Hamburg, Germany, \email{carlos.shahbazi@uni-hamburg.de}}

\date{}

\maketitle


\begin{abstract}
We give the global mathematical formulation of a class of generalized
four-dimensional theories of gravity coupled to scalar matter and to
Abelian gauge fields. In such theories, the scalar fields are
described by a section of a surjective pseudo-Riemannian submersion
$\pi$ over space-time, whose total space carries a Lorentzian metric
making the fibers into totally-geodesic connected Riemannian submanifolds.  In
particular, $\pi$ is a fiber bundle endowed with a complete Ehresmann
connection whose transport acts through isometries between the
fibers. In turn, the Abelian gauge fields are ``twisted'' by a flat
symplectic vector bundle defined over the total space of $\pi$. This
vector bundle is endowed with a vertical taming which locally encodes
the gauge couplings and theta angles of the theory and gives rise to the notion of twisted self-duality, of crucial importance to construct the theory. When the Ehresmann connection of $\pi$ is integrable, we show that our theories are locally equivalent to ordinary Einstein-Scalar-Maxwell theories and
hence provide a global non-trivial extension of the universal bosonic sector of
four-dimensional supergravity. In this case, we show using a special
trivializing atlas of $\pi$ that global solutions of such models can
be interpreted as classical ``locally-geometric'' U-folds. In the
non-integrable case, our theories differ locally from ordinary
Einstein-Scalar-Maxwell theories and may provide a geometric
description of classical U-folds which are ``locally non-geometric''.
\end{abstract}

\tableofcontents

\section{Introduction}

The construction of four-dimensional supergravity theories usually
found in the physics literature (see, for example,
\cite{Andrianopoli,AndrianopoliII,AndrianopoliIII,Aschieri}) is 
local in the sense that it is carried out ignoring the topology of the space-time
manifold and without specifying the precise global description of the
configuration space or the global mathematical structures required to
define it. Such constructions are discussed traditionally only on a
contractible subset $U$ of space-time, which guarantees that any fiber
bundle defined on $U$ is trivial and hence that any section of such a
bundle can be identified with a map from $U$ into the
fiber. Accordingly, the physics literature traditionally treats all
classical fields as functions defined on $U$ and valued in some target
space, which is either a vector space or (for the scalar fields) a
manifold $\cM$ endowed with a Riemannian metric $\cG$. It is often
also tacitly assumed that $\cM$ is contractible, which implies that
the duality structure \cite{GESM,GESMNote} of the Abelian gauge theory
coupled to the scalar fields is described by a trivial flat symplectic
vector bundle $\cS$ defined on $\cM$, whose data can then be encoded
by a symplectic vector space $\cS^0$ (the fiber of $\cS$)
\cite{AndrianopoliII,AndrianopoliIII}. Due to this assumption, the
gauge field strengths and their Lagrangian conjugates are usually
treated as two-forms defined on $U$ and valued in $\cS^0$. It is not
unambiguously clear how such local constructions can be made into
complete, mathematically rigorous, global definitions of classical
theories of matter coupled to gravity when $M$ and $\cM$ are not
contractible. To fully define such theories, one must decide how to
interpret globally various local formulas and differential operators.
Such global interpretations are generally non-unique in the sense
that they depend on choices of auxiliary geometric data
which are not visible in the usual local formulation
\cite{GESM,GESMNote,GeometricUfolds}.

In this paper, we consider the issue of finding a general global
mathematical formulation of the universal bosonic sector of
four-dimensional supergravity theories, which consists of gravity
coupled to scalars and to Abelian gauge fields. We find, among other
results, that the traditional local formulas
\cite{Andrianopoli,AndrianopoliII,Aschieri,GaillardZumino,Ortin,FreedmanProeyen}
can be interpreted globally in a manner which provides a mathematical
definition of ``classical locally-geometric U-folds'' as global
solutions of the equations of motion of the resulting globally-defined
classical theory. Furthermore, the global theory that we obtain
represents a non-trivial extension of the standard bosonic sector of
ungauged four-dimensional supergravity. The supersymmetrization of the
theories introduced in this paper is currently an open problem
involving several objects and structures of mathematical interest,
such as spinor bundles and Lipschitz structures, Special-K\"ahler and
Quaternionic-K\"ahler manifolds or exceptional Lie groups, all
interacting in a delicate equilibrium dictated by supersymmetry. 

In our construction, the scalar map of the sigma model is first
promoted to a section of a {\em Kaluza-Klein space}
defined over space-time and endowed with a {\em vertical scalar
  potential} $\bPhi$, leading to a {\em section sigma model} for the
scalar fields. The latter is described by a Lagrangian density defined
on the space of sections of $\pi$. In the particular case when $\bPhi$
vanishes, the solutions of the equations of motion of the section
sigma model are the {\em pseudoharmonic sections} studied by
C.~M.~Wood \cite{Wood1,Wood2,Wood3,Wood4,Wood5,Wood6}. The section
sigma model is then coupled to Abelian gauge fields governed by a
non-trivial duality structure, thereby extending the construction of
\cite{GESM} to this more general setting. The globally-defined theory
obtained in this manner will be called {\em generalized
  Einstein-Section Maxwell} (GESM) theory.

By definition, a {\em Kaluza-Klein space} over a Lorentzian
manifold $(M,g)$ is a surjective submersion $\pi:E\rightarrow M$ whose total
space $E$ is endowed with a Lorentzian metric $h$ (known as a {\em
  Kaluza-Klein metric}) such that $\pi$ is a pseudo-Riemannian
submersion \cite{ONeillBook} from $(E,h)$ to $(M,g)$, whose fibers are
totally geodesic Riemannian connected submanifolds of $(E,h)$. Such
spaces were studied in the literature on the mathematical foundations
of Kaluza-Klein theory (see, for example,
\cite{Bourguignon,Hogan,Betounes1,Betounes2}) and their Riemannian
counterpart played an important role in the construction of non-trivial
examples of Einstein metrics \cite{Besse} and in the study of metrics
of positive sectional curvature \cite{Ziller}. The orthogonal
complement of the vertical distribution of $\pi$ with respect to a
Kaluza-Klein metric $h$ is a horizontal distribution $H$ which gives a
complete Ehresmann connection for $\pi$. Since the fibers are totally
geodesic, the Ehresmann transport $T$ defined by $H$ proceeds through
isometries between the fibers. As a consequence, all fibers can be
identified with some model Riemannian manifold $(\cM,\cG)$ and the
holonomy group $G$ of the Ehresmann connection is a subgroup of the
isometry group $\Iso(\cM,\cG)$ of the fiber, thereby being a
finite-dimensional Lie group. This implies that $\pi$ is a fiber
bundle with structure group $G$ (a fiber $G$-bundle in the sense of
\cite{Kollar,Michor}) and hence is associated to a principal
$G$-bundle $\Pi$ (known as the {\em holonomy bundle}) through the
isometric action of $G$ on $\cM$. Moreover, the Ehresmann connection
$H$ is induced by a principal connection $\theta$ defined on $\Pi$. In
fact, the Kaluza-Klein metric $h$ is uniquely determined by $H$, $g$
and $\cG$ or, equivalently, by $\theta$, $g$ and $\cG$ together with
an embedding $G\hookrightarrow \Iso(\cM,\cG)$. We say that the
Kaluza-Klein space is {\em integrable} if the distribution $H$ is
Frobenius integrable, which happens if and only if the principal
connection $\theta$ is flat. A {\em vertical scalar potential} on
$\pi$ is a smooth real-valued map $\bPhi$ defined on the total space
$E$ of $\pi$ such that the restrictions of $\bPhi$ to the fibers of
$\pi$ are related by the Ehresmann transport and hence can be
identified with a smooth real-valued function $\Phi$ defined on $\cM$, 
which is invariant under the holonomy group $G$. 

When $\pi$ is a topologically trivial fiber bundle and $H$ is a
trivial Ehresmann connection, the sections of $\pi$ can be identified
with the graphs of smooth functions from $M$ to $\cM$ and the section
sigma model reduces to the ordinary scalar sigma model defined by the
``scalar structure'' $(\cM,\cG,\Phi)$ \cite{GESM,GESMNote}. This
reduction always happens when $M$ is contractible and $H$ is
integrable, since in that case $\pi$ is necessarily topologically
trivial and $H$ is necessarily a trivial Ehresmann connection. In
particular, scalar sigma models with integrable Kaluza-Klein space are
{\em locally} indistinguishable from ordinary scalar sigma
models. When $\pi$ is topologically trivial but $H$ is not integrable,
the section sigma model reduces to a {\em modified} nonlinear sigma
model for maps from $M$ to $\cM$, which, to our knowledge, has not
been considered before. As a consequence, section sigma models with
non-integrable Kaluza-Klein space are {\em locally} distinct from
ordinary sigma models.

As mentioned above, section sigma models can be coupled
to Abelian gauge fields, leading to the construction of generalized
Einstein-Section-Maxwell (GESM) theories. The most general coupling of the
ordinary scalar sigma model to Abelian gauge fields can be described
globally \cite{GESM} using a so-called ``electromagnetic structure''
defined on $\cM$. In our situation, this is promoted to a 
``horizontally constant'' electromagnetic structure defined on the
total space $E$ of the Kaluza-Klein space. By definition, this is a
tamed flat symplectic vector bundle $\bcS$ defined over $E$ such that
the taming is invariant under the lift of
the Ehresmann transport of $\pi$ along the flat connection of
$\bcS$.

The fiber bundle $\pi$ admits {\em special trivializing atlases}
supported on convex covers $(U_\alpha)_{\alpha\in I}$ of $M$. The
local trivialization maps of such an atlas are constructed using the
Ehresmann transport along geodesics lying inside $U_\alpha$ and ending
at some reference point chosen in each $U_\alpha$. The behavior of the
section sigma model with respect to such atlases depends on whether
the Ehresmann connection $H$ is integrable or not:
\begin{enumerate}[A.]
	
\itemsep 0.0em

\item When the Kaluza-Klein space is integrable, any special
  trivializing atlas allows one to identify the restriction of
  $(\pi,H)$ to $U_\alpha$ with the topological trivial fiber bundle
  $(E^0_{\alpha}\eqdef U_\alpha\times \cM, \pi^0_{\alpha}\eqdef
  \pi|_{E_\alpha^0})$ defined over $U_\alpha$, endowed with the
  trivial Ehresmann connection. In this case, the restriction of the
  section sigma model to $U_\alpha$ identifies with the ordinary sigma
  model of maps from $U_\alpha$ to $\cM$. This implies that a global
  solution of the section sigma model can be obtained by patching
  local solutions of the ordinary scalar sigma model using symmetries
  of the equations of motion of the later and thus it can be
  interpreted as a {\em classical} counterpart of a U-fold. Similar
  statements apply after coupling to Abelian gauge fields. Thus:
	
\
	
\noindent{\em When the Kaluza-Klein space is integrable, global
  solutions of the GESM theory correspond to classical
  locally-geometric U-folds glued from local solutions of the ordinary
  sigma model coupled to Abelian gauge fields (which may have a
  non-trivial duality structure) using symmetries of the equations of
  motion of the latter.}
	
\
	
\noindent This can be taken as a rigorous {\em definition} of
classical locally-geometric U-folds and may provide a global geometric
description of the classical limit of certain string theory U-folds in
four dimensions.
	
\

\item When the Kaluza-Klein space is not integrable, the
  restriction of $H$ to $U_\alpha$ identifies with a non-integrable
  horizontal distribution of $\pi^0_\alpha$. In this case, local
  sections of $\pi$ defined over $U_\alpha$ are the graphs of smooth
  maps into the fiber $\cM$ and local solutions can be interpreted as
  solutions of the {\em modified} non-linear sigma model for maps
  mentioned above. Hence:
	
\
	
\noindent When the Kaluza-Klein space is not integrable,
global solutions of the GESM theory can be glued from local solutions
of the modified sigma model coupled to Abelian gauge fields (which may
have a non-trivial duality structure).
	
\
	
\end{enumerate}
Together with the results of \cite{GESM}, point A. above implies that
GESM theories with integrable Kaluza-Klein space are {\em locally}
indistinguishable from usual Einstein-Scalar-Maxwell theories and
hence provide admissible global interpretations of the local formulas
governing the universal bosonic sector of four-dimensional
supergravity. On the other hand, point B. above implies that GESM
theories with non-integrable Kaluza-Klein space are {\em locally
  distinct} from ordinary ESM theories.

The paper is organized as follows. Section \ref{sec:KK} discusses
Kaluza-Klein spaces, their special trivializing atlases and their
classification in the integrable case. The same section discusses
vertical scalar potentials and bundles of scalar structures as well as
the classification of the latter in the integrable case.  Section
\ref{sec:scalarsection} discusses section sigma models and the U-fold
interpretation of their global solutions in the integrable
case. Section \ref{sec:scalarelectro} discusses bundles of
scalar-electromagnetic structures, which are Kaluza-Klein spaces
endowed with ``horizontally-constant'' data describing the scalar
potential and electromagnetic bundle needed to couple the section
sigma model to Abelian gauge fields. Section \ref{sec:GESM} gives the
global formulation of a generalized Einstein-Section-Maxwell
theory. In the integrable case, we show that special trivializing
atlases of the underlying Kaluza-Klein space allow one to interpret
global solutions of such models as classical locally-geometric
U-folds. This section contains the main result of the paper, namely
Theorem \ref{thm:equivalence}, which proves in precise terms the {\em
  local} equivalence between GESM theories and ordinary ESM
theories. Section \ref{sec:SS} illustrates section sigma models with a
simple example, showing how in a special case they recover the
celebrated Scherk-Schwarz construction\footnote{Albeit {\em without}
  gauging of any putative continuous isometry of the scalar
  manifold. In fact, our construction does {\em not} assume existence
  of any continuous isometries.}. Finally, Section
\ref{sec:conclusions} contains our conclusions and some directions for
further research. Appendix \ref{app:submersions} contains technical
material on pseudo-Riemannian submersions and Kaluza-Klein
spaces. Appendix \ref{app:localadapted} shows that, in the
non-integrable case, a section sigma model reduces locally to a
modified sigma model for maps, which we describe explicitly using
adapted local coordinates. In addition, the same Appendix gives local
expressions in adapted coordinates for some key objects used in the
formulation of GESM theories.

\subsection{Notations and conventions}

All manifolds considered are smooth, connected, Hausdorff and
paracompact (hence also second countable) while all fiber bundles
considered are smooth. All submersions considered are assumed to be
surjective and to have connected fibers.  Given vector bundles $\cS$
and $\cS'$ over some manifold $M$, we denote by $Hom(\cS,\cS')$,
$Isom(\cS,\cS')$ the bundles of morphisms and isomorphisms from $\cS$
to $\cS'$ and by $\Hom(\cS,\cS')\eqdef \Gamma(M, Hom(\cS,\cS'))$,
$\Isom(\cS,\cS')\eqdef \Gamma(M, Isom(\cS,\cS'))$ the sets of smooth
sections of these bundles. When $\cS'=\cS$, we set $End(\cS)\eqdef
Hom(\cS,\cS)$, $Aut(\cS)\eqdef Isom(\cS,\cS)$ and $\End(\cS)\eqdef
\Hom(\cS,\cS)$, $\Aut(\cS)\eqdef \Isom(\cS,\cS)$. Given a smooth map
$f:M_1\rightarrow M_2$ and a vector bundle $\cS$ on $M_2$, we denote
the $f$-pull-back of $\cS$ to $M_1$ by $\cS^f$. Given a section $s\in
\Gamma(M_2,\cS)$, we denote its $f$-pullback by $s^f\in
\Gamma(M_1,\cS^f)$. Given $T\in \Hom(\cS,\cS')$, where $\cS,\cS'$ are
vector bundles defined on $M_2$, we denote the $f$-pullback of $T$ by
$T^f\in \Hom(\cS^f,(\cS')^f)$.  Let $\Met_{p,q}(W)$ denote the set of
non-degenerate symmetric pairings of signature $(p,q)$ on a vector
bundle $W$ of rank $\rk W=p+q$. Let $\Met(W)\eqdef \sqcup_{p,q\geq 0,
  p+q=\rk W}\Met_{p,q}(W)$ denote the set of all non-degenerate
metrics on $W$. When $W=TM$ is the tangent bundle of a manifold $M$,
we set $\Met_{p,q}(M)\eqdef \Met_{p,q}(TM)$ and $\Met(M)\eqdef
\Met(TM)$.  By definition, a Lorentzian manifold is a
pseudo-Riemannian manifold of ``mostly plus'' signature. Given a
manifold $M$, let $\cP(M)$ denote the set of paths (piecewise-smooth
curves) $\gamma:[0,1]\rightarrow M$. We sometimes use various
notations and conventions introduced in \cite{GESM}.

\begin{remark}
Throughout the paper, we use the mathematical concept of
``Kaluza-Klein space'', which is well-established in the mathematics
and mathematical physics literature (see, for example,
\cite{Bourguignon,Hogan,Betounes1,Betounes2,Besse}). Such spaces were
initially defined and studied in the context of mathematical
foundations of Kaluza-Klein theories, which arise by reducing a
higher-dimensional theory of gravity on such a manifold. In the present
paper, such spaces are used merely as convenient auxiliary
mathematical data which parameterize a GESM theory, despite the
historical context in which they were introduced initially. Throughout
the paper, no reduction of any putative higher dimensional theory on
such spaces is ever suggested or performed.
\end{remark}

\section{Kaluza-Klein spaces, vertical potentials and bundles of scalar structures}
\label{sec:KK}

\subsection{Lorentzian submersions and Kaluza-Klein metrics} 

Let $(M,g)$ be a connected four-dimensional Lorentzian manifold. Let
$E$ be a connected manifold of dimension $\dim E=n+4$, where $n>0$. 
Recall that a smooth map $\pi:E\rightarrow M$ is called a {\em
  surjective submersion} if $\pi$ is surjective and the linear map
$\dd_e \pi\colon T_e E\rightarrow T_{\pi(e)} M$ is surjective for all
$e\in E$. We shall always assume that the fibers of $\pi$ are
connected.

Let $\pi:E\rightarrow M$ be a surjective submersion with connected fibers.  The
{\em vertical distribution} of $\pi$ is the rank $n$ distribution
$V\eqdef \ker (\dd \pi)\subset T E$ defined on $E$, which is Frobenius
integrable and integrates to the foliation whose leaves are the fibers
of $\pi$. A Lorentzian metric $h$ on $E$ is called {\em
  $\pi$-positive} if its restriction to each fiber of $\pi$ is
positive-definite. In this case, the $h$-orthogonal complement
$H:=H(h)$ of $V$ inside $TE$ is a distribution of rank four called the
{\em horizontal distribution} of $\pi$ determined by $h$. Let $h_V$
and $h_H$ denote the metrics induced by $h$ on $V$ and $H$, which we
call the {\em vertical} and {\em horizontal} metrics induced by
$h$. For any $e\in E$, the map $\dd_e \pi$ restricts to a linear
bijection from $H_e$ to $T_{\pi(e)}M$. Hence the restriction
$\dd\pi|_H$ of $\dd \pi$ to $H$ induces a based isomorphism of vector
bundles $(\dd\pi)_H\colon H\stackrel{\sim}{\rightarrow} (TM)^\pi$.

\begin{definition}
Let $h$ be a $\pi$-positive Lorentzian metric on $E$. 
The surjective submersion $\pi:E\rightarrow M$ is called a {\em
  Lorentzian submersion} from $(E,h)$ to $(M,g)$ if the bundle
isomorphism $(\dd\pi)_H$ is an isometry from $(H,h_H)$ to
$((TM)^\pi,g^\pi)$. 
\end{definition}

\noindent 
When $\pi$ is a Lorentzian submersion from $(E,h)$ to $(M,g)$, the
pair $(H,h_H)$ is a pseudo-Euclidean distribution of signature $(3,1)$
defined on $E$ while $(V,h_V)$ is a Euclidean distribution. In
particular, $\pi$ is a surjective pseudo-Riemannian submersion in the
sense of \cite{ONeillBook}.

\begin{definition}
A {\em Kaluza-Klein metric} for the surjective submersion
$\pi:E\rightarrow M$ relative to the Lorentzian metric $g\in
\Met_{3,1}(M)$ is a $\pi$-positive Lorentzian metric $h$ on $E$
which makes $\pi$ into a Lorentzian submersion from $(E,h)$ to
$(M,g)$.
\end{definition}

\noindent More information about pseudo-Riemannian submersions and
Kaluza-Klein metrics can be found in Appendix \ref{app:submersions}.

Any horizontal distribution $H$ defines a horizontal lift of vector
fields, which takes $Q\in \cX(M)$ into the unique horizontal vector
field $\bar{Q}\in \Gamma(E,H)$ satisfying $\dd\pi(\bar{Q}) = Q$. Moreover, 
any vector field $X\in \cX(E)$ decomposes uniquely as $X=X_H\oplus X_V$, 
where $X_H\in \Gamma(E,H)$ and $X_V\in \Gamma(E,V)$. The
{\em curvature} $\cF\in \Omega^2(E,V)$ of $H$ is defined
as \cite{Michor}:
\be
\cF(X,Y)\eqdef [X_H,Y_H]_V\, , \quad \forall X,Y\in \cX(E)\, .
\ee
Its restriction to $H$ gives a section
$\cF_H\in \Gamma(E,\wedge^2H^\ast\otimes V)$ which satisfies:
\be
\cF_H(X,Y)=\cF(X,Y)=[X,Y]_V\, , \quad \forall X,Y\in \Gamma(E,H)
\ee
and hence coincides up to a constant factor with the restriction to
$H$ of O'Neill's second fundamental tensor $A$ \cite{ONeill} of the
pseudo-Riemannian submersion $\pi$:
\be
A_XY=-A_YX=\frac{1}{2}\cF_H(X,Y)\, , \quad \forall X,Y\in \Gamma(E,H)~~.
\ee
A basic property of O'Neill's tensor is that it vanishes if and only
if $\cF_H$ does. Hence either of $A$ or $\cF_H$ describe the
obstruction to Frobenius integrability of $H$.

The distribution $H$ is called {\em complete} if the flow $T$ defined
by horizontal lifts of vector fields is globally-defined, which
amounts to the condition that {\em any} curve in $M$ lifts to a
horizontal curve in $E$ through any point lying in the fiber above its
source.  When $H$ is complete, it follows from a result of
\cite{Ehresmann} that $\pi$ is a fiber bundle, though its structure
group need not be a finite-dimensional Lie group. In this case, $H$ is
an Ehresmann connection for $\pi$ and $T$ is called its Ehresmann
transport. Reference \cite{Reckziegel} shows that a sufficient
condition for $H$ to be complete and for the structure group to be a
finite-dimensional Lie group is that the fibers of $\pi$ be
geodesically complete and totally geodesic connected submanifolds of
$(E,h)$. In this case the Ehresmann transport $T$ is isometric, which
means that $T_\gamma$ is an isometry from $E_{\gamma(0)}$ to
$E_{\gamma(1)}$ for any path $\gamma\in \cP(M)$. Hence the structure
group is isomorphic to the Riemannian isometry group of the fiber of
$\pi$ over any given point in $M$. 

\subsection{Lorentzian Kaluza-Klein spaces}

\begin{definition}
A {\em Lorentzian Kaluza-Klein space} over $(M,g)$ is a Lorentzian
submersion $\pi:(E,h)\rightarrow (M,g)$ such that $(E,h)$ is connected
and such that the fibers of $\pi$ are geodesically complete and
totally geodesic connected submanifolds of $(E,h)$. The Kaluza-Klein
space is called {\em integrable} if the horizontal distribution
$H(h)\subset TE$ defined by $h$ is Frobenius integrable.
\end{definition}

\noindent
Consider a Lorentzian Kaluza-Klein space $\pi:(E,h)\rightarrow (M,g)$ with
horizontal distribution $H:=H(h)$, whose Ehresmann transport we denote
by $T:=T(h)$. Let $m_0$ be fixed point of $M$ and let $\cM\eqdef
E_{m_0}=\pi^{-1}(\{m_0\})$ and $\cG\eqdef h|_{E_{m_0}}$. As mentioned
above, the results of \cite{Reckziegel} imply that $H$ is a complete
Ehresmann connection, that $\pi$ is a fiber bundle and that the
transport $T$ is isometric. Hence the restriction $h_m\eqdef h|_{E_m}$
of $h$ to the fibers of $E$ is uniquely determined by $\cG$ and by
$T$. Therefore, the metric $h$ of a Kaluza-Klein space is uniquely
determined by $g$, $\cG$ and $H$. When $g$ and $\cG$ are fixed, we
thus have a bijection between Ehresmann connections $H$ for $\pi$ and
Lorentzian metrics $h$ on $E$ such that $\pi:(E,h)\rightarrow (M,g)$
is a Kaluza-Klein space and such that $h_{m_0}=\cG$. Let:
\be
G\eqdef \{T_\gamma\, , \,\, |\,\, \gamma\in \cP(M)\, , \quad \gamma(0)=\gamma(1)=m_0\}\subseteq \Iso(\cM,\cG)
\ee
be the Ehresmann holonomy group at $m_0$. For any $m\in M$, consider the set:
\be
\Pi_m \eqdef \{T_\gamma\,\, |\,\, \gamma\in \cP(M): \gamma(0)=m_0\,\, \&\,\,  \gamma(1)=m\}\, ,
\ee
and let $\Pi\eqdef \sqcup_{m\in M}\Pi_m$ be endowed with the obvious
projection to $M$.  Then $\Pi$ is a principal $G$-bundle (known as the
{\em holonomy bundle} of $H$ relative to $m_0$) under the obvious
right action of $G$. Moreover, the Ehresmann connection $H$ induces a
principal connection $\theta$ on $\Pi$. Conversely, let $\rho$ be the
isometric action of $G$ on $\cM$ given by the inclusion $G\subseteq
\Iso(\cM,\cG)$. Then the associated bundle construction associates to
a principal bundle $\Pi$ with principal connection $\theta$ the fiber
bundle $E_\Pi\eqdef \Pi\times_\rho \cM$ with induced Ehresmann
connection $H_\theta$. The two correspondences described above give
mutually quasi-inverse functors between the groupoid of Kaluza-Klein
spaces over $(M,g)$ which have typical fiber $(\cM,\cG)$ and are
endowed with a horizontal distribution $H$ with holonomy $G\subseteq
\Iso(\cM,\cG)$ and the groupoid of principal $G$-bundles $\Pi$ defined
over $M$ and endowed with a principal connection $\theta$ together
with an embedding $G\subseteq \Iso(\cM,\cG)$ up to conjugation. Hence:

\begin{proposition}
Let $g$ be a fixed Lorentzian metric on $M$. Then isomorphism classes
of Kaluza-Klein spaces over $(M,g)$ with typical fiber $(\cM,\cG)$ and
horizontal distribution $H$ having Ehresmann holonomy contained in
$\Iso(\cM,\cG)$ are in bijection with isomorphism classes of principal
$G$-bundles $\Pi$ defined over $M$ and endowed with a principal
connection $\theta$ whose holonomy is embedded in $\Iso(\cM,\cG)$. 
Moreover, $H$ is integrable if and only if $\theta$ is
flat.
\end{proposition}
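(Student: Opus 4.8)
The first assertion is essentially a restatement of the discussion preceding the Proposition. There the holonomy construction $\pi\colon(E,h)\mapsto(\Pi,\theta)$ and the associated-bundle construction $(\Pi,\theta)\mapsto(E_\Pi\eqdef\Pi\times_\rho\cM,\,H_\theta,\,h_\theta)$ --- where $h_\theta$ is the unique $\pi$-positive Lorentzian metric making $E_\Pi$ a Kaluza-Klein space over $(M,g)$ with $h_\theta|_{E_{m_0}}=\cG$, its existence and uniqueness coming from the fact (established above) that a Kaluza-Klein metric is determined by $g$, $\cG$ and its horizontal distribution --- were shown to be mutually quasi-inverse functors between the relevant groupoids. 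So the plan for the bijection is simply to pass to isomorphism classes, since equivalent groupoids have the same set of isomorphism classes of objects. I would make the two natural isomorphisms explicit: on one side, the isometric Ehresmann transport $T$ identifies every fiber $E_m$ with $(\cM,\cG)$ in a way compatible with $H$, giving a connection- and metric-preserving isomorphism $E\cong\Pi\times_\rho\cM$ (the metric being matched automatically by the uniqueness just quoted); on the other, the holonomy bundle of the Ehresmann connection induced on an associated bundle is canonically the principal bundle one started from. The one subtlety is that the holonomy group $G$ and its embedding into $\Iso(\cM,\cG)$ depend on the choice of reference point $m_0$, different choices being related by conjugation inside $\Iso(\cM,\cG)$; this is exactly the source of the ``up to conjugation'' clause and has no effect on isomorphism classes. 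One should also check that both functors respect morphisms --- isomorphisms of Kaluza-Klein spaces covering $\id_M$ versus connection-preserving isomorphisms of principal bundles --- but this is formal.

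For the last sentence, I would work under the identification $E=E_\Pi=\Pi\times_\rho\cM$ with $H=H_\theta$ and compare the curvature $\cF_H\in\Gamma(E,\wedge^2H^\ast\otimes V)$ of $H$ with the curvature $\Theta$ of $\theta$. Since horizontal lifts on $E_\Pi$ are induced from horizontal lifts on $\Pi$, a standard computation gives, at a point $[p,x]\in\Pi\times_\rho\cM$, that $\cF_H(\bar X,\bar Y)|_{[p,x]}$ is the value at $x$ of the fundamental vertical vector field on $\cM$ generated, via the infinitesimal action $\rho_\ast\colon\fg\to\cX(\cM)$, by the curvature element $\Theta_p(X,Y)\in\fg$ of $\theta$. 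Because $G$ is a subgroup of $\Iso(\cM,\cG)$ it acts effectively on $\cM$, so $\rho_\ast$ is injective: if $\rho_\ast\xi=0$ then the flow of the corresponding vector field is trivial, i.e.\ $\rho(\exp t\xi)=\id_\cM$ for all $t$, whence $\exp t\xi=e$ and $\xi=0$. Therefore $\cF_H$ vanishes identically if and only if $\Theta$ does. Combining this with the facts recalled above --- that $H$ is Frobenius integrable precisely when $\cF_H$ (equivalently O'Neill's tensor $A$) vanishes, and that $\theta$ is flat precisely when $\Theta=0$ --- yields that $H$ is integrable if and only if $\theta$ is flat.

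The bulk of the bijection is bookkeeping built on two inputs already available: the uniqueness of the Kaluza-Klein metric given $(g,\cG,H)$, and the classical equivalence between principal bundles with connection and the holonomy of their associated bundles; the only mild annoyance there is tracking the conjugation ambiguity attached to the choice of $m_0$. The step I expect to require the most care is the curvature comparison $\cF_H\leftrightarrow\Theta$ on the associated bundle together with the injectivity of $\rho_\ast$: it is precisely this injectivity, which is a consequence of effectiveness of the isometric fiber action, that upgrades the routine implication ``$\theta$ flat $\Rightarrow$ $H$ integrable'' into an equivalence, and one must verify that no extra hypotheses on $(\cM,\cG)$ enter.
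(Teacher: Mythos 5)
Your argument for the bijection follows the same route the paper takes: the Proposition is stated there as an immediate consequence (``Hence:'') of the preceding discussion establishing the holonomy-bundle and associated-bundle constructions as mutually quasi-inverse functors, and your proof is exactly the passage to isomorphism classes of that groupoid equivalence, with the conjugation ambiguity from the choice of $m_0$ correctly identified as harmless. Where you go beyond the paper is the final sentence: the paper asserts ``$H$ integrable iff $\theta$ flat'' without any argument, whereas you supply one via the identification of $\cF_H$ on $\Pi\times_\rho\cM$ with the fundamental vertical vector field generated by the curvature of $\theta$, upgraded to an equivalence by the injectivity of $\rho_\ast$ coming from effectiveness of the inclusion $G\subseteq\Iso(\cM,\cG)$. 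That argument is correct and is precisely what is needed: the implication ``$\theta$ flat $\Rightarrow$ $H$ integrable'' is formal, while the converse genuinely uses that the fiber action is effective (so that $\rho_\ast(\Theta_p(X,Y))=0$ forces $\Theta_p(X,Y)=0$), a point the paper leaves implicit. No extra hypotheses on $(\cM,\cG)$ enter, as you note.
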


\noindent
Let $(\cM,\cG)$ be a Riemannian manifold. Let $E^{0} \eqdef
M\times\cM$ and let $\pi^0:M\times \cM\rightarrow M$ and
$p^0:M\times\cM\rightarrow \cM$ denote the canonical
projections. Notice that $\pi^0$ is the trivial fiber bundle over $M$
with fiber $\cM$. Let $V^0=(T\cM)^{p^0}$ denote the vertical
distribution of $\pi^0$, which is endowed with the pull-back metric
$\cG^{p^0}$.

\begin{definition}
A {\em topologically trivial Kaluza-Klein space} over $(M,g)$ with
fiber $(\cM,\cG)$ is a Kaluza-Klein space of the form
$\pi^0:(E^0,h^{0})\rightarrow (M,g)$, whose vertical metric is the
pull-back metric $h_{V^0}\eqdef \cG^{p^0}$.
\end{definition}

\noindent 
The metric $h^{0}=h(g,\cG,H)$ of such a space is uniquely determined by
$g$, $\cG$ and by the horizontal distribution $H$, whose parallel
transport $T$ must preserve $\cG^{p^0}$. For any $\gamma\in \cP(M)$
and all $p\in \cM$, we define:
\ben
\label{hatT}
T_\gamma(\gamma(0),p)=(\gamma(1),{\hat T}_\gamma(p))~~, 
\een
where ${\hat T}_\gamma\in \Iso(\cM,\cG)$ satisfy ${\hat T}_{\gamma_1\gamma_2}={\hat T}_{\gamma_1}\circ {\hat T}_{\gamma_2}$.

\begin{definition}
The {\em product Kaluza-Klein space} over $(M,g)$ with fiber
$(\cM,\cG)$ is the topologically trivial Kaluza-Klein space
$\pi^0:(E^0,h^0_{\triv})\rightarrow (M,g)$ with fiber $(\cM,\cG)$
whose horizontal distribution is the trivial integrable Ehresmann
connection $H^0_{\triv} \eqdef (T M)^{\pi^0}$, with induced horizontal
metric $h^0_{\triv}$.
\end{definition}

\noindent 
For any $p\in \cM$, the horizontal lift through $(\gamma(0),p)\in
E^0_{\gamma(0)}$ of any path $\gamma\in \cP(M)$ defined by the
trivial Ehresmann connection $H^0$ is the path given by
$\bgamma_{p, \triv}(s)=(\gamma(s),p)$ for all $s\in [0,1]$. Hence the
Ehresmann transport $T^0_\gamma$ defined by $H^0$ is given by ${\hat
T}^0_{\gamma}=\id_{\cM}$.  This implies that the Kaluza-Klein metric
$h^0_{\triv}=h_{g, \cG, H^0_\triv}$ equals the product metric
$g\times \cG$.

As explained above, integrable Kaluza-Klein spaces correspond to flat
principal $G$-bundles $(\Pi,\theta)$ defined over $M$ together with an
embedding $G\subseteq \Iso(\cM,\cG)$. It is well-known that flat
principal $G$-bundles are classified up to isomorphism by their
holonomy morphism $\Hol_{P,\theta}:\pi_1(M)\rightarrow G$, considered
up to the conjugation action of $G$. Hence the set of isomorphism
classes of integrable Kaluza-Klein spaces over $(M,g)$ with
fiber $(\cM,\cG)$ and Ehresmann holonomy contained in 
$\Iso(\cM,\cG)$ is in bijection with the character variety:
\begin{equation*}
\cM_{\Iso(\cM,\cG)}(M)\eqdef \Hom(\pi_1(M),\Iso(\cM,\cG))/\Iso(\cM,\cG)\, .
\end{equation*}
When the space-time $M$ is simply-connected, any integrable
Kaluza-Klein space over $(M,g)$ is isomorphic with a product space
(see below). In that case, we necessarily have $G=1$ (the trivial
group) and $H$ is gauge-equivalent with the trivial Ehresmann
connection.

\subsection{Vertical scalar potentials and scalar bundles}

Let $\pi:(E,h)\rightarrow (M,g)$ be a Kaluza-Klein space and $T$ be
its Ehresmann transport.

\begin{definition}
A \emph{vertical scalar potential} for $\pi$ is a smooth $T$-invariant
real-valued function $\bPhi \in \cC^\infty(E,\R)$ defined on the total
space $E$ of $\pi$.
\end{definition}

\noindent 
The $T$-invariance of $\bPhi$ means that the restrictions
$\Phi_m\eqdef \bPhi|_{E_m}\in \cC^\infty(E_m,\R)$ to the fibers of $E$
satisfy:
\be
\Phi_{\gamma(1)}\circ T_\gamma=\Phi_{\gamma(0)}\, ,\quad \forall\,\, \gamma\in \cP(M)~~, 
\ee
a condition which is equivalent with the requirement that $\bPhi$ is
annihilated by any horizontal vector field defined on $E$:
\ben
\label{Vvert}
X(\bPhi)=0\, , \quad \forall\, X\in \Gamma(E,H)\, .
\een
This implies that all fiber restrictions $\Phi_m$ ($m\in M$) can be
recovered from $\Phi_{m_0}$, where $m_0$ is any fixed point of $M$. In
particular, the isomorphism type of the scalar structure\footnote{As
defined in reference \cite{GESM}.} $(E_m,h_m,\Phi_m)$ is independent
of $m$. The holonomy group $G_{m}\subset \Iso(E_{m},h_{m})$ of $\pi$
at any point $m\in M$ preserves $\Phi_{m}$:
\be
G_m \subset \Iso(E_m, h_m, \Phi_m)\eqdef \{\varphi\in \Iso(E_m,h_m)|\Phi_m\circ \varphi=\Phi_m\}~~.
\ee
Relation \eqref{Vvert} amounts to $\dd \bPhi (X)=0$ for all
$X\in\Gamma(X,H)$, i.e. $\dd \bPhi \circ P_H=0$. Since
$P_V+P_H=\id_{TE}$, this gives $\dd \bPhi=\dd \bPhi \circ P_V$, which
shows that $\dd \bPhi$ can be viewed as an element of
$\Gamma(E,V^\ast)$. Since $H$ and $V$ are $h$-orthogonal, this implies
that the gradient of $\bPhi$ is a vertical vector field:
\be
\grad_h \bPhi\in \Gamma(E,V)~~.
\ee

\begin{definition}
A {\em scalar bundle} over $(M,g)$ is a pair $(\pi:(E,h)\rightarrow
(M,g),\bPhi)$, where $\pi$ is a Kaluza-Klein space and $\bPhi$ is a
vertical scalar potential for $\pi$. The scalar bundle is called {\em
integrable} if the Kaluza-Klein space $\pi$ is integrable.
\end{definition}

\noindent 
The isomorphism type of the scalar structures $(E_m,h_m,\Phi_m)$
(which, as explained above, does not depend on the point $m\in M$) is
called the {\em type} of the scalar bundle and will be generally
denoted by $(\cM,\cG,\Phi)$.  The classification of integrable
Kaluza-Klein spaces immediately implies the following.

\begin{proposition}
Integrable scalar bundles defined over $(M,g)$ and having type
$(\cM,\cG,\Phi)$ are classified up to isomorphism by the points of the
character variety:
\be
\cM_{\Iso(\cM,\cG,\Phi)}(M)\eqdef \Hom(\pi_1(M),\Iso(\cM,\cG,\Phi))/\Iso(\cM,\cG,\Phi)~~.
\ee
\end{proposition}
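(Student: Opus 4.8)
The plan is to deduce the statement from the classification of integrable Kaluza--Klein spaces established in the Proposition above, keeping track of the extra datum carried by the vertical scalar potential $\bPhi$. The key point is that, by $T$-invariance, $\bPhi$ is equivalent to a single smooth function on the model fiber that is invariant under the Ehresmann holonomy group, and that fixing the type of the scalar bundle pins this function down to $\Phi$ up to the ambiguity inherent in an identification of the reference fiber with $(\cM,\cG,\Phi)$.

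First I would fix $m_0\in M$. Given an integrable scalar bundle $(\pi\colon(E,h)\to(M,g),\bPhi)$ of type $(\cM,\cG,\Phi)$, choose an isometry $u\colon(E_{m_0},h_{m_0})\stackrel{\sim}{\to}(\cM,\cG)$ with $\Phi_{m_0}\circ u^{-1}=\Phi$; such a $u$ exists by the very definition of the type, and any two choices differ by an element of $\Iso(\cM,\cG,\Phi)$. Conjugating by $u$ identifies the Ehresmann holonomy group $G_{m_0}\subseteq\Iso(E_{m_0},h_{m_0})$ with a subgroup of $\Iso(\cM,\cG)$, which by the relation $G_m\subset\Iso(E_m,h_m,\Phi_m)$ recalled above actually lies inside $\Iso(\cM,\cG,\Phi)$. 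Since $\bPhi$ is recovered from $\Phi_{m_0}$ (equivalently from $\Phi$) by Ehresmann transport, the earlier Proposition shows that the scalar bundle is encoded, via the holonomy of the underlying flat principal bundle with connection, by a morphism $\pi_1(M)\to\Iso(\cM,\cG)$ that factors through $\Iso(\cM,\cG,\Phi)$, i.e. by a point of $\Hom(\pi_1(M),\Iso(\cM,\cG,\Phi))$. Conversely, starting from $\rho\colon\pi_1(M)\to\Iso(\cM,\cG,\Phi)$, I would form the associated flat principal bundle and the integrable Kaluza--Klein space $E_\Pi=\Pi\times_\rho\cM$ exactly as before, and then note that the $G$-invariant function $\Phi$ on $\cM$ descends to a well-defined $T$-invariant function $\bPhi$ on $E_\Pi$, producing an integrable scalar bundle of type $(\cM,\cG,\Phi)$.

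It then remains to check that these two assignments are mutually inverse on isomorphism classes and that the residual freedom is exactly the conjugation action of $\Iso(\cM,\cG,\Phi)$. Here the point is that an isomorphism of integrable scalar bundles is, by definition, an isomorphism of the underlying Kaluza--Klein spaces intertwining the two potentials; the induced fiber isometry at $m_0$ must therefore carry $\Phi$ to $\Phi$ and so lies in $\Iso(\cM,\cG,\Phi)$, and likewise changing the auxiliary identification $u$ acts on the holonomy morphism by conjugation within $\Iso(\cM,\cG,\Phi)$. Combined with the fact, already used in the integrable Kaluza--Klein classification, that a flat connection is determined up to isomorphism by the conjugacy class of its holonomy morphism, this yields the desired bijection with $\cM_{\Iso(\cM,\cG,\Phi)}(M)$.

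The hard part will be organizational rather than conceptual: one must make explicit that restricting the structure group from $\Iso(\cM,\cG)$ to the stabilizer $\Iso(\cM,\cG,\Phi)$ is compatible with the functorial equivalence of the earlier Proposition, so that two integrable scalar bundles which happen to be isomorphic as Kaluza--Klein spaces after forgetting $\bPhi$ are isomorphic as scalar bundles if and only if the identifying isomorphism can be chosen to respect $\bPhi$ --- which is precisely what confines all conjugations to $\Iso(\cM,\cG,\Phi)$. Once this is spelled out, the statement follows from the integrable Kaluza--Klein classification together with the elementary bijection between $T$-invariant potentials on such a space and $G$-invariant functions on the model fiber $\cM$.
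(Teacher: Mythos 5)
Your proposal is correct and follows essentially the same route as the paper, which simply asserts that the statement is an immediate consequence of the classification of integrable Kaluza--Klein spaces by the character variety of $\Iso(\cM,\cG)$, with the holonomy constrained to the stabilizer $\Iso(\cM,\cG,\Phi)$ once the $T$-invariant potential is taken into account. You have merely spelled out the details (descent of the $G$-invariant function $\Phi$ to the associated bundle, and the confinement of the conjugation ambiguity to $\Iso(\cM,\cG,\Phi)$ because scalar-bundle isomorphisms must intertwine the potentials) that the paper leaves implicit.
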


\subsection{Special trivializing atlases for a Kaluza-Klein space and for a scalar bundle}
\label{spectriv}
Recall that an open subset $U$ of a pseudo-Riemannian
manifold is called {\em geodesically convex} \cite{ONeillBook} if it
is a normal neighborhood for each of its points. If $U$ is convex,
then for any two points $p,q\in U$ there exists a unique geodesic
segment which is contained in $U$ and which connects $p$ and $q$.  Any
point of a pseudo-Riemannian manifold has a basis of geodesically
convex neighborhoods (see \cite[p. 129]{ONeillBook}). A {\em convex
  cover} of a pseudo-Riemannian manifold is an cover by open and
geodesically convex sets which has the property that any nontrivial
intersection of two of its elements is geodesically convex. Given
any open cover $\mathfrak{V}$ of a pseudo-Riemannian manifold, there
exists a convex cover $\mathfrak{U}$ such that any element of
$\mathfrak{U}$ is contained in some element of $\mathfrak{V}$
\cite[Lemma 5.10]{ONeillBook}. In particular, any pseudo-Riemannian 
manifold admits convex covers. 

Let $\pi:(E,h)\rightarrow (M,g)$ be a Kaluza-Klein space. Let $m_0 \in
M$ be a fixed point and set $\cM\eqdef E_{m_0}$ and $\cG\eqdef
h_{m_0}$. Let $\mathfrak{U}=\left( U_\alpha\right)_{\alpha \in I}$ be
a convex cover for $(M,g)$, where the indexing set is chosen such that
$0\not \in I$. Fix points $m_{\alpha}\in U_\alpha$ and paths
$\lambda^{\alpha}\in \cP(M)$ such that
$\lambda^{\alpha}(0)=m_{\alpha}$ and $\lambda^{\alpha}(1)=m_0$. For
any $m\in U_\alpha$, let $\gamma_m^{\alpha}:[0,1]\rightarrow U_\alpha$
be the unique smooth geodesic contained in $U_\alpha$ such that
$\gamma_m^{\alpha}(0)=m$ and $\gamma_m^{\alpha}(1)=m_\alpha$. For any
$\alpha\in I$, let $g_\alpha \eqdef g|_{U_\alpha}$, $E_{\alpha} \eqdef
E|_{U_\alpha}$, $\pi_\alpha \eqdef \pi|_{U_\alpha}$. Let
$E^0_{\alpha}\eqdef U_\alpha\times \cM$ and $\pi^0_{\alpha}\colon
E_{\alpha}^{0}\rightarrow U_\alpha$, $p^0_{\alpha}\colon
E^0_{\alpha}\rightarrow \cM$ be the projections on the first and
second factors. Then $\pi^0_{\alpha}\colon E^0_{\alpha}\eqdef
U_\alpha\times \cM\rightarrow U_\alpha$ is the trivial fiber bundle
over $U_\alpha$ with fiber $\cM$. Define diffeomorphisms
$q_\alpha:E_{\alpha}\rightarrow U_\alpha\times \cM$ through:
\ben
\label{qalpha}
q_\alpha(e)=(\pi(e), {\hat q}_\alpha(e))\, , \quad  \forall\, e\in E_\alpha\, ,
\een
where ${\hat q}_\alpha\colon E_\alpha\rightarrow \cM$ is given by the
following differentiable surjective map:
\ben
\label{hatqalpha}
\hat{q}_\alpha(e)\eqdef p^0_{\alpha}\circ (T_{\lambda^{\alpha}}\circ T_{\gamma^{\alpha}_{\pi(e)}})(e)=p^0_{\alpha}\circ T_{\lambda^{\alpha}\circ \gamma^{\alpha}_{\pi(e)}}(e)\in  \cM\, , \qquad \forall\, e\in E_\alpha\, .
\een
Here we have used the identification $\cM\eqdef E_{m_{0}}$ and the fact that:
\begin{equation*}
(T_{\lambda^{\alpha}}\circ T_{\gamma^{\alpha}_{\pi(e)}})(e) \in \left\{m_{0}\right\}\times E_{m_{0}}\, , \qquad \forall\, e\in E_{\alpha}\, .
\end{equation*}
The restriction ${\hat q}_{\alpha}(m)\eqdef {\hat
  q}_\alpha|_{E_m}:E_m\rightarrow \cM$ to the fiber at $m\in U_\alpha$
is an isometry from $(E_m,h_m)$ to $(\cM,\cG)$ which is given
explicitly by:
\begin{equation*}
{\hat q}_{\alpha}(m)(e_{m})=p^{0}_{\alpha}\circ T_{\lambda^{\alpha}\circ \gamma^{\alpha}_{m}}(m\times e_{m})\, , \quad \forall\, e_{m}\in E_m\, . 
\end{equation*}
The maps $q_\alpha$ are diffeomorphisms from $E_\alpha$ to
$E^0_{\alpha}=U_\alpha\times\cM$ which fit into the following
commutative diagram:
\ben
\label{ltriv}
\scalebox{1.2}{
	\xymatrix{
		E_\alpha ~~\ar[d]_{\pi_\alpha} \ar[r]^{\!\!\!\!\!\!q_\alpha} & ~~E^0_{\alpha} \ar[d]^{\pi^0_{\alpha}}\\
		U_\alpha  \ar[r]^{\id_{U_\alpha}} & U_\alpha \\
	}}
\een
Hence $(U_\alpha, q_\alpha)$ is a trivializing atlas for the fiber
bundle $\pi$, called the {\em special trivializing atlas} determined
by the convex cover $(U_\alpha)_{\alpha\in I}$, by the reference
point $m_0$ and by the choices of points $m_{\alpha}\in U_\alpha$ and
of paths $\lambda^{\alpha}$ from $m_{\alpha}$ to $m_0$.
	
Let $h_{\alpha}\eqdef h|_{E_{\alpha}}$. Since the Ehresmann transport
$T$ is isometric, the vertical metric $h_\alpha|_{V}$ agrees with
$\cG^{p^0_{\alpha}}$ through the diffeomorphism $q_\alpha\colon
E_{\alpha}\rightarrow U_\alpha\times \cM$. Hence $h_\alpha$
corresponds through $q_\alpha$ to a Kaluza-Klein metric
$h^{0}_{\alpha}$ on the trivial bundle
$\pi^0_{\alpha}:E_{\alpha}\rightarrow U_\alpha$. The latter is the
Kaluza-Klein metric determined by $g|_{U_\alpha}$, $\cG$ and the
distribution $H^{0}_{\alpha} \eqdef (\dd q_\alpha)(H_\alpha)$, where
$H_\alpha\eqdef H|_{E_\alpha}$. The diffeomorphism $q_\alpha$ is an
isometry from $(E_\alpha,h_\alpha)$ to $(E^0_{\alpha},
h^{0}_{\alpha})$ which makes diagram \eqref{ltriv} into an isomorphism
of Kaluza-Klein spaces from the Kaluza-Klein space
$\pi_\alpha:(E_\alpha,h_\alpha)\rightarrow (U_\alpha,g_\alpha)$ to the
Kaluza-Klein space $\pi^0_{\alpha}\colon
(E^0_{\alpha},h^{0}_{\alpha})\rightarrow (U_\alpha,g_\alpha)$. Notice
that the second of these need not be a product Kaluza-Klein space,
since $h^{0}_{\alpha}$ may differ from the product metric
$g_\alpha\times \cG$.
	
Since $(M,g)$ admits convex covers, any Kaluza-Klein space admits
special trivializing atlases. In particular, any Kaluza-Klein space is
locally isomorphic with a topologically trivial Kaluza-Klein space
(which need not be a product Kaluza-Klein space!).
	
For any $\alpha,\beta\in I$ such that $U_{\alpha\beta}\eqdef
U_\alpha\cap U_\beta\neq \emptyset$, we have:
\be
(q_\beta\circ q_\alpha^{-1})(m,p)=(m, \bg_{\alpha\beta}(m)(p))\, , \quad \forall\, m\in U_{\alpha\beta}\, , \quad \forall\, p\in \cM\, ,
\ee
where the transition functions $\bg_{\alpha\beta}:U_{\alpha\beta}\rightarrow \Iso(\cM,\cG)$ are given by:
\ben
\label{bgdef}
\bg_{\alpha\beta}(m)\eqdef {\hat q}_\beta(m)\circ {\hat q}^{-1}_\alpha(m)\in \Iso(\cM,\cG)\, , \quad \forall\, m\in U_{\alpha\beta}\, .
\een
Using \eqref{hatqalpha}, this gives:
\be
\bg_{\alpha\beta}(m)=T_{\lambda^{\beta}}\circ T_{\gamma^{\beta}_{m}}\circ T_{\gamma^{\alpha}_{m}}^{-1}\circ T_{\lambda^{\alpha}}^{-1}=
T_{c^{\alpha\beta}_m}\in \Iso(\cM,\cG)\, , \quad \forall\, m\in U_{\alpha\beta}\, .
\ee
Here: 
\ben
\label{calphabeta}
c^{\alpha\beta}_m\eqdef
\lambda^{\beta}\circ\gamma^{\beta}_{m}\circ (\gamma^{\alpha}_{m})^{-1}\circ (\lambda^{\alpha})^{-1}\, ,
\een 
is the closed path starting and ending at $m_0$ and passing through the
point $m\in U_{\alpha\beta}$ which is shown in
Figure \ref{fig:specialtriv}.
\begin{figure}[h]
\centering
\scalebox{0.4}{\input{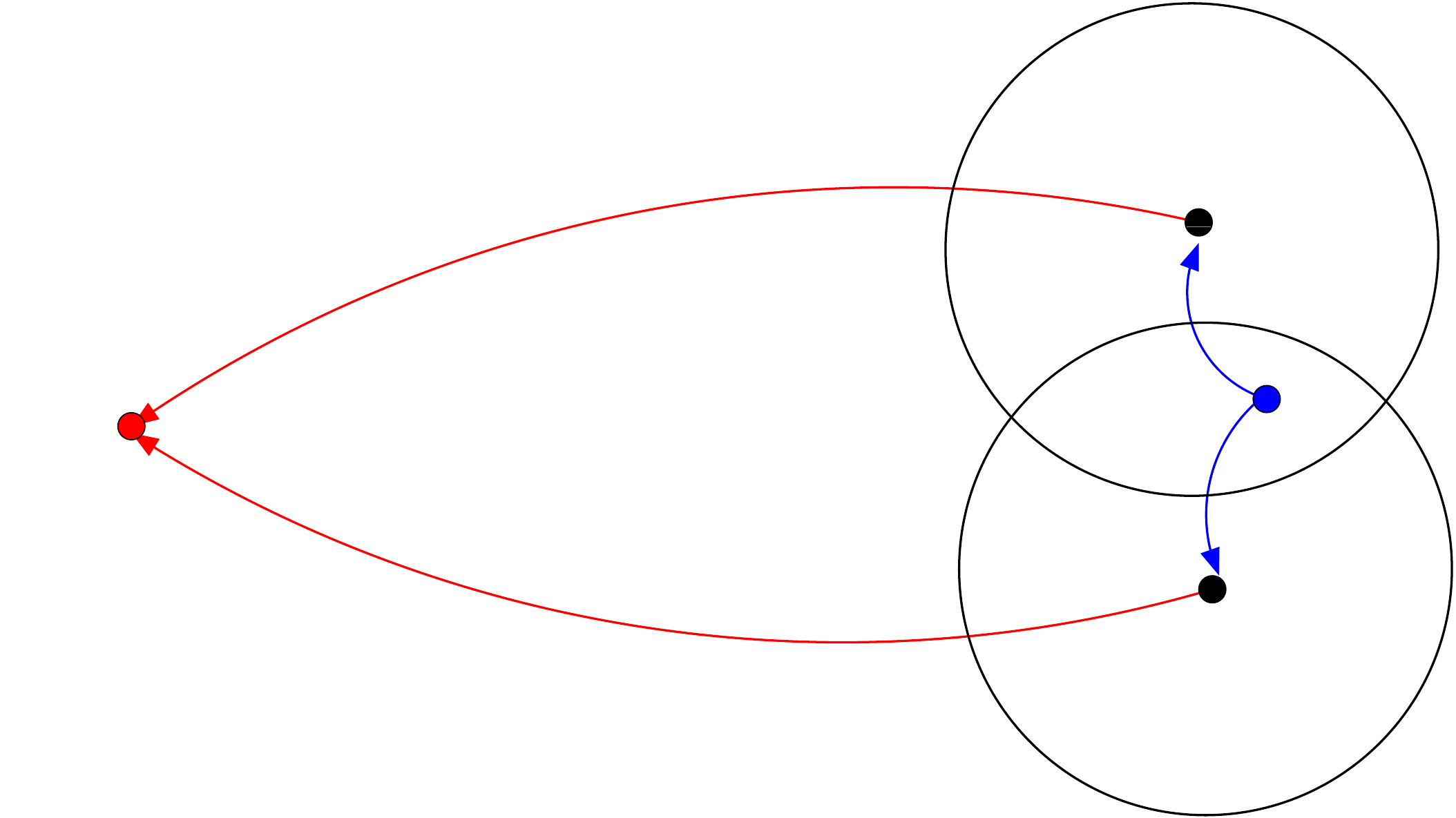_t}}
\caption{The transition functions of a special trivializing atlas are
  determined by closed paths based at the reference point $m_0\in M$
  and passing through $m\in U_{\alpha\beta}$.}
\label{fig:specialtriv}
\end{figure}

Let $\bPhi$ be a vertical potential for $\pi$. In this case, a
special trivializing atlas for the Kaluza-Klein space $\pi$ is also
called a special trivializing atlas for the scalar bundle
$(\pi,\bPhi)$. Let $\bPhi_{\alpha} \eqdef \bPhi|_{E_\alpha}$ and set
$\Phi\eqdef \Phi_{m_0}$. Since $\bPhi$ is $T$-invariant, the
definition \eqref{hatqalpha} of ${\hat q}_\alpha$ implies:
\ben
\label{bPhialpha}
\bPhi_\alpha=\Phi\circ {\hat q}_\alpha\, .
\een
This gives $\Phi\circ \bg_{\alpha\beta}=\Phi$,
i.e. $\bg_{\alpha\beta}\in \Iso(\cM,\cG,\Phi)$.

When the Kaluza-Klein space $\pi$ is integrable, the Ehresmann
transport depends only on the homotopy class of curves in $M$.  In
this case, $T_{c^{\alpha\beta}_m}$ is independent of the point $m\in
U_\alpha\cap U_\beta$ since $U_\alpha\cap U_\beta$ is path connected
and hence the homotopy class of $c^{\alpha\beta}_m$ does not depend on
$m$. Moreover, integrability of $H$ implies that $H_{\alpha}^{0}$
coincide with the trivial Ehresmann connections $H_{\alpha}^{\triv}$
of $\pi^0_{\alpha}$.  In this case, $\pi^0_{\alpha}\colon
(E^0_{\alpha}, h_\alpha)\rightarrow (U_\alpha,g_\alpha)$ is a product
Kaluza-Klein space and $q_\alpha$ is an isometry from $(E_{\alpha},
h_\alpha)$ to $(U_\alpha\times\cM,g_\alpha\times \cG)$. Thus:
	
\begin{proposition}
Let $\pi:(E,h)\rightarrow (M,g)$ be an integrable Kaluza-Klein
space. Then the local trivializing maps $q_\alpha$ of a special
trivializing atlas consist on isometries from $(E_{\alpha}, h_\alpha)$
to $(U_\alpha\times \cM,g_\alpha\times\cG)$ and the transition
functions $\bg_{\alpha\beta}$ defined by such an atlas are constant on
$U_\alpha\cap U_\beta$. In particular, $\pi$ is locally isomorphic
with a product Kaluza-Klein space.
\end{proposition}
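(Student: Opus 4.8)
\emph{Proof proposal.} The statement collects three assertions, and the plan is to derive all of them from two ingredients already in hand: (i) for an integrable Kaluza--Klein space the Ehresmann transport $T_\gamma$ depends only on the homotopy class of $\gamma$ relative to its endpoints, since integrability of $H$ is equivalent to flatness of $\theta$; and (ii) every member $U_\alpha$ of the convex cover $\mathfrak{U}$, and every nonempty intersection $U_{\alpha\beta}$, is geodesically convex, hence diffeomorphic via $\exp$ to a star-shaped domain and therefore contractible --- in particular simply connected and path-connected. First I would establish local constancy of the transition functions, then show that each $q_\alpha$ trivializes $H$ to the trivial Ehresmann connection, and finally read off the local isomorphism with a product space.

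For the transition functions, I would start from the identity $\bg_{\alpha\beta}(m)=T_{c^{\alpha\beta}_m}$ obtained from \eqref{bgdef} and \eqref{hatqalpha}, where $c^{\alpha\beta}_m$ is the loop based at $m_0$ passing through $m$ defined in \eqref{calphabeta}. Since $m\mapsto\gamma^\alpha_m$ and $m\mapsto\gamma^\beta_m$ are smooth families of geodesics contained in $U_\alpha$ and $U_\beta$, the map $m\mapsto c^{\alpha\beta}_m$ is a continuous family of loops based at $m_0$ parametrized by the connected set $U_{\alpha\beta}$; restricting this family along a path in $U_{\alpha\beta}$ joining two points $m,m'$ produces a homotopy rel $m_0$ from $c^{\alpha\beta}_m$ to $c^{\alpha\beta}_{m'}$, so $[c^{\alpha\beta}_m]\in\pi_1(M,m_0)$ does not depend on $m$. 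By (i) this forces $\bg_{\alpha\beta}(m)\in\Iso(\cM,\cG)$ to be constant on $U_{\alpha\beta}$ (and to lie in $\Iso(\cM,\cG,\Phi)$ when a vertical potential is present, by \eqref{bPhialpha}).

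For the maps $q_\alpha$, I would use that $q_\alpha$ is already known to be an isometry from $(E_\alpha,h_\alpha)$ onto $(E^0_\alpha,h^0_\alpha)$, where $h^0_\alpha$ is the Kaluza--Klein metric on $\pi^0_\alpha$ determined by $g_\alpha$, $\cG$ and the pushed-forward distribution $H^0_\alpha\eqdef(\dd q_\alpha)(H_\alpha)$, together with the fact that the product Kaluza--Klein metric $h(g_\alpha,\cG,H^\triv_\alpha)$ equals $g_\alpha\times\cG$. Thus it is enough to prove $H^0_\alpha=H^\triv_\alpha\eqdef(TU_\alpha)^{\pi^0_\alpha}$, i.e. that $\hat q_\alpha$ is constant along $H$-horizontal curves. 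Given a curve $\gamma\colon[0,1]\to U_\alpha$ with $H$-horizontal lift $\tilde\gamma$ through $e_0\in E_{\gamma(0)}$, so that $\tilde\gamma(t)=T_{\gamma|_{[0,t]}}(e_0)$ for the reparametrized restriction, functoriality of $T$ and \eqref{hatqalpha} give
\[
\hat q_\alpha(\tilde\gamma(t))=p^0_\alpha\!\left(T_{\lambda^\alpha\circ\gamma^\alpha_{\gamma(t)}\circ\gamma|_{[0,t]}}(e_0)\right),
\]
and since $\gamma^\alpha_{\gamma(t)}\circ\gamma|_{[0,t]}$ is a path inside the contractible set $U_\alpha$ from $\gamma(0)$ to $m_\alpha$, it is homotopic rel endpoints to $\gamma^\alpha_{\gamma(0)}$ for every $t$; by (i) the transport on the right is then independent of $t$, so $\hat q_\alpha(\tilde\gamma(t))$ is constant. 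This yields $H^0_\alpha=H^\triv_\alpha$, hence $h^0_\alpha=g_\alpha\times\cG$, so $q_\alpha$ is an isometry over $\id_{U_\alpha}$ from $(E_\alpha,h_\alpha)$ onto $(U_\alpha\times\cM,g_\alpha\times\cG)$ --- an isomorphism of Kaluza--Klein spaces onto the product space over $(U_\alpha,g_\alpha)$ --- and $\pi$ is locally isomorphic with a product Kaluza--Klein space.

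The hard part will be this last computation: showing that the particular trivialization $\hat q_\alpha$, assembled from transport along $\lambda^\alpha\circ\gamma^\alpha_m$, kills $H$ \emph{exactly} rather than merely up to a gauge transformation. This is precisely where simple-connectedness of $U_\alpha$ must be combined with homotopy-invariance of flat transport to collapse the $t$-dependence of the relevant path's homotopy class, and care is needed with the smooth dependence of $\gamma^\alpha_m$ on $m$, the reparametrization of $\gamma|_{[0,t]}$, and the continuity of the homotopies involved.
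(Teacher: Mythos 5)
Your proposal is correct and follows essentially the same route as the paper: constancy of $\bg_{\alpha\beta}=T_{c^{\alpha\beta}_m}$ from homotopy invariance of the transport of a flat (integrable) connection together with path-connectedness of $U_{\alpha\beta}$, and the identification $H^0_\alpha=H^{\triv}_\alpha$ forcing $h^0_\alpha=g_\alpha\times\cG$. The only difference is that you spell out the step the paper merely asserts (that $\hat q_\alpha$ is constant along $H$-horizontal curves, via the homotopy rel endpoints of $\gamma^\alpha_{\gamma(t)}\circ\gamma|_{[0,t]}$ and $\gamma^\alpha_{\gamma(0)}$ inside the simply connected set $U_\alpha$), and that computation is carried out correctly.
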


\noindent
In this case, relation \eqref{bgdef} gives: 
\ben
\label{bgflat}
{\hat q}_\beta|_{E_{\alpha\beta}}=\bg_{\alpha\beta}\circ {\hat q}_\alpha|_{E_{\alpha\beta}}~~\forall \alpha,\beta\in I~~,
\een
where $E_{\alpha\beta}\eqdef E|_{\pi^{-1}(U_{\alpha\beta})}$ and we
define $\bg_{\alpha\beta}\eqdef \id_\cM$ when
$U_{\alpha\beta}=\emptyset$.
	
\begin{remark}
Suppose that the integrable Kaluza-Klein space $\pi$ is endowed with a
vertical potential $\bPhi$ whose restriction to $\cM$ we denote by
$\Phi$. Then the constant transition functions in a special
trivializing atlas satisfy $\bg_{\alpha\beta}\in \Iso(\cM,\cG,\Phi)$.
\end{remark}

\section{Section sigma models}
\label{sec:scalarsection}

Given a Kaluza-Klein space $\pi:(E,h)\rightarrow (M,g)$, let
$P_V:TE\rightarrow V$ and $P_H:TE\rightarrow H$ denote the
corresponding $h$-orthogonal projectors and $T$ denote the Ehresmann
transport of $H\eqdef H(h)$. Let $\nabla^v\eqdef P_V\circ \nabla$
denote the connection induced on $V$ by the Levi-Civita connection
$\nabla$ of $(E,h)$. Let $\bPhi\in \cC^\infty(E,\R)$ be a vertical
scalar potential for $\pi$, so that $(\pi,\bPhi)$ is a bundle of
scalar structures.

\subsection{The scalar section sigma model defined by $\pi$ and $\bPhi$}

\begin{definition}
The {\em vertical Lagrange density} of $\pi$ is the map
$e^v_\Phi:\Gamma(\pi)\rightarrow \cC^\infty(M,\R)$ defined, for every
$s\in \Gamma(\pi)$, as follows:
\be
e^v_\bPhi(g,h,s)\eqdef \frac{1}{2}\Tr_{g} s^\ast_v(h_V)+\bPhi^s~~,
\ee 
where $s^\ast_v(h_V)$ is the vertical first fundamental form of $s$
(see Appendix \ref{app:submersions}) and $\bPhi^s= \bPhi\circ s\in
\cC^\infty(M,\R)$. Let:
\be
e^v(g,h,s)\eqdef e^v_0(g,h,s)=\frac{1}{2}\Tr_{g} s^\ast(h_V)~~.
\ee
\end{definition}

\begin{definition}
The {\em section sigma model} defined by $\pi$ and $\bPhi$ has action
functional $S_{\sc}\colon\Met_{3+n,1}(E)\times \Gamma(\pi)\rightarrow
\R$ given by:
\ben
\label{Ssc}
S_{\sc,U}[g, h, s]=-\int_U\nu_M(g)e^v_\bPhi(g,h,s)
\een
for any relatively-compact subset $U\subset M$. 
\end{definition}

\noindent
Let $s\in\Gamma(\pi)$ be a section. The differential $\dd s\colon
TM\to TE$ of $s$ is an unbased morphism of vector bundles equivalent
to a section $\dd s\in \Omega^{1}(M, TE^{s})$ which for simplicity we
denote by the same symbol. We define the {\em vertical differential}
$\dd^{v}s\eqdef P^{s}_{V}\circ \dd s\in\Omega^{1}(M, V^{s})$, where
$P^{s}_{V}$ denotes the vertical projection of the pull-back bundle
(see Appendix \ref{app:submersions}). The Levi-Civita connection on
$(M,g)$ and the $s$-pull-back of the connection $\nabla^{v}$ on $V$
induce a connection on $T^{\ast}M\otimes V^{s}$, which for simplicity
we denote again by $\nabla^{v}$.

\begin{definition}
The {\em vertical tension field} of $s\in \Gamma(\pi)$ is defined
through:
\be
\label{tensionvert}
\tau^v(g, h, s)\eqdef \Tr_{g}\nabla^v \dd^v s \in \Gamma(M,V^s)~~.
\ee
\end{definition}

\begin{remark}
The ordinary Lagrange density and tension field of $s\in\Gamma(\pi)$
are defined similarly to their vertical counterparts except that there
is no vertical projection involved. More precisely, we define:
\be
e_\bPhi(g,h,s)\eqdef \frac{1}{2}\Tr_{g} s^\ast (h)+\bPhi^s\, ,
\ee
as well as:
\be
\tau(g, h, s)\eqdef \Tr_{g}\nabla \dd s \in \Gamma(M,TE^s)\, ,
\ee
where $\nabla$ denotes the connection on $T^{\ast}M\otimes TE^{s}$
induced by the Levi-Civita connection on $(M,g)$ and the $s$-pullback
of the Levi-Civita connection on $(E,h)$. For simplicity we will
sometimes drop the explicit dependence on $g$ and $h$ in
$e_{\bPhi}(g,h,s)$, $\tau(g, h, s)$ etc.
\end{remark}

\noindent
The following is an easy adaptation of a result due to \cite{Wood1}.

\begin{proposition}
The vertical density $e^v(s)$ differs from $e(s)$ by a constant and we
have $\tau^v(s)=P_V\circ\tau(s)$. Moreover, the critical points
of \eqref{Ssc} with respect to $s\in \Gamma(\pi)$ are solutions of the
equation:
\ben
\label{sceom}
\tau^v(s)=-(\grad_h\bPhi)^s\, .
\een
\end{proposition}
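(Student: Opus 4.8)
The plan is to prove the three assertions of the Proposition in turn, reducing each to a pointwise computation in a suitable splitting of the relevant bundles. First I would establish that $e^v(s)$ differs from $e(s)$ by a constant. Writing $s^\ast(h) = s^\ast(h_V) \oplus s^\ast(h_H)$ under the $h$-orthogonal decomposition $TE = V \oplus H$, the horizontal part is controlled by the fact that $(\dd\pi)_H$ is an isometry from $(H,h_H)$ to $((TM)^\pi, g^\pi)$ and $\pi \circ s = \id_M$: pulling back along $s$, the composition $\dd\pi^s \circ P_H^s \circ \dd s$ equals $\dd(\pi\circ s) = \id_{TM}$, so $P_H^s \circ \dd s$ is the horizontal lift of $\id_{TM}$ and hence $s^\ast(h_H) = g$ identically. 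Therefore $\tfrac12 \Tr_g s^\ast(h) = \tfrac12 \Tr_g s^\ast(h_V) + \tfrac12 \Tr_g g = e^v(s) + 2$, so the two densities differ by the constant $2 = \tfrac12\dim M$, proving the first claim.

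Next I would prove $\tau^v(s) = P_V \circ \tau(s)$. The tension field $\tau(s) = \Tr_g \nabla \dd s$ is built from the Levi-Civita connection on $T^\ast M \otimes TE^s$. The key input is that the fibers of $\pi$ are totally geodesic: this means O'Neill's tensor $T$ (the one governing the second fundamental form of the fibers) vanishes, so the connection $\nabla^v = P_V \circ \nabla$ on $V$ is metric and, more importantly, $P_V$ is parallel along vertical directions in the appropriate sense. Concretely I would compute $P_V^s \circ \nabla \dd s$ and compare it with $\nabla^v \dd^v s = \nabla^v(P_V^s \circ \dd s)$, showing the difference terms involve only $\nabla P_V$ contracted against $\dd s$, which decompose via O'Neill's tensors $A$ (on horizontal arguments) and $T$ (on vertical arguments); the $T$-terms vanish by total geodesy and the $A$-terms, when contracted with $\Tr_g$ against $s^\ast$, cancel using the already-established fact that $P_H^s \circ \dd s$ is the horizontal lift of $\id_{TM}$ together with the antisymmetry $A_XY = -A_YX$ of O'Neill's tensor on horizontal vectors. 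This is the step I expect to be the main obstacle, since it requires carefully tracking how $\Tr_g \nabla \dd s$ splits under the pullback decomposition $TE^s = V^s \oplus H^s$ and invoking the precise structure equations for pseudo-Riemannian submersions from Appendix \ref{app:submersions}; the essential point, however, is exactly Wood's computation for pseudoharmonic sections, adapted to the Lorentzian base.

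Finally, for the Euler--Lagrange equations I would compute the first variation of $S_{\sc,U}[g,h,s]$ with respect to $s$, holding $g$ and $h$ fixed. Since by the first claim $e^v_\bPhi(s) = e_\bPhi(s) - 2 = \tfrac12 \Tr_g s^\ast(h) + \bPhi^s - 2$ and the additive constant does not affect variations, it suffices to vary $-\int_U \nu_M(g)\big(\tfrac12 \Tr_g s^\ast(h) + \bPhi^s\big)$. The variation of the Dirichlet-type term $\tfrac12\Tr_g s^\ast(h)$ is the standard one, producing $\langle \tau(s), \delta s\rangle$ after integration by parts (the boundary term vanishes since $U$ is relatively compact and $\delta s$ is compactly supported), while the variation of $\bPhi^s = \bPhi\circ s$ produces $\langle (\grad_h\bPhi)^s, \delta s\rangle$. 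Setting the total variation to zero for all $\delta s$ gives $\tau(s) + (\grad_h\bPhi)^s = 0$. Now I project: the variation $\delta s$ is valued in $V^s$ (variations of a section of $\pi$ are vertical, since $\pi\circ s$ is fixed), so only the $V^s$-component of this equation is constrained. Applying $P_V$ and using $\tau^v(s) = P_V\circ\tau(s)$ from the second claim, together with the fact that $\grad_h\bPhi$ is already vertical (established in the discussion preceding the Proposition, $\grad_h\bPhi \in \Gamma(E,V)$), yields $\tau^v(s) = -(\grad_h\bPhi)^s$, which is \eqref{sceom}.
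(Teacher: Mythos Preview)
Your proposal is correct and essentially reconstructs the argument the paper defers to Wood without spelling out (the paper's own ``proof'' is the single sentence ``The following is an easy adaptation of a result due to \cite{Wood1}''). Your decomposition for the first claim is exactly right, and your outline for $\tau^v(s)=P_V^s\circ\tau(s)$ via O'Neill's tensors---$T\equiv 0$ by total geodesy of the fibers, then the surviving $A$-contributions vanish under the $g$-trace by the antisymmetry $A_XY=-A_YX$ on horizontal vectors (stated explicitly in the paper just before the definition of Kaluza--Klein spaces)---is the standard computation. One minor caution in your Euler--Lagrange step: the first variation of the Dirichlet energy $\int\tfrac12|\dd s|^2$ is $-\int h(\tau(s),\delta s)$ with a minus sign, so make sure the signs coming from $S_{\sc}=-\int e^v_{\bPhi}$ are tracked consistently; your final equation matches \eqref{sceom}, but the intermediate sentence ``producing $\langle\tau(s),\delta s\rangle$'' should be stated more carefully.
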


\begin{remark}
When $\bPhi=0$, equation \eqref{sceom} becomes the {\em vertical
pseudoharmonic equation}:
\be
\label{pseudoharm}
\tau^v(s)=0
\ee
and its solutions are called {\em pseudoharmonic sections} of $\pi$. 
\end{remark}	
	
\noindent 
Notice that \eqref{Ssc} is extremized only with respect to vertical
variations of $s$, since $s$ is subject to the section constraint
$\pi\circ s=\id_M$. As a consequence, a section which is
pseudo-harmonic as an unconstrained map from $(M,g)$ to $(E,h)$ is a
pseudo-harmonic section, but not every pseudo-harmonic section is
pseudo-harmonic as an unconstrained map from $(M,g)$ to $(E,h)$.

\subsection{The sheaves of configurations and solutions}

The local character of the model allows us to define two sheaves of
sets on $M$, namely:

\begin{itemize}

\item The sheaf of configurations $\Conf_\pi$, which coincides with
  the sheaf of local smooth sections of $\pi$.

\item The sheaf of solutions $\Sol_{\pi,\Phi}$, defined as the
  sub-sheaf of $\Conf_\pi$ whose set of sections $\Sol_{\pi,\Phi}(U)$
  over an open subset $U\subset M$ consist of pseudo-harmonic sections
  of the restricted Kaluza-Klein space $\pi_U:(E_U,h_U)\rightarrow
  (U,g_U)$ endowed with the vertical potential $\Phi|_{E_U}$, where
  $E_U\eqdef \pi^{-1}(U)$, $h_U\eqdef h|_{E_U}$, $\pi_U\eqdef
  \pi|_{E_U}$ and $g_U\eqdef g|_{U}$.
\end{itemize}

\subsection{A modified sigma model for maps}

Let $\pi^0\colon (E^0\eqdef M\times \cM,h^0)\rightarrow (M,g)$ be a
topologically-trivial Kaluza-Klein space over $(M,g)$ endowed with a
vertical scalar potential $\bPhi\in \cC^\infty(E,\R)$. Let $H^0$ be
the horizontal distribution determined by $h^0$. Let
$\graph\colon \cC^\infty(M,\cM)\rightarrow \Gamma(\pi^0)$ be the
bijective map given by:
\be
\graph(\varphi)(m)\eqdef (m,\varphi(m))~~\forall m\in M~~,
\ee
whose inverse is the map $\ungraph:\Gamma(\pi^0)\rightarrow \Gamma(\pi^0)$ given by:
\be
\ungraph(s^0)\eqdef p^0\circ s^0\, .
\ee
Using this correspondence, the scalar section sigma model defined by
$\pi^0\colon (E^0,h^0)\rightarrow (M,g)$ together with the vertical
potential $\bPhi\in \cC^\infty(E^0,\R)$ can be viewed as a
generalization of the ordinary scalar sigma model of maps from $(M,g)$
to $(\cM,\cG,\Phi)$.
	
\begin{definition}
The {\em modified scalar sigma model} determined by $(M,g)$,
$(\cM,\cG)$ and $H^0$ is defined by the action:
\be
S^{H^0}_{\sc,\bPhi, U^0}[\varphi]=S_{\sc,\pi^0, \bPhi, U^0}[\graph(\varphi)]\, .
\ee
for any relatively compact open set $U^0\subset M$, where
$S_{\sc, \pi, \bPhi, U^0}$ is the action of the section sigma model of
the topologically trivial Kaluza-Klein space $\pi^0\colon
(E^0,h^0)\rightarrow (M,g)$ with fiber $(\cM,\cG)$ and horizontal
distribution $H^0$.
\end{definition}

\begin{definition}
Let $H^0$ be any horizontal distribution for the trivial bundle
$\pi^0\colon E^0\rightarrow M$.  A map $\varphi:(M,g)\rightarrow
(\cM,\cG)$ is called {\em $H^0$-pseudoharmonic} if the graph $s^0$ of
$\varphi$ is a pseudo-harmonic section of the topologically trivial
Kaluza-Klein space $\pi^0\colon (E^0,h^{0})\rightarrow (M,g)$ with
fiber $(\cM,\cG)$, where $h^{0}$ is the metric on $E^{0}$ determined
by $H^{0}, \cG$ and $g$.
\end{definition}

\noindent 
It is clear that the solutions of the equations of motion of the
modified scalar sigma model are $H^0$-pseudoharmonic maps. The
modified scalar sigma model defined reduces to the ordinary sigma
model when $H^0$ is the trivial Ehresmann connection of $\pi^0$, as we
explain next.

\begin{remark}
Let $\pi^0:(E^0\eqdef M\times\cM, h^0\eqdef g\times \cG)\rightarrow
(M,g)$ be the product Kaluza-Klein space with fiber $(\cM,\cG)$
defined over $(M,g)$ and horizontal distribution $H^{0}_{\triv}$. For
any $s\in \Gamma(\pi^0)$, we have $s^\ast(h)=\varphi^\ast (\cG)$ and
$\dd^v s=\dd \varphi$, where $\varphi=\ungraph(s)=p^0\circ s$. Thus
$e^v(s)=e(\varphi)$ and $\tau^v(s)=\tau(\varphi)$. Moreover, we have
$\bPhi=\Phi\circ p^0$ for some $\Phi\in \cC^\infty(\cM,\R)$. Hence the
section sigma model action \eqref{Ssc} reduces to the action of an
ordinary sigma model of maps from $(M,g)$ to $(\cM,\cG)$, while the
equations of motion \eqref{sceom} reduce to those of an ordinary sigma
model. Setting $\Phi=0$, we conclude that a map
$\varphi:(M,g)\rightarrow (\cM,\cG)$ is $H^0_\triv$-pseudoharmonic if
and only if it is pseudo-harmonic.
\end{remark}

\subsection{U-fold interpretation of globally-defined solutions in the integrable case}
\label{subsec:sigmaUfold}

Let $(\pi:(E,h)\rightarrow (M,g), \bPhi)$ be an {\em integrable}
bundle of scalar data of type $(\cM,\cG,\Phi)$. Consider a special
trivializing atlas of $\pi$ defined by the convex cover
$(U_\alpha)_{\alpha\in I}$ of $(M,g)$.  We freely use the notations
introduced in Subsection \ref{spectriv}. Since $\pi$ is integrable,
the trivializing maps defined in \eqref{qalpha} give isometries
$q_\alpha:(E_\alpha,
h_\alpha)\stackrel{\sim}{\rightarrow}(U_\alpha\times \cM,
g_\alpha\times \cG)$. For any pair of indices $\alpha,\beta\in I$ such
that $U_{\alpha\beta}\eqdef U_\alpha\cap U_\beta$ is non-empty, the
composition $q_{\alpha\beta}\eqdef q_\beta\circ
q_\alpha^{-1}:U_{\alpha\beta}\times \cM\rightarrow
U_{\alpha\beta}\times \cM$ has the form
$q_{\alpha\beta}(m,p)=(m,\bg_{\alpha\beta}(p))$, where:
\be
\bg_{\alpha\beta}\in \Iso(\cM,\cG,\Phi)\, .
\ee
We remind the reader that
$\Phi\eqdef \Phi_{m_0}=\bPhi|_{E_{m_0}}$. Setting
$\bg_{\alpha\beta}=\id_\cM$ for $U_{\alpha\beta}=\emptyset$, the
collection $(\bg_{\alpha\beta})_{\alpha,\beta\in I}$ satisfies the
cocycle condition:
\ben
\label{gcocycle}
\bg_{\beta\delta}\bg_{\alpha\beta}=\bg_{\alpha\delta}\, , \quad \forall\, \alpha,\beta,\delta\in I\, .
\een

\noindent
For any section $s\in \Gamma(\pi)$, the restriction $s_\alpha\eqdef
s|_{U_\alpha}$ corresponds through $q_\alpha$ to the graph
$\graph(\varphi^{\alpha})\in \Gamma(\pi^0_{\alpha})$ of a
uniquely-defined smooth map
$\varphi^{\alpha}\in\cC^\infty(U_\alpha,\cM)$:
\ben
\label{sres}
s_\alpha=q_{\alpha}^{-1}\circ \graph(\varphi^\alpha)~~\mathrm{i.e.}~~q_\alpha(s_\alpha(m))=(m,\varphi^\alpha(m))\, , ~~\forall m\in U_\alpha\, .
\een 
Composing the first relation from the left with $p^0_{\alpha}$ gives:
\be
\varphi^{\alpha}={\hat q}_\alpha\circ s_\alpha~~.
\ee
Using relation \eqref{bgflat}, this implies:
\ben
\label{phigluing}
\varphi^{\beta}(m)=\bg_{\alpha\beta}\varphi^{\alpha}(m)~~\forall m\in U_{\alpha\beta}~~,
\een
where juxtaposition in the right hand side denotes the tautological
action of the group $\Iso(\cM,\cG,\Phi)$ on $\cM$. Conversely, any
family of smooth maps
$\left\{\varphi^{\alpha}\in\cC^\infty(U_\alpha,\cM)\right\}_{\alpha\in
I}$ satisfying \eqref{phigluing} defines a smooth section
$s\in \Gamma(\pi)$ whose restrictions to $U_\alpha$ are given
by \eqref{sres}.
	
From the previous discussion, the equation of motion \eqref{sceom} for
$s$ is equivalent with the condition that each $\varphi^{\alpha}$
satisfies the equation of motion of the ordinary sigma model defined
by the scalar data $(\cM,\cG,\Phi)$ on the space-time
$(U_\alpha,g_\alpha)$:
\ben
\label{localeomscalar}
\tau^v(h, s)=-(\grad \bPhi)^s \Leftrightarrow \tau(g_\alpha, \varphi^{\alpha})=-(\grad\Phi)^{\varphi^{\alpha}} ~~\forall \alpha\in I~~.
\een
Thus global solutions $s$ of the equations of motion \eqref{sceom} are
\emph{glued} from local solutions $\varphi^{\alpha}\in
\cC^\infty(U_\alpha,\cM)$ of the equations of motion of the ordinary
sigma model using the $\Iso(\cM,\cG,\Phi)$-valued constant transition
functions $\bg_{\alpha\beta}$ which satisfy the cocycle condition
\eqref{gcocycle}. This realizes the ideology of {\em
  classical}\footnote{As opposed to string-theoretical.} U-folds,
namely \emph{gluing local solutions through symmetries of the
  equations of motion}.

\subsection{Sheaf-theoretical description} 
	
The observations above have the following sheaf-theoretical
description. Let $\Conf_\cM^{\alpha}\eqdef \Conf_\cM|_{U_\alpha}$,
$\Sol^{g,{\alpha}}_{\cM,\cG,\Phi}\eqdef
\Sol^g_{\cM,\cG,\Phi}|_{U_\alpha}$ and
$\fg_{\alpha\beta}:\Conf_\cM^{\alpha}|_{U_{\alpha\beta}}\rightarrow\Conf_\cM^{\beta}|_{U_{\alpha\beta}}$
be the isomorphism of sheaves defined through:
\be
\fg_{\alpha\beta}(s) \eqdef \bg_{\alpha\beta} s\, ,\quad  \,\, \forall\, U\subset U_{\alpha\beta}\, , \quad \forall s\in \Conf_\cM^{\alpha}(U)~~.
\ee
Since $\bg_{\alpha\beta}$ acts by symmetries of the equations of
motion of the ordinary sigma model, this restricts to an isomorphism
of sheaves of sets from
$\Sol^{g,{\alpha}}_{\cM,\cG,\Phi}|_{U_{\alpha\beta}}$ to
$\Sol^{g,\beta}_{\cM,\cG,\Phi}|_{U_{\alpha\beta}}$. These isomorphisms
of sheaves satisfy the cocycle conditions:
\be
\fg_{\beta\gamma}\fg_{\alpha\beta}=\fg_{\alpha\gamma}\, , \quad\forall \alpha,\beta,\gamma\in I\, . 
\ee
The sheaves $\Conf_\cM^{\alpha}$ and $\Sol_{\cM,\cG,\Phi}^{g,\alpha}$
defined on $U_\alpha$ glue using these isomorphisms to sheaves
$\bConf_\cM$ and $\bSol^{g}_{\cM,\cG,\Phi}\subset \bConf_\cM$ defined
on $M$ which satisfy $\bConf_\cM|_{U_\alpha}\simeq \Conf_\cM^{\alpha}$
and
$\bSol_{\cM,\cG,\Phi}^{g}|_{U_\alpha}\simeq \Sol_{\cM,\cG,\Phi}^{g,\alpha}$. The
discussion above shows that the trivialization maps $q_\alpha$ of
$\pi$ induce isomorphisms of sheaves:
\be
\Conf_\pi\simeq \bConf_\cM\, , \quad \Sol_{\pi,\bPhi}\simeq \bSol_{\cM,\cG,\Phi}^{g}\, .
\ee
which present $\bConf_\cM$ and $\bSol_{\cM,\cG,\Phi}$ respectively as
the sheaves of configurations and solutions of the section sigma model
defined by the scalar structure $(\pi,\bPhi)$. These isomorphisms of
sheaves encode the {\em U-fold interpretation} of the section sigma
model defined by $(\pi,\bPhi)$.

\subsection{Classical scalar locally-geometric U-folds}

Let $(\cM,\cG,\Phi)$ be a scalar structure. The previous discussion
motivates the following mathematically rigorous definition:
	
\begin{definition}
A {\em classical scalar locally-geometric U-fold} of type
$(\cM,\cG,\Phi)$ is a smooth global solution $s\in \Gamma(\pi)$ of the
equations of motion \eqref{sceom} of the section sigma model defined
by an {\em integrable} bundle of scalar data ($\pi:(E,h)\rightarrow
(M,g),\bPhi)$ having type $(\cM,\cG,\Phi)$.
\end{definition}

\noindent
As explained above, any such object can be constructed by gluing local
solutions of the ordinary sigma model with target $(\cM,\cG)$ using
the $\Iso(\cM,\cG,\Phi)$-valued transition functions of
$\pi$. Moreover, the discussion above shows that a section sigma model
based on an integrable Kaluza-Klein space is {\em locally
indistinguishable} from an ordinary sigma model. Thus {\em section
sigma models based on integrable Kaluza-Klein spaces provide allowed
globalizations of the local formulas used in the sigma model
literature}, in the sense that they are locally indistinguishable from
the latter. When the space-time $M$ is not simply-connected, the
number of inequivalent global formulations of this type is in general
{\em continuously infinite}, since so is the character variety of
$\pi_1(M)$ for the group $\Iso(\cM,\cG,\Phi)$. This illustrates the
highly ambiguous character of the local formulation of theories
involving sigma models (such as supergravity theories coupled to
scalar matter in four dimensions). Clearly such local formulations are
far from sufficient when one tries to specify the theory uniquely on a
non-contractible space-time.
	
\section{Scalar-electromagnetic bundles}
\label{sec:scalarelectro}
	
Let $\pi:(E,h)\rightarrow (M,g)$ be a Kaluza-Klein space with
Ehresmann transport $T$ associated to the horizontal distribution
$H\subset TE$. Let $\bDelta = (\bcS,\bomega,\bD)$ be a flat symplectic
vector bundle defined over $E$ with symplectic structure $\bomega$ and
symplectic connection $\bD$. Given a point $m\in M$, let
$(\cS_m,D_m,\omega_m)$ be the restriction of $(\bcS,\bomega,\bD)$ to
the fiber $E_m$. This is a flat symplectic vector bundle defined on
the Riemannian manifold $(E_m,h_m)$ and hence a duality structure as
defined in \cite{GESM}. For any path $\Gamma\in\cP(E)$ in the total
space of $E$, let $U_\Gamma:\bcS_{\Gamma(0)}\rightarrow
\bcS_{\Gamma(1)}$ be the parallel transport defined by $\bD$ along
$\Gamma$. Since $\bD$ is a symplectic connection, $U_\Gamma$ is a
symplectomorphism between the symplectic vector spaces
$(\bcS_{\Gamma(0)}, \bomega_{\Gamma(0)})$ and
$(\bcS_{\Gamma(1)},\bomega_{\Gamma(1)})$. For any path $\gamma\in
\cP(M)$, let $\bgamma_e\in \cP(E)$ denote its horizontal lift starting
at the point $e\in E_{\gamma(0)}$. By the definition of $T$, we have
$T_\gamma(e) = \bgamma_e(1)$.
	
\begin{definition}
The {\em extended horizontal transport} along a path
$\gamma\in \cP(M)$ is the unbased isomorphism of vector bundles
$\mT_\gamma:\cS_{\gamma(0)}\rightarrow \cS_{\gamma(1)}$ defined
through:
\be
\mT_\gamma(e)\eqdef U_{\bgamma_e}:\cS_{e}\rightarrow \cS_{T_\gamma(e)}\, , \quad \forall\, e\in E_{\gamma(0)}\, ,
\ee
which linearizes the Ehresmann transport $T_\gamma:E_{\gamma(0)}\rightarrow E_{\gamma(1)}$ along $\gamma$.
\end{definition}

\noindent 
Clearly $\mT_\gamma$ is an isomorphism of flat symplectic vector bundles:
\begin{equation*}
\mT_\gamma\colon (\cS_{\gamma(0)},D_{\gamma(0)},\omega_{\gamma(0)})\xrightarrow{\sim}(\cS_{\gamma(1)},D_{\gamma(1)},\omega_{\gamma(1)})\, ,
\end{equation*}
which lifts the isometry $T_\gamma:(E_{\gamma(0)}, h_{\gamma(0)})\rightarrow	(E_{\gamma(1)}, h_{\gamma(1)})$.

\begin{definition}
Let $\pi\colon (E,h)\to (M,g)$ be a Kaluza-Klein space. A
\emph{duality bundle} $\bDelta$ is a flat symplectic vector bundle
$\bDelta = (\bcS,\bomega,\bD)$ over $E$. Let $\bDelta_{1}$ and
$\bDelta_{2}$ be duality bundles. A \emph{morphism of duality bundles}
from $\bDelta_{1}$ to $\bDelta_{2}$ is a bundle morphism of the
underlying flat symplectic vector bundles.
\end{definition}

\noindent
Let $\bDelta=(\bcS,\bD,\bomega)$ be a duality bundle over $\pi\colon
E\to M$ and let $m_{0}\in M$ be a fixed point in $M$. Since $M$ is
path-connected, it follows that all fiber restrictions
$(\cS_m,D_m,\omega_m)$, $m\in M$, can be recovered by extended
horizontal transport from the flat symplectic vector bundle
$(\cS_{m_0},D_{m_0},\omega_{m_0})$ over $E_{m_0}$. The flat vector
bundle $(\cS_m,D_m,\omega_m)$ is a duality structure as defined in
\cite{GESM}. In particular, the isomorphism class of the duality
structures $\Delta_m\eqdef (\cS_m,D_m,\omega_m)$ is independent of $m$
and is called the {\em type} of $\bDelta$. We will drop the subscript
and write $\Delta\eqdef (\cS,D,\omega)$ for the type of
$\bDelta$. Notice that $\bDelta$ can be viewed as a bundle whose
fibers are the duality structures $\Delta_m$, endowed with the
complete Ehresmann connection given by the extended horizontal
transport $\mT$. Such objects defined over $(M,g)$ form a category
when equipped with the obvious notion of (based) morphism.

\subsection{Vertical tamings and scalar-electromagnetic bundles}

Let $\bDelta$ be a duality bundle over a Kaluza-Klein space
$\pi:(E,h)\rightarrow (M,g)$. Recall that a \emph{taming}
$\bJ\in\Aut(\bcS,\bomega)$ is an automorphism of the symplectic vector
bundle $(\bcS,\bomega)$ satisfying \cite{DuffSalamon}:
\begin{equation*}
\bJ^{2} = -\mathrm{Id}_{\bcS}\, , \qquad \bomega(\bJ e, e) >0\, , \qquad \forall e\in \Gamma(E,\bcS)\, .
\end{equation*}
Tamings always exist. Given a taming $\bJ$ of $(\bcS,\bomega)$ and a
point $m\in M$, we denote by $J_m\eqdef \bJ|_{E_m}$ the taming on
$(\cS_m,\omega_m)$ induced by the restriction of $\bJ$ to the fiber
$E_m$ of $\pi$.
	
\begin{definition}
A taming $\bJ$ of $(\bcS,\bomega,\bD)$ is called {\em vertical} if it
is $\mT$-invariant, which means that it satisfies:
\be
\mT_{\gamma}\circ J_{\gamma(0)}=J_{\gamma(1)}\circ \mT_\gamma\, , \qquad \forall\, \gamma\in \cP(M)\, .
\ee
\end{definition}
	
\noindent 
It is clear that $\bJ$ is vertical if and only if it satisfies: 
\be
\bD_X\circ\bJ=\bJ\circ \bD_X\, ,\qquad \forall\, X\in \Gamma(E,H)\, .
\ee
In this case, $\mT_\gamma$ is an isomorphism of tamed flat symplectic vector bundles:
\begin{equation*}
\mT_\gamma\colon (\cS_{\gamma(0)},\omega_{\gamma(0)},D_{\gamma(0)},J_{\gamma(0)}) \xrightarrow{\sim} (\cS_{\gamma(1)},\omega_{\gamma(1)},D_{\gamma(1)},J_{\gamma(1)})\, ,
\end{equation*}
which covers the isometry
$T_\gamma:(E_{\gamma(0)},h_{\gamma(0)})\rightarrow (E_{\gamma(1)},
h_{\gamma(1)})$, i.e. the following diagram commutes:
\begin{equation*}
\scalebox{1.2}{
	\xymatrix{
(\cS_{\gamma(0)},\omega_{\gamma(0)},D_{\gamma(0)},J_{\gamma(0)}) ~~\ar[d]_{\pi_{\gamma(0)}} \ar[r]^{\!\!\!\!\!\!\mT_{\gamma}} & ~~ (\cS_{\gamma(1)},\omega_{\gamma(1)},D_{\gamma(1)},J_{\gamma(1)}) \ar[d]^{\pi_{\gamma(1)}}\\
(E_{\gamma(0)},h_{\gamma(0)})  \ar[r]^{T_{\gamma}} & (E_{\gamma(1)}, h_{\gamma(1)}) \\
	}} 
\end{equation*}

\begin{definition}
Let $\pi\colon (E,h)\to (M,g)$ be a Kaluza-Klein space. An
\emph{electromagnetic bundle} $\bXi$ is a duality bundle $\bDelta$
over $E$ equipped with a vertical taming $\bJ$. We write $\bXi \eqdef
(\bDelta,\bJ) = (\bcS,\bomega,\bD,\bJ)$. Let $\bXi_{1}$ and $\bXi_{2}$
be two electromagnetic bundles. A morphism of electromagnetic bundles
$f\colon \bXi_{1}\to \bXi_{2}$ from $\bXi_{1}$ to $\bXi_{2}$ is a
morphism of the underlying duality structures which satisfies $\bJ_{2}
\circ f = f\circ \bJ_{1}$.
\end{definition}

\begin{definition}
A {\em scalar-electromagnetic bundle} defined over $(M,g)$ is a triple
$\bcD=(\pi:(E,h)\rightarrow (M,g),\bPhi, \bXi)$ consisting of a
Kaluza-Klein space $\pi:(E,h)\rightarrow (M,g)$, a vertical potential
$\bPhi$ and an electromagnetic bundle $\bXi$ defined over the total
space $E$ of $\pi$. The scalar-electromagnetic bundle $\bcD$ is called
{\em integrable} if $\pi:(E,h)\rightarrow (M,g)$ is an integrable
Kaluza-Klein space.
\end{definition}
	
\noindent 
Let $\bcD=(\pi:(E,h)\rightarrow (M,g), \bPhi, \bXi)$ be a
scalar-electromagnetic bundle, which for simplicity (and when no
confusion can arise) we will denote by $\bcD=(\pi, \bPhi,
\bXi)$. Since $M$ is path-connected, it follows that all fiber
restrictions $(\pi, \bPhi, \bXi)|_{E_{m}} \eqdef
(E_{m},h_{m},\Phi_{m},\cS_m,\omega_m,D_m, J_{m})$ can be recovered by
extended horizontal transport from the 'value' of $\bcD$ at a fixed
point $m_{0}\in M$. In particular, each $\cD_m\eqdef (E_m, h_m,
\Phi_m,\cS_m,\omega_m,D_m,J_m)$ is a scalar-electromagnetic structure
in the sense of \cite{GESM}. The isomorphism class of $\cD_m$ is
independent of $m$ and is called the {\em type} of the
scalar-electromagnetic bundle $\bcD$. Hence we will drop the subscript
and write $\cD \eqdef (\cM, \cG, \Phi,\cS,\omega,D,J)$. Notice that
$\bcD$ can be viewed as a bundle whose fibers are the
scalar-electromagnetic structures $\cD_m$, endowed with the
complete Ehresmann connection  given by the extended horizontal
transport $\mT$. Such objects defined over $(M,g)$ form a category
when equipped with the obvious notion of (based) morphism.

\begin{definition}
The {\em extended holonomy group} of $\bcD$ at the point $m\in M$ is the
subgroup of $\Aut(\cD_m)$ defined through:
\be
\mG_m\eqdef \{\mT_{\gamma}\,|\, \gamma\in \cP(M) , \gamma(0)=\gamma(1)=m\}\subset \Aut(\cD_m)\, .
\ee
\end{definition}

\begin{remark}
The holonomy groups associated to $\bcD$ at different points in $M$
are related by conjugation inside $\Aut(\cD)$ and hence
isomorphic. Therefore, we can speak of the holonomy group $\mG$ of
$\bcD$ without further reference to base points.
\end{remark}

\subsection{Topologically trivial scalar-electromagnetic bundles}
	
Let $\pi^0:(E^0,h^{0})\rightarrow (M,g)$ be a topologically trivial
Kaluza-Klein space over $(M,g)$ with fiber $(\cM,\cG)$, where
$E^0=M\times \cM$ and $\pi^0$ is the projection in the first
factor. Let $p^0$ be the projection of $E^0$ on the second factor. Let
$H\subset TE^0$ be the horizontal distribution determined by $h$ and
let $\cD=(\cM,\cG,\Phi,\cS,\omega,D,J)$ be a scalar-electromagnetic
structure.  There exists a unique flat connection $\bD^{0}$ on the
pulled-back bundle $\cS^{p^0}$ which satisfies the following
conditions for all $e\in E^0$:
\beqan
&& \bD^{0}_x(\sigma^{p^0})=0\, , \quad \forall\, \sigma \in \Gamma(\cM,\cS)\, , \quad \forall\, x\in H_e(h)\, ,\nn\\
&& \bD^{0}_x=D^{p^0}_x\, , \quad \forall\, x\in V_{e}\, ,
\eeqan
where $D^{p^0}$ is the pullback of the connection $D$ through
$p^0$. Let $\bcS^0\eqdef S^{p^0}$, $\bJ^0\eqdef J^{p^0}$ and
$\bomega^0\eqdef \omega^{p^0}$. Then $\bXi^{0}\eqdef
(\bcS^0,\bomega^0,\bD^{0},\bJ^0)$ is an electromagnetic bundle defined
over $E^0$ and $\bPhi^0\eqdef \Phi^{p^0}\eqdef \Phi\circ p^0$ is a
vertical potential for $\pi^0$. A section $\bsigma\in
\Gamma(E^0,\bcS^0)$ satisfies $\bD^0_X\bsigma=0$ for all $X\in
\Gamma(E^0,H)$ if and only if there exists $\sigma\in \Gamma(\cM,\cS)$
such that $\bsigma=\sigma^{p^0}$. It is easy to see that $\bsigma$ is
invariant under the extended horizontal transport $\mT^0$ associated
to $H^{0}$ and $\bD^0$ if and only if $\sigma$ is invariant under the
action of the subgroup:
\be
{\hat G}^h\eqdef \{{\hat T}^h_\gamma\,\, |\,\, \gamma\in \cP(M)\}\subset \Isom(\cM,\cG)\, ,
\ee
where ${\hat T}_\gamma$ was defined in \eqref{hatT}. 

\begin{definition}
The triple $\bcD^{0} = (\pi^0:(E^0,h^{0})\rightarrow (M,g),
\bPhi^0,\bXi^{0})$ is called a {\em topologically trivial
  scalar-electromagnetic bundle} of type $\cD$ defined over $(M,g)$.
\end{definition}

\begin{definition}
A topologically-trivial scalar-electromagnetic bundle $\bcD^0$ whose
underlying Kaluza-Klein space $\pi^0:(E^0,h^0)\rightarrow (M,g)$ is
the product Kaluza-Klein bundle (where $h^0=g\times \cG$) is called a
{\em metrically trivial} scalar-electromagnetic bundle defined over
$(M,g)$.
\end{definition}	

\noindent	
For a metrically-trivial scalar-electromagnetic bundle, we have ${\hat
G}=\id_{\cM}$. Hence a section $\bsigma\in \Gamma(E^0,\bcS^0)$ is
$\mT^0$-invariant if and only if $\bsigma=\sigma^{p^0}$ for
some section $\sigma\in \Gamma(\cM,\cS)$ of $\cS$.

\subsection{Special trivializing atlases for scalar-electromagnetic bundles}
\label{spectrivgauge}

Consider a scalar-electromagnetic bundle $\bcD=(\pi, \bPhi,\bXi)$. Let
$(U_\alpha)_{\alpha\in I}$ be a convex cover of $M$ with $0\not \in I$. Fix
points $m_0\in M$, $m_{\alpha}\in U_\alpha$ as well as paths
$\lambda^{\alpha}$ from $m_{\alpha}$ to $m_0$ as in Subsection
\ref{spectriv}, whose notations we will use freely. Let $\cD\eqdef
(\cM,\cG,\Phi,\cS,\omega,D,J)$ be the type of a scalar-electromagnetic
bundle $\bcD$ and let $\bcD_\alpha \eqdef (\pi_\alpha,
\bPhi_\alpha,\bXi_{\alpha})$ denote its restriction to
$E_\alpha$. Then $\bcD_\alpha \eqdef (\pi_\alpha,
\bPhi_\alpha,\bXi_{\alpha})$ is a scalar-electromagnetic bundle
defined over $(E_{\alpha},h_{\alpha})$, with Kaluza-Klein metric given
by $h_{\alpha}\eqdef h|_{U_{\alpha}}$. Let $\bcD^0_{\alpha}=
(\pi^{0}_{\alpha}, \bPhi^{0}_{\alpha},\bXi^{0}_{\alpha})$ be the
topologically trivial scalar-electromagnetic bundle of type $\cD$
defined over $E^{0}_\alpha$, with Kaluza-Klein metric $h^{0}_{\alpha}$
making:
\begin{equation*}
q_{\alpha}\colon E_{\alpha}\to U_{\alpha}\times \cM\, ,
\end{equation*}
into an isometry (see Subsection \ref{spectriv}). We have
$\bcS^0_{\alpha}=\cS^{p^0_{\alpha}}$, $\bD^0_{\alpha}=D^{p^0_{\alpha}}$,
$\bJ^0_{\alpha}=J^{p^0_{\alpha}}$, $\bomega^0_{\alpha}=\omega^{p^0_{\alpha}}$.

The extended horizontal transport $\mT$ along geodesics inside
$U_\alpha$ can be used to define unbased isomorphisms of
electromagnetic structures
$\mq_\alpha:(\bcS_\alpha,\bomega_\alpha,\bD_\alpha,\bJ_\alpha)\rightarrow
(\bcS^0_{\alpha}, \bomega^0_{\alpha},\bD^0_{\alpha}, \bJ^0_{\alpha})$
which linearize the diffeomorphisms $q_\alpha:E_\alpha\rightarrow
E^0_{\alpha}$ defined in equation \eqref{qalpha}. For any $e\in
E_\alpha$, the linear isomorphism
$\mq_{\alpha,e}:\cS_e\stackrel{\sim}{\rightarrow}
\cS^{p^0_{\alpha}}_{q_\alpha(e)}=\cS_{T_{\lambda^{\alpha}\circ\gamma_{\pi(e)}^{\alpha}}(e)}$
is defined through:
\be
\mq_{\alpha,e}=\mT_{\lambda^{\alpha}\circ\gamma_{\pi(e)}^{\alpha}}(e)\, .
\ee
This gives commutative diagrams:
\ben
\label{ltrivextended}
\scalebox{1.2}{
\xymatrix{
			\bcS_\alpha
			~~\ar[d] \ar[r]^{\!\!\!\!\!\!\mq_\alpha} &
			~~\bcS^0_{\alpha} \ar[d]\\ E_\alpha
			~~\ar[d]_{\pi_\alpha} \ar[r]^{\!\!\!\!\!\!q_\alpha}
			& ~~E^0_{\alpha} \ar[d]^{\pi^0_{\alpha}}\\
			U_\alpha \ar[r]^{\id_{U_\alpha}} & U_\alpha \\
			}}
\een
which extend the diagrams \eqref{ltriv}. In particular, $q_\alpha$ are
determined by $\mq_\alpha$. It is easy to see that $\mq_\alpha$ is an
isomorphism of electromagnetic structures from
$(\bcS_\alpha,\bomega_\alpha,\bD_\alpha,\bJ_\alpha)$ to
$(\bcS^0_{\alpha},\bomega^0_{\alpha},\bD^0_{\alpha},
\bJ^0_{\alpha})$. Hence $\mq_\alpha$ can be viewed as an isomorphism
of scalar-electromagnetic structures from $\bcD_\alpha$ to
$\bcD_{\alpha}^{0}$.  By definition, the family
$(U_\alpha,\mq_\alpha)_{\alpha\in I}$ is the {\em special trivializing
  atlas} of the scalar-electromagnetic bundle $\bcD$ defined by the
convex cover $(U_\alpha)_{\alpha\in I}$ and by the data $m_0$ and
$(m^{\alpha},\lambda^{\alpha})_{\alpha\in I}$.  Since any
scalar-electromagnetic bundle admits such an atlas, it follows that
any scalar-electromagnetic bundle is locally isomorphic with a
topologically trivial scalar-electromagnetic bundle.

For any $\alpha,\beta\in I$ such that $U_{\alpha\beta}\neq \emptyset$
and any $m\in U_{\alpha\beta}$, the restriction of the unbased
isomorphism $\mq_\beta\circ
\mq^{-1}_\alpha:\bcS^0_{\alpha}|_{E_{\alpha\beta}}\rightarrow
\bcS^0_{\beta}|_{E_{\alpha\beta}}$ to the fiber $\{m\}\times \cM$ can
be identified with the unbased automorphism $\mf_{\alpha\beta}(m)\in
\Aut^\ub(\cS)$ of $\cS$ given by:
\be
\mf_{\alpha\beta}(m)=\mT_{c^{\alpha\beta}_m}\, ,
\ee
where $c^{\alpha\beta}_m$ is the closed path defined in
\eqref{calphabeta}. This unbased automorphism of $\cS$ linearizes the
isometry $\bg_{\alpha\beta}(m)\in \Iso(\cM,\cG,\Phi)$ of $\cM$ given
by the transition function $\bg_{\alpha\beta}$. Since $\bJ$ and
$\bomega$ and the vertical connection induced by $\bD$ are
$\mT$-invariant, we have $\mf_{\alpha\beta}(m)\in
\Aut(\cD)=\Aut^\ub_{\cM,\cG,\Phi}(\cS,D,\omega,J)$. This gives maps
$\mf_{\alpha\beta}:U_{\alpha\beta}\rightarrow \Aut(\cD)$, which we
call the {\em extended transition functions} of the special
trivializing atlas $(U_\alpha,\mq_\alpha)_{\alpha\in I}$. Notice that
$\bg_{\alpha\beta}$ are uniquely determined by $\mf_{\alpha\beta}$. In
addition, notice that $\mf_{\alpha\beta}$ satisfy the cocycle
condition:
\ben
\label{fcocycle}
\mf_{\beta\gamma}\mf_{\alpha\beta}=\mf_{\alpha\gamma}\, ,
\een
which implies the cocycle condition \eqref{gcocycle} for
$\bg_{\alpha\beta}$.
		
When $\bcD$ is integrable, any homotopy between paths in $M$ lifts to
a homotopy between the horizontal lifts of those paths in $E$. This
implies that the transition functions $\mf_{\alpha\beta}$ are constant
on $U_{\alpha\beta}$ and hence they can be viewed as elements of
$\Aut(\cD)$ which cover $\bg_{\alpha\beta}\in \Iso(\cM,\cG,\Phi)$.

\subsection{The fundamental bundle form and field of an electromagnetic bundle}
\label{sec:fundamental}

Consider a scalar-electromagnetic bundle $\bcD = (\pi,\bPhi,\bXi)$
with associated electromagnetic bundle $\bXi = (\bcS,\bomega,\bD,\bJ)$
and Kaluza-Klein space $\pi\colon (E,h)\to (M,g)$. Let:
\begin{equation*}
\bD^{\ad}\colon \Gamma(E,End(\bcS))\to \Omega^{1}(E,End(\bcS))\, ,
\end{equation*}
be the connection induced by $\bD$ on the endomorphism bundle
$End(\bcS)$ of $\bcS$.

\begin{definition}
The \emph{fundamental bundle form} $\bTheta$ of $\bcD$ is
the $End(\bcS)$-valued one-form defined on $E$ as follows:
\begin{equation*}
\bTheta \eqdef \bD^{\ad} \bJ \in \Omega^{1}(E,End(\bcS))\, .
\end{equation*}
\end{definition}

\noindent
The fact $\bJ$ is vertical together with the fact that the
decomposition $TE = H\oplus V$ is $h$-orthogonal implies that we have:
\begin{equation*}
\bTheta \in \Gamma(E, V^{\ast}\otimes End(\bcS))\, .
\end{equation*}

\begin{definition}
The \emph{fundamental bundle field} $\bPsi$ of $\bcD$ is
the $End(\bcS)$-valued vector field defined on $E$ as follows:
\begin{equation*}
\bPsi \eqdef \left(\sharp_{h}\otimes \mathrm{Id}_{End(\bcS)} \right)\circ D^{\ad} \bJ \in \Gamma(E,V\otimes End(\bcS))\, .
\end{equation*}
\end{definition}

\noindent
We denote by $End(\cS_{m}) = End(\bcS)|_{E_{m}}$ the restriction of
$End(\bcS)$ to $E_{m}$, which becomes the endomorphism bundle of the
vector bundle $\cS_{m}$. We denote by $\Theta_{m}\eqdef
\bTheta|_{E_{m}}$ the restriction of $\bTheta$ to $E_{m}$, which is a
section of $V^{\ast}|_{E_{m}}\otimes End(\cS_{m})$, namely:
\begin{equation*}
\Theta_{m} \in \Gamma(E_{m}, V^{\ast}|_{E_{m}}\otimes End(\cS_{m}))\, .
\end{equation*}
Likewise, we define $\Psi_{m}\eqdef \bPsi|_{E_{m}}$, which is a
section of $V|_{E_{m}}\otimes End(\cS_{m})$:
\begin{equation*}
\Psi_{m} \in \Gamma(E_{m}, V|_{E_{m}}\otimes End(\cS_{m}))
\end{equation*}
Since $V\subset TE$ is the vertical integrable distribution integrated
by the fibers of $\pi\colon (E,h)\to (M,g)$ (which by assumption are
connected), we have $V|_{E_{m}} \simeq TE_{m}$ and we obtain:
\begin{equation*}
\Theta_{m} = \Omega^{1}(E_{m}, End(\cS_{m}))\, , \qquad \Psi_{m}\in \mathfrak{X}(E_{m},End(\cS_{m}))\, .
\end{equation*}

\noindent
The extended horizontal transport $\mT_{\gamma}$ along paths
$\gamma\in \cP(M)$ induces various isomorphisms of spaces of sections
of the appropriate vector bundles. For simplicity we denote all these
isomorphisms by the same symbol. For instance, $\mT_{\gamma}$ induces
the following isomorphism:
\begin{equation*}
\mT_{\gamma}\colon \Omega^{1}(E_{\gamma(0)}, End(\cS_{\gamma(0)}))\xrightarrow{\sim} \Omega^{1}(E_{\gamma(1)}, End(\cS_{\gamma(1)}))\, ,
\end{equation*}
whose explicit action on homogeneous elements $\alpha\otimes \Sigma\in
\Omega^{1}(E_{\gamma(0)}, End(\cS_{\gamma(0)}))$, with $\alpha\in
\Omega^{1}(E_{\gamma(0)})$ and $\Sigma\in \Gamma(E_{\gamma(0)},
End(\cS_{\gamma(0)}))$ is given by:
\begin{equation*}
\mT_{\gamma}\cdot \left( \alpha\otimes \Sigma\right) = (T^{-1}_{\gamma})^{\ast}\alpha \otimes \left( \mT_{\gamma}\cdot \Sigma\right)\, .
\end{equation*}
Here $\mT_{\gamma}\cdot \Sigma\in
\Gamma(E_{\gamma(1)},End(\cS_{\gamma(1)}))$. The explicit
evaluation of $\mT_{\gamma}\cdot \Sigma\colon \cS_{\gamma(1)} \to
\cS_{\gamma(1)}$ on sections of $\cS_{\gamma(1)}$ takes the form:
\begin{equation*}
(\mT_{\gamma}\cdot \Sigma) (\xi) = \mT_{\gamma}\cdot \Sigma (\mT_{\gamma}^{-1}\cdot \xi) = 
\mT_{\gamma}\cdot \Sigma (\mT_{\gamma}^{-1}\circ \xi\circ T_{\gamma}) =  \mT_{\gamma}\circ \Sigma \circ \mT_{\gamma}^{-1}\circ \xi\, , \qquad \xi \in \Gamma(E_{\gamma(1)}, \cS_{\gamma(1)})\, .
\end{equation*}
As a general rule, and for simplicity in the notation, we will denote
the action induced by extended horizontal transport $\mT_{\gamma}$ on
a given module of sections by $``\cdot"$, whereas we will denote by
$``\circ"$ the action induced by $\mT_{\gamma}$ as an unbased
automorphism of the given bundles\footnote{Unfortunately, we cannot
  use the same notation as in \cite{GESM} since bold letters have in
  this article a different meaning.}. The explicit form of the action
represented by $``\cdot"$ will depend on the particular details of the
module of sections which is acted upon and should be clear from
the context. Likewise, the explicit form of the action represented by
$``\circ"$ will depend on the bundles involved. The reader is referred
to appendix D of \cite{GESM} and to remark \ref{remark:Tgammaction}
for detail about the explicit form of the various actions
induced by $\mT_{\gamma}$.

Let us fix $m_{0}\in M$. The fact that the restriction $\Psi_{m_{0}}$
of the fundamental bundle field $\bPsi$ to $E_{m_{0}}$ is a vector
field over $E_{m_{0}}$ taking values on $End(\cS_{m_{0}})$ together
with the fact that the extended horizontal transport $\mT_{\gamma}$
preserves the flat symplectic connection on $\bcS$ and covers
isometries, implies:
\begin{equation}
\label{eq:isoTheta}
\mT_{\gamma} \cdot \Theta_{m_{0}} = \Theta_{m}\, , \qquad \forall\, m\in M\, ,
\end{equation}
\begin{equation}
\label{eq:isoPsi}
\mT_{\gamma} \cdot\Psi_{m_{0}} = \Psi_{m}\, , \qquad \forall\, m\in M\, ,
\end{equation}
where $\gamma\in\cP(M)$ is a path in $M$ with initial point $\gamma(0)
= m_{0}$ and final point $\gamma(1) = m$. We conclude that the
isomorphism type of the restriction of the fundamental bundle form
$\bTheta$ and the fundamental bundle field $\bPsi$ to $E_{m}$ does not
depend on the point and hence will be denoted by $\Theta$ and $\Psi$,
respectively.

\begin{remark}
\label{remark:Tgammaction}
Equation \eqref{eq:isoTheta} follows from the following
computation\footnote{Equation \eqref{eq:isoPsi} can be proved
  similarly.}:
\begin{equation*}
\mT_{\gamma} \cdot \Theta_{m_{0}} = \mT_{\gamma}\cdot D^{\ad}_{m_{0}}(\mT^{-1}_{\gamma}\cdot\mT_{\gamma}\cdot J_{m_{0}}) = \mT_{\gamma}\cdot D^{\ad}_{m_{0}}(\mT^{-1}_{\gamma}\cdot J_{m}) = D^{\ad}_{m} J_{m} = \Theta_{m}\, ,
\end{equation*}
where we have used the fact that $\mT_{\gamma}$ preserves $D_{m_{0}}$ and $D_{m}$, i.e.:
\begin{equation}
\label{eq:D0m}
D^{\ad}_{m}(\xi)  =  \mT_{\gamma}\cdot D^{\ad}_{m_{0}} (\mT^{-1}_{\gamma}\cdot\xi) = \mT_{\gamma}\cdot D^{\ad}_{m_{0}} (\mT^{-1}_{\gamma}\circ\xi \circ T_{\gamma})\, , \quad\forall\,\, \xi\in\Gamma(E_{m},\cS_{m})\, ,
\end{equation}
as well as the relations:
\begin{eqnarray}
\label{eq:J0m}
& (\mT_{\gamma}\cdot J_{m_{0}})(\xi) = \mT_{\gamma}\cdot J_{m_{0}}(\mT^{-1}_{\gamma}\cdot\xi) = \mT_{\gamma}\cdot (J_{m_{0}} \circ\mT^{-1}_{\gamma}\circ \xi \circ T_{\gamma}) \nonumber\\
& = \mT_{\gamma}\circ (J_{m_{0}} \circ\mT^{-1}_{\gamma}\circ \xi \circ T_{\gamma})\circ T^{-1}_{\gamma} = \mT_{\gamma}\circ J_{m_{0}} \circ\mT^{-1}_{\gamma}\circ \xi = J_{m}(\xi)\, ,
\end{eqnarray}
which hold for all $\xi\in \Gamma(E_{m},\cS_{m})$. Notice that the
$\mT_{\gamma}$-action symbols $``\cdot"$ and $``\circ"$ in equations
\eqref{eq:D0m} and \eqref{eq:J0m} have a different meaning depending
on the step and the sections involved. In equation \eqref{eq:D0m} we
have:
\begin{equation*}
D^{\ad}_{m_{0}} (\mT^{-1}_{\gamma}\circ\xi \circ T_{\gamma}) \in \Omega^{1}(E_{m_{0}},\cS_{m_{0}})
\end{equation*}
and evaluating on a vector field $X\in\mathfrak{X}(E_{m})$ we obtain:
\begin{equation*}
\iota_{X} D^{\ad}_{m}(\xi) = \mT_{\gamma}\circ \left(\iota_{T^{-1}_{\gamma\ast}X} D^{\ad}_{m_{0}} (\mT^{-1}_{\gamma}\circ\xi \circ T_{\gamma})\right)\circ T^{-1}_{\gamma}\in \Gamma(E_{m},\cS_{m})\, .
\end{equation*}
Likewise, in the second step of equation \eqref{eq:J0m} we use:
\begin{equation*}
\mT^{-1}_{\gamma}\cdot\xi = \mT^{-1}_{\gamma}\circ \xi \circ T_{\gamma}\in \Gamma(E_{m_{0}},\cS_{m_{0}}) \, , \quad \xi\in \Gamma(E_{m},\cS_{m})\, ,
\end{equation*}
where $\mT^{-1}_{\gamma}$ in the left hand side acts through
$``\cdot"$ as the isomorphism of modules:
\begin{equation*}
\mT^{-1}_{\gamma}\colon \Gamma(E_{m},\cS_m) \to \Gamma(E_{m_{0}},\cS_{m_{0}})\, ,
\end{equation*}
whereas $\mT^{-1}_{\gamma}$ in the right-hand-side acts through $``\circ"$
as composition with the unbased automorphism
$\mT^{-1}_{\gamma}\colon \cS_{m}\to \cS_{m_{0}}$ of vector bundles
covering $T^{-1}_{\gamma}$. 
\end{remark}

\noindent
The following proposition follows from the previous discussion and
summarizes the isomorphism type of a fundamental bundle field
$\bPsi$. Similar remarks apply for the fundamental bundle form
$\bTheta$.
\begin{proposition}
Let $\bcD = (\pi,\bPhi,\bXi)$ be a scalar-electromagnetic structure of
type:
\begin{equation*}
\cD = (\cM, \cG, \Phi,\cS,D,\omega,J)
\end{equation*}
Then the isomorphism type of the fundamental bundle form $\bTheta$
of $\bcD$ is given by:
\begin{equation*}
\Theta \eqdef \Theta_{m_{0}} = D^{\ad}J \in \Omega^{1}(\cM,End(\cS))~~,
\end{equation*}
while the isomorphism type of the fundamental bundle field $\bPsi$
of $\bcD$ is given by:
\begin{equation*}
\Psi \eqdef \Psi_{m_{0}} = \left(\sharp_{\cG}\otimes \mathrm{Id}_{End(\cS)} \right) D^{\ad}J \in \mathfrak{X}(\cM,End(\cS)) \, .
\end{equation*}
\end{proposition}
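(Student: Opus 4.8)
The plan is to reduce the statement to the restriction identities \eqref{eq:isoTheta} and \eqref{eq:isoPsi}, which have already been established, together with a fiberwise identification of $\bD^{\ad}\bJ$ restricted to $E_{m_{0}}$ with the adjoint of the fiber connection applied to the fiber taming. First I would fix $m_{0}\in M$ and unwind the definition of the fiber restriction $\cD_{m_{0}}=(E_{m_{0}},h_{m_{0}},\Phi_{m_{0}},\cS_{m_{0}},\omega_{m_{0}},D_{m_{0}},J_{m_{0}})$: under the identification $E_{m_{0}}\simeq(\cM,\cG)$, the flat symplectic connection $D_{m_{0}}$ is by construction the connection induced by $\bD$ on $\bcS|_{E_{m_{0}}}$ along vertical directions, and $J_{m_{0}}=\bJ|_{E_{m_{0}}}$. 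Since the connection induced on an endomorphism bundle commutes with restriction to a submanifold along directions tangent to it, I would conclude that $(\bD^{\ad}\bJ)|_{E_{m_{0}}}(X)=D^{\ad}_{m_{0}}J_{m_{0}}(X)$ for every $X\in\mathfrak{X}(E_{m_{0}})$, after using $V|_{E_{m_{0}}}\simeq TE_{m_{0}}$. Because $\bTheta=\bD^{\ad}\bJ$ takes values in $V^{\ast}\otimes End(\bcS)$ — as was noted just after its definition, using verticality of $\bJ$ and $h$-orthogonality of the splitting $TE=H\oplus V$ — this already exhausts $\Theta_{m_{0}}$, so $\Theta_{m_{0}}=D^{\ad}_{m_{0}}J_{m_{0}}=D^{\ad}J\in\Omega^{1}(\cM,End(\cS))$.

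For the fundamental bundle field the only additional ingredient is the behaviour of the musical isomorphism under restriction. Since $H$ and $V$ are $h$-orthogonal, $\sharp_{h}$ respects the splitting $T^{\ast}E=H^{\ast}\oplus V^{\ast}$ and its action on $V^{\ast}$ is governed by the vertical metric $h_{V}$; restricting to $E_{m_{0}}$ and using $h_{V}|_{E_{m_{0}}}=h_{m_{0}}=\cG$ under $V|_{E_{m_{0}}}\simeq TE_{m_{0}}$, I would identify $(\sharp_{h})|_{E_{m_{0}}}$ with $\sharp_{\cG}$ acting on $\Omega^{1}(E_{m_{0}},End(\cS_{m_{0}}))$. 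Combining this with the identity for $\bTheta$ gives $\Psi_{m_{0}}=(\sharp_{\cG}\otimes\mathrm{Id}_{End(\cS)})D^{\ad}J\in\mathfrak{X}(\cM,End(\cS))$.

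To justify speaking of \emph{the} isomorphism type, I would invoke \eqref{eq:isoTheta} and \eqref{eq:isoPsi}: for any $m\in M$ and any path $\gamma\in\cP(M)$ from $m_{0}$ to $m$, the extended horizontal transport $\mT_{\gamma}$ is an isomorphism of tamed flat symplectic vector bundles covering the isometry $T_{\gamma}\colon(E_{m_{0}},h_{m_{0}})\to(E_{m},h_{m})$, hence it intertwines the adjoint connections and the tamings and carries $\Theta_{m_{0}}$ to $\Theta_{m}$ and $\Psi_{m_{0}}$ to $\Psi_{m}$ — this is exactly the computation performed in Remark \ref{remark:Tgammaction}. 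Thus $(\Theta,\Psi)$ is well defined up to isomorphism independently of the chosen base point, which is the assertion of the proposition.

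The only delicate point I anticipate is bookkeeping rather than substance: one must keep straight the two $\mT_{\gamma}$-actions (the action denoted $\cdot$ on modules of sections versus composition $\circ$ with the unbased bundle automorphism) when passing through the restriction identities, and one must apply the statement ``the induced connection commutes with restriction'' in the correct directions, namely those tangent to the fiber, i.e.\ the vertical ones. Both of these are already handled in the material preceding the proposition, so the argument is essentially an assembly of the pieces established above.
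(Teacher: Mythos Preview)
Your proposal is correct and follows essentially the same route as the paper: the paper does not give a separate proof but states that the proposition ``follows from the previous discussion,'' meaning precisely the verticality of $\bTheta$, the identification $V|_{E_{m_0}}\simeq TE_{m_0}$, and the transport identities \eqref{eq:isoTheta}--\eqref{eq:isoPsi} established in Remark~\ref{remark:Tgammaction}. You have simply spelled out explicitly the restriction-to-the-fiber step (that $(\bD^{\ad}\bJ)|_{E_{m_0}}=D^{\ad}_{m_0}J_{m_0}$ and $(\sharp_h)|_{E_{m_0}}=\sharp_{\cG}$), which the paper leaves implicit.
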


\noindent
Now let $s\in \Gamma(\pi)$ be a section of $\pi\colon (E,h)\to
(M,g)$. The pull-backs through $s$ of $\bPhi$ and $\bPsi$ satisfy:
\begin{equation*}
\bPhi^{s}\in \Gamma(M,(V^{\ast})^{s}\otimes End(\bcS^{s}))\, , \qquad \bPsi^{s}\in \Gamma(E,V^{s}\otimes End(\bcS^{s}))\, .
\end{equation*}
These objects will be used in section \ref{sec:GESM} in the formulation of GESM theories.

\section{Generalized Einstein-Section-Maxwell theories}
\label{sec:GESM}

In this section, we define Generalized Einstein-Section-Maxwell
theories, or GESM theories for short, in terms of a set of partial
differential equations globally formulated on $(M,g)$. A GESM theory
is a generalization of the generalized Einstein-Scalar-Maxwell theory
introduced in reference \cite{GESM}, obtained by promoting the
standard sigma model of the latter to a section sigma model in which
the scalar map is replaced by a section of the corresponding
Lorentzian submersion. When the horizontal distribution $H\subset TE$
of the section sigma model is integrable, GESM theories can be used to
systematically generalize in a globally nontrivial way the bosonic
sector of four-dimensional ungauged supergravity theories.

\subsection{The polarization condition and the electromagnetic equation}

Before defining the complete GESM theory, we consider in this subsection
a truncated version consisting of a section sigma model coupled to a
generically non-trivial duality bundle. We call such theory a
{\em Generalized Section-Maxwell theory}, or GSM theory for short.

\begin{definition}
Let $\bcD=(\pi,\bPhi, \bXi)$ be a scalar-electromagnetic bundle
with associated Kaluza-Klein space $\pi\colon (E,h)\to
(M,g)$ and let $s\in \Gamma(\pi)$ be a section of $\pi$. An {\em
electromagnetic field strength} is a two-form
$\bcV\in \Omega^2(M,\bcS^s)$ having the following properties:
\begin{enumerate}[1.]
\itemsep 0.0em
\item $\bcV$ is positively-polarized with respect to $\bJ^s$, i.e. the
  following relation is satisfied:
\be
\ast_g \bcV=-\bJ^s\bcV\, .
\ee
\item $\bcV$ satisfies the {\em electromagnetic equation} with respect
  to $s$:
\ben
\label{GaugeEOM}
\dd_{\bD^s} \bcV=0~~.
\een
\end{enumerate} 
\end{definition}		
		
\noindent 
We denote by $\Omega^{2+,s}_{g,\bcS,\bJ}$ the sheaf of $\bcS^s$-valued
two-forms which are positively-polarized with respect to $\bJ^s$, thus
$\Omega^{2+,s}_{g,\bcS,\bJ}(M)$ denotes the space of $\bcS^s$-valued
two-forms on $M$ which are positively-polarized with respect to
$\bJ^s$.

\begin{definition}
The \emph{sheaf of electromagnetic field strengths} is the sheaf
$\cE_{g,\bD,\bJ}^s$ of vector spaces defined as follows:
\be
\cE_{g,\bD,\bJ}^s(U) \eqdef \{\bcV\in \Omega^{2+,s}_{g,\bcS,\bJ}(U)\, , \,\, | \,\, \dd_{\bD^s}\bcV=0\}\, ,
\ee
for every open set $U\subset M$. The vector space of global
electromagnetic field strengths in $M$ is then given by
$\cE_{g,\bD,\bJ}^s(M)$.
\end{definition}

\begin{definition}
Let $\bcD = (\pi,\bPhi,\bXi)$ be a scalar-electromagnetic bundle with
associated Kaluza-Klein space $\pi\colon (E,h)\to (M,g)$. The {\em
configuration sheaf} $\Conf_{\bcD}$ of a GSM theory associated to
$\bcD$ is the sheaf of sets defined as follows:
\be
\bConf_{\bcD}(U)\eqdef \left\{ (s,\bcV)\, , \,\, |\,\, s\in \Gamma(\pi|_{U}) \, , \, \bcV\in \Omega^{2+,s}_{g,\bcS,\bJ}(U) \right\}\, ,
\ee
for every open set $U\subset M$.
\end{definition}

\begin{definition}
Let $\bcD = (\pi,\bPhi,\bXi)$ be a scalar-electromagnetic bundle with
associated Kaluza-Klein space $\pi\colon (E,h)\to (M,g)$. The
GSM theory associated to $\bcD$ is defined by the following set of
partial differential equations on $(M,g)$\footnote{The meaning of the
symbol $(-,-)$ will be explained in a moment, see definition \ref{def:tep}.}:
\begin{equation*}
\cE_{S}(s,\bcV) \eqdef \tau^{v}(h,s) + (\grad_{h} \bPhi)^{s}- \frac{1}{2} (\ast \bcV , \bPsi^s\bcV) = 0\, , \quad \cE_{K}(s,\bcV)\eqdef \dd_{\bD^s} \bcV=0\, ,
\end{equation*}
with unknowns given by pairs $(s,\bcV)\in \Conf_{\bcD}(M)$.
\end{definition}

\begin{definition}
The {\em solution sheaf} of a GSM theory associated to a
scalar-electromagnetic bundle $\bcD = (\pi,\bPhi,\bXi)$ is the sheaf
of sets is given by:
\be
\bSol^g_{\bcD}(U)\eqdef \{(s,\bcV)\in \Conf_{\bcD}(U)\, , \,\, |\,\, \cE_{S}(s,\bcV) = 0\, , \quad  \cE_{K}(s,\bcV) = 0\}\, ,
\ee
for every open set $U\subset M$.
\end{definition}

\noindent
GSM theories generalize in a non-trivial way four-dimensional
Maxwell-theory not only by coupling the theory to a section-sigma model
instead of a standard sigma model, but also by coupling
the \emph{field strengths} appearing in the standard formulation of
the theory to a generically non-trivial flat symplectic vector
bundle. Many interesting aspects of these theories, such as their
quantization and existence of solutions, remain to be explored
even for the case when the theory is not coupled to a section sigma
model. 

\subsection{The complete GESM theory}
\label{sec:completetheoryII}

We are now ready to define GESM theories. In physics terms, a
GESM-theory is a theory of gravity coupled to an arbitrary number of
scalars and an arbitrary number of gauge fields with a potential which
depends exclusively on the scalars.  The formulation given below is
appropriate for non-contractible space-times and should be sufficiently
rich to accommodate the classical limit of string theory U-folds when
the GESM theory is considered as a global formulation of the bosonic
sector of four-dimensional supergravity on such space-times.

\begin{definition}
Let $\bcD = (\pi,\bPhi,\bXi)$ be a scalar-electromagnetic bundle with
associated Lorentzian submersion $\pi\colon (E,h)\to (M,g)$. The {\em
  configuration sheaf} of a GESM-theory defined by $\bcD$ is the sheaf
of sets given by:
\be
\bConf_{\bcD}(U)\eqdef \left\{ (g,s,\bcV)\, , \,\, |\,\, g\in\Met_{3,1}(U)\, , \,   s\in \Gamma(\pi|_{U}) \, , \, \bcV\in \Omega^{2+,s}_{g,\bcS,\bJ}(U) \right\}\, ,
\ee
for every open set $U\subset M$.
\end{definition}

\noindent
The {\em exterior pairing} $(~,~)_g$ is the pseudo-Euclidean metric
induced by $g$ on the exterior bundle $\wedge_M$.

\begin{definition}
\label{def:tep}
The {\em twisted exterior pairing} $(~,~):=(~,~)_{g,\bQ^{s}}$ is the
unique pseudo-Euclidean scalar product on the twisted exterior bundle
$\wedge_M(\bcS^{s})$ which satisfies:
\begin{equation*}
(\rho_1\otimes \xi_1,\rho_2\otimes \xi_2)_{g,\bQ^{s}}=(\rho_1,\rho_2)_g \bQ^{s}(\xi_1,\xi_2)\, ,
\end{equation*}
for any $\rho_1,\rho_2\in \Omega^{\bullet}(M)$ and any
$\xi_1,\xi_2\in \Gamma(M,\bcS^{s})$. Here $\bQ^{s}(\xi_1,\xi_2)
= \bomega^{s}(\bJ^{s} \xi_1,\xi_2)$ and the superscript denotes pull-back by
$s$.
\end{definition}

\noindent For any vector bundle $W$, we trivially extend the twisted
exterior pairing to a $W$-valued pairing (which for simplicity we
denote by the same symbol) between the bundles $W\otimes
(\wedge_M(\bcS^{s}))$ and $\wedge_M(\bcS^{s})$. Thus:
\be
(e \otimes \eta_1,\eta_2)_{g,\bQ^{s}} \eqdef  e \otimes (\eta_1,\eta_2)_{g,\bQ^{s}}\, , \quad\forall\, e\in \Gamma(M,W)\, , \quad \forall\, \eta_1,\eta_2\in \wedge_M(\bcS^{s})\, .
\ee

\noindent 
The {\em inner $g$-contraction of 2-tensors} is the bundle morphism
$\oslash_g:(\otimes^2T^\ast M)^{\otimes 2}\rightarrow\otimes^2 T^\ast
M$ uniquely determined by the condition:
\begin{equation*}
(\alpha_1\otimes\alpha_2)\oslash_g (\alpha_3\otimes \alpha_4)=(\alpha_2,\alpha_3)_g\alpha_1\otimes \alpha_4\, , \quad \forall\, \alpha_1, \alpha_2, \alpha_3, \alpha_4\in \Omega^1(M)\, .
\end{equation*}
We define the \emph{inner $g$-contraction of 2-forms} to be the
restriction of $\oslash_g$ to $\wedge^2 T^\ast M \otimes \wedge^2
T^\ast M\stackrel{\oslash_g}{\rightarrow} \otimes^2 T^\ast M$.

\begin{definition}
The {\em twisted inner contraction} of $\bcS^{s}$-valued 2-forms is
the unique morphism of vector bundles
$\loslash\colon\wedge_M^2(\bcS^{s})\times_M\wedge_M^2(\bcS^{s})\rightarrow\otimes^2(T^\ast
M)$ which satisfies the condition:
\begin{equation*}
(\rho_1\otimes \xi_1)\loslash (\rho_2\otimes \xi_2)= \bQ^{s} (\xi_1,\xi_2) \rho_1 \oslash_g\rho_2\, ,
\end{equation*}
for all $\rho_1,\rho_2\in \Omega^2(M)$ and all $\xi_2,\xi_2\in \Gamma(M,\bcS^{s})$. 
\end{definition}

\begin{definition}
Let $\bcD = (\pi,\bPhi,\bXi)$ be a scalar-electromagnetic bundle with
associated Kaluza-Klein space $\pi\colon (E,h)\to (M,g)$. The 
GESM theory associated to $\bcD$ is defined by the following set of
partial differential equations on $(M,g)$:
\begin{itemize}
\item The Einstein equations\footnote{We denote Einstein's gravitational constant by $\kappa$.}:	
\begin{equation}
\cE_{E}(g,s,\bcV) \eqdef \G(g) -\kappa\,\mathrm{T}(g,s,\bcV) = 0\, ,
\end{equation}
where $\mathrm{T}(g,s,\bcV)\in\Gamma(M,S^{2}T^{\ast}M)$ is the
energy-momentum tensor of the theory, which is given by:
\begin{equation*}
\mathrm{T}(g,s,\bcV)  = g\, e^{v}_{0}(g,h,s) - h^{s}_{V} + \frac{g}{2} \bPhi^{s}  + 2~\bcV \loslash \bcV \, .
\end{equation*} 

\item The scalar equations:
\ben
\label{sc}
\cE_{S}(g,s,\bcV) \eqdef \tau^{v}(g,h,s) + (\grad_{h} \bPhi)^{s}- \frac{1}{2} (\ast \bcV , \bPsi^s\bcV) = 0\, .
\een
\item The electromagnetic (or Maxwell) equations:
\begin{equation}
\cE_{K}(g,s,\bcV) \eqdef\dd_{\bD^s} \bcV=0\, .
\end{equation}	

\end{itemize}

\noindent
with unknowns given by triples $(g,s,\bcV)\in \Conf_{\bcD}(M)$.
\end{definition}		

\begin{definition}
The {\em solution sheaf} $\Sol_{\bcD}$ of a GESM theory associated to
a scalar-electromagnetic bundle $\bcD = (\pi,\bPhi,\bXi)$ is the sheaf
of sets given by:
\be
\bSol_{\bcD}(U)\eqdef \{(g,s,\bcV)\in \Conf_{\bcD}(U) \,\, |\,\, \cE_{E}(g,s,\bcV) = 0\, , \quad  \cE_{S}(g,s,\bcV) = 0\, , \quad \cE_{K}(g,s,\bcV) = 0\}
\ee
for every open set $U\subset M$.
\end{definition}

\noindent For any open set $U\subset M$ and any scalar-electromagnetic structure
$\cD$, let $\Conf_\cD(U)$ and $\Sol_\cD(U)$ denote the configuration
and solution sets of the generalized Einstein-Scalar-Maxwell theory
defined in \cite{GESM}.

\begin{theorem}
\label{thm:equivalence}
Let $\bcD = (\pi,\bPhi,\bXi)$ be an scalar-electromagnetic bundle of
type $\cD=(\cM,\cS,\omega,D,J)$, whose Kaluza-Klein space is
integrable. Then any special trivializing atlas
$(U_\alpha,\mq_\alpha)_{\alpha\in I}$ of $\bcD$ induces bijections:
\begin{equation*}
\mathfrak{C}_\alpha \colon \bConf_{\bcD}(U_{\alpha})\xrightarrow{\sim}  \Conf_{\cD}(U_{\alpha})
\end{equation*}
which restrict bijections:
\begin{equation*}
\mathfrak{C}_\alpha \colon \bSol_{\bcD}(U_{\alpha}) \xrightarrow{\sim} \Sol_{\cD}(U_{\alpha})~~.
\end{equation*}
\end{theorem}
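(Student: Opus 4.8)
The plan is to reduce the statement to the metrically trivial case via the special trivializing atlas, and then to check that, over each convex chart $U_\alpha$, the three families of equations defining the GESM theory go over exactly into the three families of equations of the ESM theory of \cite{GESM}. First I would invoke the results of Subsections~\ref{spectriv} and~\ref{spectrivgauge}: since the Kaluza-Klein space underlying $\bcD$ is integrable, the trivialization $q_\alpha\colon E_\alpha\to U_\alpha\times\cM$ is an isometry onto the \emph{product} Kaluza-Klein space $(U_\alpha\times\cM,g_\alpha\times\cG)$, and $\mq_\alpha$ is an isomorphism of scalar-electromagnetic structures from $\bcD_\alpha$ to the \emph{metrically trivial} scalar-electromagnetic bundle $\bcD^0_\alpha=(\pi^0_\alpha,\Phi\circ p^0_\alpha,(\cS^{p^0_\alpha},\omega^{p^0_\alpha},D^{p^0_\alpha},J^{p^0_\alpha}))$ over $U_\alpha$. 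Hence it suffices to produce, for $\bcD^0_\alpha$, a bijection $\bConf_{\bcD^0_\alpha}(U_\alpha)\xrightarrow{\sim}\Conf_{\cD}(U_\alpha)$ restricting to a bijection on solution sheaves, and then to conjugate it by $(q_\alpha,\mq_\alpha)$ to obtain $\mathfrak{C}_\alpha$.

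Next I would define the bijection on configurations. Given $(g,s,\bcV)\in\bConf_{\bcD}(U_\alpha)$, set $\varphi^\alpha\eqdef\hat{q}_\alpha\circ s=\ungraph(q_\alpha\circ s)\in\cC^\infty(U_\alpha,\cM)$ and let $V^\alpha$ be the image of $\bcV$ under the fibrewise-linear isomorphism induced by $\mq_\alpha$ along $s$, which identifies $\bcS^s$ with $(\cS^{p^0_\alpha})^{q_\alpha\circ s}=\cS^{\varphi^\alpha}$; thus $V^\alpha\in\Omega^2(U_\alpha,\cS^{\varphi^\alpha})$. The graph/ungraph correspondence (already shown to be bijective) makes $s\mapsto\varphi^\alpha$ a bijection onto $\cC^\infty(U_\alpha,\cM)$, and $\bcV\mapsto V^\alpha$ a linear bijection. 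Because $\mq_\alpha$ intertwines $\bJ^s$ with $J^{\varphi^\alpha}$, is fibrewise symplectic, and covers the identity of $U_\alpha$ (so it commutes with $\ast_g$), the polarization relation $\ast_g\bcV=-\bJ^s\bcV$ is equivalent to $\ast_g V^\alpha=-J^{\varphi^\alpha}V^\alpha$; hence $\mathfrak{C}_\alpha(g,s,\bcV)\eqdef(g,\varphi^\alpha,V^\alpha)$ is a well-defined bijection $\bConf_{\bcD}(U_\alpha)\to\Conf_{\cD}(U_\alpha)$.

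Then I would verify that $\mathfrak{C}_\alpha$ carries solutions to solutions. The key input is the Remark on the product Kaluza-Klein space in Section~\ref{sec:scalarsection}: since the Kaluza-Klein metric of $\bcD^0_\alpha$ is $g_\alpha\times\cG$, we have $s^\ast(h)=(\varphi^\alpha)^\ast\cG$, $\dd^v s=\dd\varphi^\alpha$, $e^v_0(h,s)=e_0(\varphi^\alpha)$, $h^s_V=(\varphi^\alpha)^\ast\cG$ and $\tau^v(h,s)=\tau(\varphi^\alpha)$, while $\bPhi^s=\Phi\circ\varphi^\alpha$ and $(\grad_h\bPhi)^s$ goes over to $(\grad\Phi)^{\varphi^\alpha}$. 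For the electromagnetic sector, $\mq_\alpha$ is an isomorphism of flat symplectic vector bundles with taming, so along $s$ it carries $\bD^s$ to $D^{\varphi^\alpha}$, whence $\dd_{\bD^s}\bcV=0\Leftrightarrow\dd_{D^{\varphi^\alpha}}V^\alpha=0$; it also carries $\bQ^s=\bomega^s(\bJ^s\cdot,\cdot)$ to $Q^{\varphi^\alpha}=\omega^{\varphi^\alpha}(J^{\varphi^\alpha}\cdot,\cdot)$, hence the twisted exterior pairing $(-,-)$, the twisted inner contraction $\loslash$ and the fundamental bundle field $\bPsi$ (whose fibrewise isomorphism type is $\Psi=(\sharp_\cG\otimes\mathrm{Id})D^{\ad}J$ by the Proposition of Subsection~\ref{sec:fundamental}) go over to the corresponding objects of \cite{GESM}. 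Consequently the source terms $\frac{1}{2}(\ast\bcV,\bPsi^s\bcV)$ and $2\,\bcV\loslash\bcV$ become the ESM source terms, the energy-momentum tensor $\mathrm{T}(g,s,\bcV)$ becomes that of the ESM theory, and---since $\G(g)$ is unaffected---the triple $(\cE_E,\cE_S,\cE_K)$ over $U_\alpha$ is identified with the triple of ESM equations; so $\mathfrak{C}_\alpha$ restricts to a bijection $\bSol_{\bcD}(U_\alpha)\to\Sol_{\cD}(U_\alpha)$.

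The main obstacle is precisely this last identification of the electromagnetic data: one must check that pulling the flat connection $\bD$ and the fundamental bundle field $\bPsi$ back along a section $s$ and then transporting through $\mq_\alpha$ yields exactly $D^{\varphi^\alpha}$ and $\Psi^{\varphi^\alpha}$ built from the fixed structure $\cD$, which amounts to keeping straight the two actions ``$\cdot$'' and ``$\circ$'' of extended horizontal transport in the presence of a section, together with the vertical/horizontal splitting. Integrability is used in an essential way here: it is what makes $\mq_\alpha$ land in a \emph{metrically} trivial bundle, so that the $h$-orthogonal splitting $TE^0_\alpha=H\oplus V$ agrees with the product splitting and the vertical projection $P^s_V$ becomes the obvious one. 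Granting this, the remaining equation-by-equation comparison is the lengthy but routine verification, carried out in local adapted coordinates as in Appendix~\ref{app:localadapted}, that the GESM Lagrangian was built so as to reduce to the ESM Lagrangian of \cite{GESM} in the product case.
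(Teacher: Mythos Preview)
Your proposal is correct and follows essentially the same approach as the paper: define $\mathfrak{C}_\alpha$ by sending $(g,s,\bcV)$ to $(g,\varphi^\alpha,\cV^\alpha)$ via $\varphi^\alpha=\hat q_\alpha\circ s$ and $\cV^\alpha=\mq_\alpha^{(s)}(\bcV)$, observe that the polarization condition is preserved since $\mq_\alpha^{(s)}$ intertwines $\bJ^s$ with $J^{\varphi^\alpha}$, and then verify the three equations match by a direct computation that both you and the paper leave as a lengthy but routine exercise. If anything, your writeup is slightly more explicit than the paper's in spelling out which ingredients from earlier sections (the product Kaluza-Klein Remark, the Proposition on $\bPsi$, the role of integrability in making $\bcD^0_\alpha$ metrically trivial) are being invoked at each step.
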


\begin{proof}
Given $(g_\alpha,s_\alpha,\bcV_\alpha)\in \bConf_{\bcD}(U_\alpha)$, we set:
\begin{equation*}
\mathfrak{C}_{\alpha}(g_{\alpha},s_{\alpha},\bcV_{\alpha})\eqdef (g_{\alpha},\varphi^\alpha,\cV^\alpha)~~,
\end{equation*}
where $\varphi^\alpha$ and $\cV_\alpha$ are defined as follows:
\begin{itemize}
\item As explained in Subsection \ref{spectrivgauge}, the special
  trivializing atlas $(U_\alpha,\mq_\alpha)$ is underlined by a
  special trivializing atlas $(U_\alpha,q_\alpha)$ of the fiber bundle
  $\pi:E\rightarrow M$, whose trivializing maps are diffeomorphisms
  $q_\alpha:E_\alpha\eqdef \pi^{-1}(U_\alpha)\rightarrow
  E_\alpha^0=U_\alpha\times \cM$. This allows us to define smooth maps
  $\varphi^{\alpha}\in \cC^\infty(U_\alpha,\cM)$ such that
  $s_\alpha=q_\alpha^{-1}\circ \graph(\varphi^\alpha)$ as in
  Subsection \ref{subsec:sigmaUfold}:
\begin{equation*}
\varphi^{\alpha}\eqdef \hat{q}_{\alpha}\circ s_{\alpha} = \ungraph(q_\alpha\circ s_\alpha)=p^{0}_{\alpha}\circ q_{\alpha}\circ s_{\alpha}~~.
\end{equation*}
\item As explained in Subsection \ref{spectrivgauge}, the maps
  $\mq_\alpha:(\bcS_\alpha, \bomega_\alpha, \bD_\alpha,
  \bJ_\alpha)\stackrel{\sim}{\rightarrow}
  (\bcS^0_{\alpha},\bomega^0_{\alpha}, \bD^0_{\alpha},
  \bJ^0_{\alpha})=(\cS^{p_\alpha^0}, \omega^{p_\alpha^0},
  D^{p_\alpha^0}, J^{p_\alpha^0})$ are unbased isomorphisms of
  electromagnetic structures which linearize the diffeomorphisms
  $q_\alpha$. They induce isomorphisms:
\be
\mq_\alpha^{(s_\alpha)}:(\bcS_\alpha^{s_\alpha}, \bomega_\alpha^{s_\alpha}, \bD_\alpha^{s_\alpha},
\bJ_\alpha^{s_\alpha})\stackrel{\sim}{\rightarrow}
(\cS^{\varphi^{\alpha}}, \omega^{\varphi^{\alpha}}, D^{\varphi^{\alpha}},
J^{\varphi^{\alpha}})\eqdef \cD^{\varphi^\alpha}\, ,
\ee
where we noticed that
$(\bcS^0_{\alpha})^{s_\alpha}=(\cS^{p^0_{\alpha}})^{s_\alpha}=\cS^{\varphi^\alpha}$ and
we used the fact that the horizontal part of $\bD^0_{\alpha}$ is the
trivial flat connection. This allows us to identify $\bcV_\alpha$ with the
$\cS^{\varphi^{\alpha}}$-valued two-form defined on $U_\alpha$
through:
\ben
\label{cVlocdef}
\cV^{\alpha}\eqdef \mq_\alpha^{(s_\alpha)}(\bcV_\alpha)\in \Omega^2(U_\alpha,\cS^{\varphi^{\alpha}})\, .
\een
The relation $J^{\varphi^{\alpha}}\circ
\mq_\alpha^{(s_\alpha)}=\mq_\alpha^{(s_\alpha)} \circ \bJ_\alpha^{s_\alpha}$ implies:
\ben
\label{eomcorresp}
\bcV_\alpha\in \Omega^{2+,s_\alpha}_{g_\alpha, \bcS^{s_\alpha}_{\alpha},\bJ^{s_\alpha}_{\alpha}}(U_\alpha)~~\mathrm{iff}~~ \cV^{\alpha}\in \Omega^{2+,\varphi^{\alpha}}_{g_{\alpha},\cS^{\varphi^{\alpha}},J^{\varphi^{\alpha}}}(U_\alpha)~~
\een
\end{itemize}
\noindent
The fact that $\mathfrak{C}_\alpha$ is bijective follows immediately
from its definition.  The fact that $\mathfrak{C}_{\alpha}$ induces a
bijection between solution sets follows from the equivalences:
\beqan
\label{equiv}
& \cE^{\bcD}_{E}(g_{\alpha},s_{\alpha},\bcV_{\alpha}) = 0\Leftrightarrow \cE^{\cD}_{E}(g_{\alpha},\phi^\alpha,\cV_\alpha)=0~~,~~\cE^{\bcD}_{S}(g_{\alpha},s_{\alpha},\bcV_{\alpha}) = 0\Leftrightarrow 
\cE^{\cD}_{S}(g_{\alpha},\phi^\alpha,\cV_\alpha)=0 \nn \\ 
& \cE^{\bcD}_{K}(g_{\alpha},s_{\alpha},\bcV_{\alpha}) = 0\Leftrightarrow \cE^{\cD}_{K}(g_{\alpha},\phi^\alpha,\cV_\alpha)=0~~.
\eeqan
This follows by a direct but somewhat lengthy computation using
relations \eqref{eomcorresp}, which we will not reproduce here. For
example, it is easy to see that the relation
$\dd_{D^{\varphi^{\alpha}}}\circ\mq_\alpha^{(s_\alpha)}=\mq_\alpha^{(s_\alpha)}\circ\dd_{\bD^{s_\alpha}}$
implies that the electromagnetic equation
$\dd_{\bcD^{s_\alpha}}\bcV_\alpha=0$ is equivalent with the equation
$\dd_{D^{\varphi^{\alpha}}}\cV^{\alpha}=0$.
\end{proof}

\begin{remark}
Together with the results of \cite{GESM}, Theorem
\ref{thm:equivalence} shows that a GESM theory represents an
admissible global extension of the locally-formulated bosonic theories
considered by physicists in the context of supergravity.
\end{remark}

\subsection{U-fold interpretation of globally-defined solutions in the integrable case}
\label{section:Ufoldint}

Let $\bcD=(\pi, \bPhi, \bXi)$ be an {\em integrable}
scalar-electromagnetic bundle and let
$(U_\alpha,\mq_\alpha)_{\alpha\in I}$ be a special trivializing atlas
for $\pi$. Let $s\in \Gamma(\pi)$ be a global section of $\pi$ and
$\cV\in\Omega^2(M,\bcS^s)$ be a two-form defined on $M$ and valued in
the pulled-back bundle $\bcS^s$. Let:
\be
g_\alpha\eqdef g|_{U_\alpha}~~,~~s_\alpha\eqdef s|_{U_\alpha}~~,~~\bcV_\alpha\eqdef \bcV|_{U_\alpha}~~
\ee
and $\mathfrak{C}_{\alpha}(g_{\alpha},s_{\alpha},\bcV_{\alpha})=
(g_{\alpha},\varphi^\alpha,\cV^\alpha)$.  Relations \eqref{cVlocdef}
imply the gluing conditions:
\ben
\label{cVgluing}
\cV^{\beta}|_{U_{\alpha\beta}}=\mf_{\alpha\beta}^s \cV^{\alpha}|_{U_{\alpha\beta}}~~,
\een
which accompany the gluing conditions \eqref{phigluing} for $\varphi^{\alpha}$ 
and the trivial gluing conditions:
\ben
\label{ggluing}
g_\alpha|_{U_{\alpha\beta}}=g_\beta|_{U_{\alpha\beta}}
\een
for $g_\alpha$. When $(g,s,\bcV)\in \Sol_{\bcD}(M)$, we have
$(g_\alpha,s_\alpha,\bcV_\alpha)\in \Sol_{\bcD}(U_\alpha)$ and Theorem
\ref{thm:equivalence} implies
$(g_{\alpha},\varphi^\alpha,\cV^\alpha)\in \Sol_{\cD}(U_\alpha)$.

Conversely, any family $(g_\alpha, \varphi^\alpha,
\cV^{\alpha})_{\alpha\in I}$ with $(g_\alpha, \varphi^\alpha,
\cV^{\alpha})\in \Sol_\cD(U_\alpha)$ which satisfies conditions
\eqref{ggluing}, \eqref{phigluing} and \eqref{cVgluing} is obtained by
restriction to $U_\alpha$ from a uniquely-determined global solution
$(g, s,\bcV)$ of the equations of motion of the GESM theory defined by
the scalar-electromagnetic bundle $\bcD$. Once again, this can be
formulated in sheaf language, a formulation whose details we leave to
the reader. These observations justify the following:
		
\begin{definition}
Let $\cD$ be a scalar-electromagnetic structure. A {\em classical
  locally-geometric U-fold} of type $\cD$ is a global solution
$(g,s,\bcV)\in \Sol_{\bcD}(M)$ of the equations of motion of a GESM
theory defined by a scalar-electromagnetic bundle $\bcD$ of type $\cD$
with integrable Kaluza-Klein space. We say that a locally-geometric
U-fold is {\em trivial} if the corresponding extended holonomy
group $\mathbb{G}$ is the trivial group.
\end{definition}

\section{A simple example}
\label{sec:SS}

In this section, we show that the celebrated Scherk-Schwarz
construction \cite{SS} can be understood as an instance of the
integrable case of the general models considered in this paper. Notice
from the outset that, when referring to the Scherk-Schwarz
construction, we do {\em not} assume the presence of any continuous
isometry group of the scalar manifold and hence that we do {\em not}
gauge any such putative group.

Let $M=\R^3\times \rS^1$, where $\rS^1=\{\sigma=e^{2\pi \i
  \theta}|\theta\in \R\}=\U(1)$ denotes the unit circle and let
$\pi_1:M\rightarrow \R^3$ and $\pi_2:M\rightarrow \rS^1$ be the
canonical projections. Let $1\in \rS^1$ be the point which corresponds
to $\theta=0$ and let $g\in \Met_{3,1}(M)$ be a Lorentzian metric of
the form:
\be
\dd s_g^2=\dd s_{g_3}^2+(2\pi)^2R^2\dd\theta^2~~,
\ee
where $g_3$ is a Lorentzian metric on $\R^3$ and $R$ is a positive
parameter. In this example, we have $\pi_1(M)\simeq \pi_1(\rS^1)\simeq
\Z$. Setting $x=(x^0,x^1,x^2)\in \R^3$, let $C\eqdef \{0\}\times
\rS^1\subset M$ be the circle defined inside $M$ by the equation
$x=0$. Orienting $C$ in the direction of decreasing $\theta$ (the
``clockwise'' orientation), its free homotopy class gives a generator
of $\pi_1(M)$.

Given an $n$-dimensional Riemannian manifold $(\cM,\cG)$, an
integrable Kaluza-Klein space $(E,h)$ over $(M,g)$ with fibers
isometric to $(\cM,\cG)$ is determined by a morphism of groups
$\rho:\Z\rightarrow \Iso(\cM,\cG)$, i.e. by an element $U_C\eqdef
\rho(1)\in \Iso(\cM,\cG)$ which describes the Ehresmann holonomy of
$E$ along $C$. Let $\pi_E:E\rightarrow M$ be the projection of $E$ and
$U$ be its Ehresmann transport.

Consider the restriction of $E$ to $C$, which has total space $F\eqdef
\pi_E^{-1}(C)$ and projection $\pi_F\eqdef \pi_E|_{F}$. For
simplicity, we denote the fiber of $F$ above a point $(0,\sigma)\in C$
by $F_\sigma$. For each point $m=(x,\sigma)\in M$, the Ehresmann
transport along the curve $\gamma_m:[0,1]\rightarrow M$ defined
through $\gamma_m(t)=((1-t)x,\sigma)$ gives an isometry
$U_m:E_m\rightarrow F_\sigma$. These isometries allow us to identify
$E$ with the pull-back bundle $\pi_2^\ast(F)$ by sending any point
$e\in E_m$ to the point $(m,U_m(e))\in \pi_2^\ast(F)_m=\{m\}\times
F_\sigma$.  Moreover, we can identify the fiber $F_1$ with the
Riemannian manifold $(\cM,\cG)$.

For any $\sigma\in \rS^1=\U(1)$, let $\ell_\sigma:[0,1]\rightarrow C$
be the curve in $C$ defined by $\ell_\sigma(t)=(0,e^{\i
  (1-t)\arg(\sigma) })$, where $\arg(\sigma)\in [0,2\pi)$. Notice that
  this ``clockwise''-oriented curve satisfies
  $\ell_\sigma(0)=(0,\sigma)$ and $\ell_\sigma(1)=(0,1)$. The
  concatenation $\ell_\sigma\cdot\gamma_m$ is a curve which connects
  $m$ to the point $(0,1)\in C$. The Ehresmann transport along this
  curve gives an isometry $V_\sigma\circ U_m:E_m\rightarrow F_1$,
  where $V_\sigma:F_\sigma \rightarrow F_1$ is the Ehresmann transport
  along $\ell_\sigma$. Thus sections $s\in \Gamma(\pi_E)$ of $E$
  correspond to maps $\varphi:M\rightarrow F_1=\cM$ defined through
  the relation:
\ben
\label{VU}
\varphi(x,\sigma)\eqdef (V_\sigma \circ U_m) (s(x,\sigma))~~.
\een
Recall that $U_C=\rho(1)$ denotes the Ehresmann holonomy of $(E,h)$
around $C$, where $C$ is oriented ``clockwise''. Identifying the free
fundamental group $\pi_1(M)$ with the based first homotopy group
$\pi_1(M,(0,1))$, we have $U_C=\lim_{\epsilon\rightarrow 0^+}
V_{\sigma_\epsilon}$, where for $\epsilon\in (0,1)$ we defined
$\sigma_\epsilon\eqdef e^{2\pi \i (1-\epsilon)}$. Using this in
\eqref{VU} gives:
\be
\lim_{\epsilon\rightarrow 0^+}\varphi(x,\sigma_\epsilon)=(U_C\circ U_m) (s(x,1))=U_C(\varphi(x,1))~~,
\ee
where noticed that $\varphi(x,1)=U_m (s(x,1))$ since
$V_1=\id_{\cM}$. Thus $\varphi(x,\sigma)$ has a branch cut in $\sigma$
at $\sigma=1$, with monodromy $U_C=\rho(1)$ along
$\rS^1$. Accordingly, $\varphi(x,\sigma)$ can be extended to a
multivalued function of $\sigma$ having this monodromy.

The identification \eqref{VU} shows that the section sigma model
defined by $(M,g)$ and $(E,h)$ can be interpreted as a variant of the
ordinary sigma model with source $(M,g)$ and target $(\cM,\cG)$ which
is obtained by promoting $\varphi$ from an ordinary map to a
multivalued map having monodromy $U_C=\rho(1)$ around
$\rS^1$. Mathematically, this is equivalent with an ordinary smooth
map $\tvarphi:\R^4\rightarrow \cM$ defined on the universal
covering space $\R^4$ of $M$ and which satisfies the quasi-periodicity
condition:
\ben
\label{phiper}
\tvarphi(x,\theta+1)=U_C(\tvarphi(x,\theta))~~.
\een
Hence the section sigma model of this simple example gives a geometric
encoding of the Scherk-Schwarz construction applied to the ordinary
sigma model with source $(M,g)$ and target $(\cM,\cG)$. One can
similarly check that the Abelian vector fields of the full GESM model
can be described in this example through Scherk-Schwarz type monodromy
conditions --- except that, as in
\cite{GESM,GESMNote,GeometricUfolds}, we allow for a twist by a
symplectic vector bundle defined on $(\cM,\cG)$ in the case when $\cM$
is not simply-connected. Unlike the formulation in terms of universal
covering spaces or multivalued functions (which is quite
cumbersome\footnote{Notice that the the fundamental group $\pi_1(M)$
may not even be finitely-generated, since typically the space-time $M$
is not compact.}  when studying dynamics through the equations of
motion, especially in the presence of Abelian gauge fields), the
description provided by the GESM theory is intrinsic.

In view of the above, the physics-oriented reader may choose to think
of the integrable case of the general construction presented in this
paper as a global geometric formulation of the ``ultimate
generalization'' of the Scherk-Schwarz construction to the case of
arbitrary non-contractible space-times, while allowing for ``duality
twists'' of the Abelian vector fields when the typical fiber $\cM$ of
the generalized Kaluza-Klein space $E$ is not simply-connected. While
this interpretation can help make the construction of this paper seem
less unfamiliar, a few features should be kept in mind in order to
prevent misunderstanding. First, in the example above we do {\em not}
assume that the radius $R$ of the circle is small since we are are not
interested in reducing a four-dimensional theory to three dimensions
but in describing a four-dimensional theory on the non-simply
connected space-time $M=\R^3\times \rS^1$. Second, we do {\em not}
gauge any putative continuous isometry group; in fact, our
construction does not require that $(\cM,\cG)$ admit any continuous group of
isometries (in particular, the group $\Iso(\cM,\cG)$ can be
discrete). Due to this fact, the geometric U-folds constructed in this
paper need {\em not} be solutions of gauged supergravity and the
Abelian gauge fields of a GESM theory need {\em not} originate from
``gauging'' of any continuous isometry group.

\section{Conclusions and further directions}
\label{sec:conclusions}

The present paper extends ordinary Einstein-Scalar-Maxwell theory 
in a few directions so it may be useful to summarize our main results:

\begin{enumerate}[1.]
\item We generalized the ordinary scalar sigma model from a theory of
  maps to a theory of sections of a Lorentzian Kaluza-Klein space,
  whose solutions have an interpretation as ``classical U-folds'' in
  the integrable case. This generalized model is what we call the {\em
    section sigma model}.
\item We gave a global mathematical formulation of the
  coupling of the section sigma model to Abelian gauge fields,
  consistent with electromagnetic-duality and without assuming
  topological triviality of space-time, of the fibers of the
  Kaluza-Klein space or of the duality bundle. This produces what we
  call a {\em Generalized Einstein-Section-Maxwell (GESM) theory}, a
  theory that provides a non-trivial global realization of the generic
  bosonic sector of four-dimensional supergravity in a manner which is
  consistent with electromagnetic duality.
\item When the Kaluza-Klein space is integrable, we showed that the
  GESM theory is {\em locally indistinguishable} from the generalized
  ESM theory of \cite{GESM} and that its global solutions correspond
  to {\em classical} locally-geometric U-folds. This can be viewed as
  a rigorous mathematical definition of classical locally-geometric
  U-folds with scalar and Abelian gauge fields in the language of
  global differential geometry.
\item When the Kaluza-Klein space is not integrable, we showed that
  GESM theories are locally equivalent with modified sigma models
  of maps into the fiber coupled to Abelian gauge fields, which
  differ {\em locally} from ordinary ESM theories.
\end{enumerate}
		
\noindent The locally-geometric classical U-folds described
in this paper further generalize those constructed in in
\cite{GESM}. From the perspective of the present paper, the special
situation discussed in op. cit. corresponds to the particular case
when $\pi$ is a topologically trivial fiber bundle whose Ehresmann
connection is the trivial integrable connection but whose typical
fiber $(\cM,\cG)$ is not simply-connected.

The present work opens up various directions for further
research. First, it is natural to consider the supersymmetrization of
our theories. When the Kaluza-Klein space is integrable, this can be
achieved by building fibered and twisted versions of four-dimensional
supergravities coupled to matter, which are locally-indistinguishable
(but globally different) from the latter. Imposing the appropriate
supersymmetry constraints on a GESM theory turns out to be quite subtle. 
When the Kaluza-Klein space is not integrable, our bosonic
theories differ {\em locally} from the bosonic sector of ordinary
four-dimensional supergravities coupled to matter and hence may lead
to new locally-defined supergravity models provided that one can find
appropriate local supersymmetrizations. In this regard, we mention
that all known ``no-go theorems'' regarding uniqueness of (ungauged)
supergravity theories in four dimensions assume that the bosonic
sector is described locally by an {\em ordinary} sigma model coupled to
Abelian gauge fields. This is {\em not} the case if one takes the
bosonic sector to be described by a section sigma model whose
Kaluza-Klein space is not integrable.
		
Our formulation allows classical locally-geometric U-folds to be
approached using the well-developed tools of global differential
geometry and global analysis. This could be used to shed light on the
classification of U-fold solutions, on moduli spaces of such and to
construct new classes of classical locally-geometric U-folds. It would
be interesting to extend our constructions to dimensions different
from four as well as to the case of Euclidean signature and to study
how they may relate to ``dimensional reduction'' on generalized
Kaluza-Klein spaces. It would also be interesting to study various
deformation problems for the structures arising in this construction. As a long-term
project, it would be very interesting to gauge our models and to
relate them to the appropriate extensions of gauged supergravity
theories. The reader should also notice that GESM theories contain a
``twisted'' generalization of the standard Maxwell theory of Abelian
gauge fields on Lorentzian four-manifolds which are not
simply-connected, a theory which may itself be of some interest.

\appendix

\section{Pseudo-Riemannian submersions and Kaluza-Klein metrics}
\label{app:submersions}

\subsection{Horizontal distributions and related notions}

For any surjective submersion $\pi:E\rightarrow M$ with $\dim E>\dim
M$ and connected fibers, let $V \eqdef \ker (\dd \pi)\subset TE $
denote the vertical distribution of $\pi$. The differential of $\pi$
induces an unbased linear epimorphism of vector bundles $TE
\rightarrow TM$ which covers $\pi$. This can be identified with a
based epimorphism $\beta_\pi:TE \rightarrow (TM)^\pi$. We have a short
exact sequence of vector bundles over $E$ and based morphisms of
vector bundles:
\ben
\label{Vseq}
0\longrightarrow V \hookrightarrow T E\stackrel{\beta_\pi}{\longrightarrow} (TM)^\pi\longrightarrow 0\, .
\een
A {\em horizontal distribution for $\pi$} is a distribution $H\subset
TE$ such that $TE=H\oplus V$; giving such a distribution amounts to
giving a splitting of the sequence \eqref{Vseq}. Let $\cH(\pi)$ denote
the set of horizontal distributions for $\pi$. Given a horizontal
distribution $H$, let $P_V$ and $P_H$ denote the complementary
projectors of $TE$ on $V$ and $H$ and define $X_H\eqdef P_H X$ and
$X_V\eqdef P_V X$ for any vector field $X\in \cX(E)$. The differential
$\dd\pi$ induces a based isomorphism of vector bundles $(\dd
\pi)_H\eqdef \beta_\pi|_H:H \stackrel{\sim}{\rightarrow}
(TM)^\pi$. The {\em horizontal lift} of a vector field $X\in \cX(M)$
is the vector field $\overline{X}\in \cX(E)$ defined through:
\be
\overline{X}\eqdef (\dd\pi)_H^{-1}(X^\pi)\in \Gamma(M,H)\, .
\ee
Given a smooth section $s\in \Gamma(\pi)$, let $\dd s:TM\rightarrow
TE$ be the differential of $s$, viewed as an unbased morphism of
vector bundles from $TM$ to $TE$ covering $s$. The {\em vertical
  differential} of $s$ is the unbased morphism of vector bundles
$\dd^v s:TM\rightarrow V$ defined through:
\be 
\dd^v s\eqdef P_V\circ \dd s\, , 
\ee 
which covers $s$. Notice that $\dd^v s$ can also be viewed as a
$V^s$-valued one-form $\widehat{\dd^v s} \in\Hom(TM,V^s)\simeq
\Gamma(M,T^\ast M\otimes V^s)$. When no confusion is possible, we will
denote $\widehat{\dd^v s}$ simply by $\dd^{v} s$.

A {\em vertical tensor field defined on $E$} is a section $t\in
\Gamma(E,\otimes^k V^\ast)$, which can also be viewed as an element $t
\in \Hom(\otimes^k V,\R_E)$, where $\R_E$ denotes the trivial real
line bundle defined on the total space of $E$. The {\em vertical
  differential pullback} of $t$ through a section $s\in \Gamma(\pi)$
of $\pi$ is the tensor field defined on $M$ through:
\be
s^\ast_v(t)\eqdef t^s\circ (\otimes^k \widehat{\dd^v s})\in \Hom(\otimes^k TM, \R_M)\simeq \Gamma(M,\otimes^k T^\ast M)\, ,
\ee
where we noticed that $\R_E^s=\R_M$.

Any connection $\nabla$ on $T E$ induces connections
$\nabla^V=P_V\circ \nabla$ on $V$ and $\nabla^H=P^H\circ \nabla$ on
$H$, called the {\em vertical and horizontal connections} induced by
$\nabla$ relative to $H$.

\subsection{Vertically non-degenerate metrics}

A pseudo-Riemannian metric $h$ on $E$ is called {\em vertically
non-degenerate with respect to $\pi$} if its restriction $h_V$ to $V$
is non-degenerate. Then the bundle $H(h)$ of tangent vectors to $E$
which are $h$-orthogonal to $V$ is a complement of $V$ inside $TE$:
\be
TE=V \oplus H(h)
\ee
called the {\em horizontal distribution determined by $\pi$ and
$h$}. Moreover, $h$ restricts to a non-degenerate metric $h_H$ on
$H(h)$.

Let $h$ be a vertically non-degenerate metric with respect to
$\pi$ and let $H\eqdef H(h)$. Let $\nabla^V$ and $\nabla^H$
denote the vertical and horizontal covariant derivatives induced by the
Levi-Civita connection of $(E,h)$. Notice that $h_V\eqdef
h|_V\in \Gamma(E,\Sym^2(V^\ast))$ is a symmetric vertical
2-tensor field defined on $E$. 

\begin{definition}
The {\em vertical first fundamental form of a section $s\in
   \Gamma(\pi)$} is the symmetric $2$-tensor field $s^\ast_v(h_V)\in
 \Gamma(M,\Sym^2(T^\ast M))$:
\begin{equation*}
s^\ast_v(h_V)(X,Y)\eqdef h_V((\dd^v s)(X),(\dd^v s)(Y))\, , \quad \forall\, X,Y\in \cX(M)\, .
\end{equation*}
Given a metric $g$ on $M$, the {\em vertical second fundamental form}
of $s$ is the tensor field $\mathfrak{S}^v(s)\eqdef \nabla
\widehat{\dd^v s}$, where $\nabla$ is the connection induced on the
bundle $T^\ast M\otimes V^s$ by the Levi-Civita connection of $(M,g)$
and by the $s$-pullback of $\nabla^V$, while $\dd^v s$ is the vertical
differential of $s$ relative to the horizontal distribution $H(h)$.
\end{definition}

\begin{remark}
Notice that $\mathfrak{S}^v(s)$ need not be symmetric (even when $\pi$
is a pseudo-Riemannian submersion with totally geodesic fibers, see
below).
\end{remark}

\subsection{The Kaluza-Klein correspondence}

Recall the notion of pseudo-Riemannian submersion \cite[Chapter 7,
Definition 44]{ONeillBook}:

\begin{definition}
Let $\pi:E\rightarrow M$ be a surjective submersion and $h$, $g$ be
pseudo-Riemannian metrics on $E$ and $M$ respectively. We say that
$\pi$ is a {\em pseudo-Riemannian submersion from} $(E,h)$ to $(M,g)$
if the following two conditions are satisfied:
\begin{enumerate}[1.]
\item $h$ is vertically non-degenerate with respect to $\pi$, i.e. the
  restriction of $h$ to $V$ is non-degenerate.
\item The based isomorphism $(\dd\pi)_H:H\stackrel{\sim}{\rightarrow}
  (TM)^\pi$ is an isometry from $(H,h_H)$ to $((TM)^\pi,g^\pi)$.
\end{enumerate}
\end{definition}

\noindent Let $\pi:E\rightarrow M$ be a surjective submersion.  Given
a pseudo-Riemannian metric $g$ on $M$, the pseudo-Riemannian metrics
$h$ on $E$ which make $\pi:E\rightarrow M$ into a pseudo-Riemannian
submersion from $(E,h)$ to $(M,g)$ are parameterized by horizontal
distributions $H\in \cH(\pi)$ and by non-degenerate metrics on $V$, as
we explain below.

\begin{definition}
Let $g$ be a pseudo-Riemannian metric on $M$. A pseudo-Riemannian
metric $h$ on $E$ is called {\em compatible with $\pi$ and $g$} if
$\pi$ is a pseudo-Riemannian submersion from $(E,h)$ to $(M,g)$. It is
called {\em compatible with $\pi$} if there exists a pseudo-Riemannian
metric $g$ on $M$ such that $h$ is compatible with $\pi$ and $g$.
\end{definition}

\noindent 
Let $\Met_{\pi, g}(E)$ denote the set of all pseudo-Riemannian metrics
on $E$ which are compatible with $\pi$ and $g$ and $\Met_\pi(E)$
denote the set of all pseudo-Riemannian metrics on $E$ which are
compatible with $\pi$. For any $h\in \Met(E)$ which is vertically
non-degenerate, there exists at most one $g\in \Met(M)$ such that $h$
is compatible with $\pi$ and $g$. Indeed, $g$ is determined by $\pi$
and $h$ through the relation:
\be
g(X,Y)=h(\overline{X},\overline{Y})~~,~~\forall X,Y\in \cX(M)~~,
\ee
where $\overline{X}$ and $\overline{Y}$ denote the horizontal lifts of
$X$ and $Y$ relative to $H(h)$. Thus
$\Met_\pi(E)=\sqcup_{g\in\Met(M)}\Met_{\pi,g}(E)$. Let $\Met(V)$
denote the set of pseudo-Riemannian metrics on $V$.

\begin{definition}
Let $g\in \Met(M)$ be a pseudo-Riemannian metric on $M$,
$\fh\in \Met(V)$ be a pseudo-Riemannian metric on $V$ and
$H\in \cH(\pi)$ be a horizontal distribution for $\pi$. The {\em
Kaluza-Klein metric} determined by $g$, $\fh$ and $H$ is the
pseudo-Riemannian metric on $E$ defined through:
\ben
\label{KK}
h_{g, \fh, H}(X,Y)\eqdef \fh(X_V,Y_V)+g^\pi( (\dd \pi)_H
(X_H),(\dd \pi)_H(Y_H))\, , \quad \forall\, X,Y\in \cX(E)\, .
\een
\end{definition}

\noindent 
Notice that $h_{g,\fh,H}|_{V}=\fh$, so the Kaluza-Klein metric is
vertically non-degenerate with respect to $\pi$. Moreover, we have
$h_{g,\fh,H}|_H=g^\pi\circ ((\dd\pi)_H\otimes (\dd\pi)_H)$, hence
$(\dd \pi)_H$ is an isometry between the pseudo-Euclidean
distributions $(H, h_{g,\fh,h}|_H)$ and $((TM)^\pi, g^\pi)$. Thus the
Kaluza-Klein metric is compatible with $\pi$ and $g$, i.e. we have
$h_{g,\fh,H}\in \Met_{\pi,g}(E)\subset \Met_\pi(E)$.

\begin{definition}
The {\em Kaluza-Klein correspondence} of $\pi$ is the map
$\mathcal{K}_\pi:\Met(M)\times \Met(V)\times\cH(\pi)\rightarrow
\Met_\pi(E)$ defined through:
\be
\mathcal{K}_\pi(g,\fh,H)\eqdef h_{g,\fh,H}\in \Met_{\pi,g}(E)\subset \Met_\pi(E)\, .
\ee
\end{definition}

\

\noindent
The proof of the following statement is obvious:

\begin{proposition}
Let $h$ be a pseudo-Riemannian metric on $E$ and $g$ be a
pseudo-Riemannian metric on $M$. Then the following statements are
equivalent:
\begin{enumerate}[1.]
\item $h$ is compatible with $\pi$ and $g$, i.e. we have $h\in
  \Met_{\pi,g}(E)$.
\item $h$ is vertically non-degenerate with respect to $\pi$ and
  coincides with the Kaluza-Klein metric determined by $H\eqdef H(h)$,
  $g$ and by the metric $h_V\eqdef h|_{V}\in \Met(V)$.
\end{enumerate}
In particular, $h$ is uniquely determined by $H(h)$, $g$ and $h_V$.
\end{proposition}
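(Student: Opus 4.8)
The plan is to simply unwind the definitions; the statement is a book-keeping exercise once the $h$-orthogonal splitting $TE = H(h)\oplus V$ is in place, so I expect no genuine obstacle — merely a point of care in the second implication.

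First I would dispatch the implication $2.\Rightarrow 1.$, which is immediate: if $h$ coincides with the Kaluza-Klein metric $h_{g,h_V,H(h)}$ determined by $H(h)$, $g$ and $h_V$, then, as observed right after the definition of the Kaluza-Klein metric (where it is noted that $h_{g,\fh,H}\in\Met_{\pi,g}(E)$ for any admissible data $g$, $\fh$, $H$), we get $h\in\Met_{\pi,g}(E)$; that is, $h$ is compatible with $\pi$ and $g$.

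Next I would prove $1.\Rightarrow 2.$ Assume $h\in\Met_{\pi,g}(E)$. By the first condition in the definition of a pseudo-Riemannian submersion, $h$ is vertically non-degenerate with respect to $\pi$, so the horizontal distribution $H\eqdef H(h)$ and the vertical metric $h_V\eqdef h|_V\in\Met(V)$ are well-defined, and $TE=H\oplus V$ with the two summands mutually $h$-orthogonal. Decomposing arbitrary $X,Y\in\cX(E)$ as $X=X_H+X_V$ and $Y=Y_H+Y_V$, the cross terms vanish by this orthogonality, so $h(X,Y)=h(X_H,Y_H)+h(X_V,Y_V)$. The second term equals $h_V(X_V,Y_V)$ by definition of $h_V$. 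For the first term I would invoke the second condition in the definition of a pseudo-Riemannian submersion, namely that $(\dd\pi)_H\colon (H,h_H)\to ((TM)^\pi,g^\pi)$ is an isometry, which gives $h(X_H,Y_H)=g^\pi\big((\dd\pi)_H(X_H),(\dd\pi)_H(Y_H)\big)$. Comparing with formula \eqref{KK}, this shows $h=h_{g,h_V,H}$, which is exactly statement 2.

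Finally, the ``in particular'' clause is then immediate: by the displayed formula \eqref{KK}, the metric $h_{g,h_V,H(h)}$ — hence, by the equivalence just proved, $h$ itself — is completely determined by the triple $\big(H(h),g,h_V\big)$. The only place calling for a moment's attention is the block-diagonal form of $h$ relative to $H(h)\oplus V$ used in the second step: this is a direct consequence of the defining $h$-orthogonality of $H(h)$ and $V$, with vertical non-degeneracy ensuring that $H(h)$ is indeed a complement of $V$ inside $TE$.
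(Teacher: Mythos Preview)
Your proof is correct and is precisely the obvious unwinding of definitions the paper has in mind; indeed, the paper does not give a proof at all, merely noting that ``the proof of the following statement is obvious.'' Your argument makes this explicit in the natural way, using the $h$-orthogonal splitting $TE=H(h)\oplus V$ and the isometry condition on $(\dd\pi)_H$ to recover formula \eqref{KK}.
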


\begin{corollary}
For every $g\in \Met(M)$, the map $\mathcal{K}_\pi(g,\cdot,
\cdot):\Met(V)\times \cH(\pi)\rightarrow \Met_{\pi,g}(E)$ is
bijective. In particular, the Kaluza-Klein correspondence
$\mathcal{K}_\pi$ is a bijection from $\Met(M)\times\Met(V)\times
\cH(\pi)$ to $\Met_\pi(E)$.
\end{corollary}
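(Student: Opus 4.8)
The plan is to deduce the Corollary directly from the preceding Proposition together with the decomposition $\Met_\pi(E)=\sqcup_{g\in\Met(M)}\Met_{\pi,g}(E)$ recorded above. Fix $g\in\Met(M)$. I would first recall that $\mathcal{K}_\pi(g,\cdot,\cdot)$ indeed takes values in $\Met_{\pi,g}(E)$: this was already checked in the discussion preceding the definition of the Kaluza-Klein correspondence, where one verifies from \eqref{KK} that $h_{g,\fh,H}$ is vertically non-degenerate with $h_{g,\fh,H}|_V=\fh$ and that $(\dd\pi)_H$ is an isometry from $(H,h_{g,\fh,H}|_H)$ onto $((TM)^\pi,g^\pi)$. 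Thus $\mathcal{K}_\pi(g,\cdot,\cdot)$ is a well-defined map $\Met(V)\times\cH(\pi)\rightarrow\Met_{\pi,g}(E)$, and it remains to prove it is bijective.

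For surjectivity, take any $h\in\Met_{\pi,g}(E)$. By the equivalence $(1)\Leftrightarrow(2)$ of the Proposition, $h$ is vertically non-degenerate with respect to $\pi$ and $h=h_{g,h_V,H(h)}$, where $H(h)$ is the $h$-orthogonal complement of $V$ in $TE$ and $h_V\eqdef h|_V\in\Met(V)$. Hence $h=\mathcal{K}_\pi(g,h_V,H(h))$, so $(h_V,H(h))$ is a preimage of $h$.

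For injectivity, suppose $\mathcal{K}_\pi(g,\fh_1,H_1)=\mathcal{K}_\pi(g,\fh_2,H_2)=:h$. Evaluating \eqref{KK} on pairs of vertical vector fields gives $h|_V=\fh_i$, so $\fh_1=\fh_2=h_V$. Moreover, evaluating \eqref{KK} on a vertical field $X$ and a field $Y$ lying in $H_i$ yields $h(X,Y)=\fh(X,0)+g^\pi(0,(\dd\pi)_{H_i}(Y))=0$, so each $H_i$ is $h$-orthogonal to $V$; since $h$ is non-degenerate on $TE$ and on $V$, the orthogonal complement $V^{\perp_h}$ is the unique complement of $V$ that is $h$-orthogonal to $V$, whence $H_1=H(h)=H_2$. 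Therefore $\mathcal{K}_\pi(g,\cdot,\cdot)$ is injective, hence bijective, which proves the first assertion.

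Finally, the \emph{in particular} statement follows formally: since $\mathcal{K}_\pi(g,\fh,H)\in\Met_{\pi,g}(E)$ for every $g$, and $\Met_\pi(E)=\sqcup_{g\in\Met(M)}\Met_{\pi,g}(E)$ is a disjoint union, the map $\mathcal{K}_\pi$ is the disjoint union over $g\in\Met(M)$ of the bijections $\mathcal{K}_\pi(g,\cdot,\cdot)$, hence itself a bijection from $\Met(M)\times\Met(V)\times\cH(\pi)$ onto $\Met_\pi(E)$. No step presents a genuine obstacle; the only point that needs a one-line verification is that the block form \eqref{KK} forces $H$ to be $h$-orthogonal to $V$ (so that it must coincide with $H(h)$), which is dispatched by the computation above. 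The real content has already been placed in the Proposition, and this Corollary is essentially a repackaging of it.
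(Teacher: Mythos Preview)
Your proof is correct and follows exactly the approach implicit in the paper, which states the Corollary as an immediate consequence of the preceding Proposition without giving a separate argument. You have simply spelled out the surjectivity and injectivity verifications that the paper leaves to the reader; the key observation---that the block form \eqref{KK} forces $H$ to be $h$-orthogonal to $V$, hence equal to $H(h)$---is precisely the uniqueness clause ``$h$ is uniquely determined by $H(h)$, $g$ and $h_V$'' appended to the Proposition.
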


\noindent Hence any $\pi$-compatible metric $h$ on $E$ determines a
unique metric $g$ on $M$ such that $h$ is compatible with $\pi$ and
$g$. Moreover, $h$ is the Kaluza-Klein metric determined
by $\pi$, $g$, a unique vertical metric $\fh\in \Met(V)$ and a unique
horizontal distribution $H$ for $\pi$, namely $\fh=h_V$ and $H=H(h)$.

If $g\in \Met(M)$ has signature $(p,q)$ and $\fh\in \Met(V)$ has
signature $(p_V,q_V)$, then the Kaluza-Klein metric
$h=\mathcal{K}_\pi(g,\fh,H)$ determined by $g$, $\fh$ and any
horizontal distribution $H$ has signature $(p+p_V,q+q_V)$. When $g$ is
Lorentzian, the Kaluza-Klein metric $h$ is Lorentzian iff $\fh$ is
positive-definite, i.e. iff $q_V=0$. Let $\dim M=d$ and $\dim E=d+n$
where $n>0$. Then $\rk V = n$. The restriction of the Kaluza-Klein
correspondence gives a bijection:
\be
\mathcal{K}_\pi:\Met_{d-1,1}(M)\times \Met_{n,0}(V)\times \cH(\pi)\stackrel{\sim}{\rightarrow} \Met^{d+n-1,1}_\pi(E)\, ,
\ee
where $\Met^{d+n-1,1}_\pi(g)\eqdef \Met_\pi(E)\cap \Met_{d+n-1,1}(E)$
is the set of all Lorentzian metrics on $E$ which are compatible with
$\pi$.

\begin{definition}
A {\em Lorentzian submersion} from $(E,h)$ to $(M,g)$ is a surjective
pseudo-Riemannian submersion from $(E,h)$ to $(M,g)$, where $(E,h)$
and $(M,g)$ are Lorentzian manifolds.
\end{definition}

\noindent 
The remarks above show that the set of Lorentzian metrics $h$ on $E$
for which $\pi:(E,h)\rightarrow (M,g)$ is a Lorentzian submersion is
in bijection with the set $\Met_{d,0}(V)\times \cH(\pi)$ through the
Kaluza-Klein correspondence.

\section{Local description in adapted coordinates}
\label{app:localadapted}

Let $\pi:(E,h)\rightarrow (M,g)$ be a Lorentzian Kaluza-Klein space. 
Let $(W,u^0,\ldots, u^3, y^1, \ldots, y^n)$ be a $\pi$-adapted
coordinate chart on $E$ with corresponding coordinate chart
$(U,x^0,\ldots, x^3)$ on the base, where $U=\pi(W)$ and
$u^\mu=x^\mu\circ \pi$. We use lowercase Greek letters for indices
corresponding to the base, uppercase Latin letters for indices related
to the fiber and lowercase Latin letters for indices related to the
total space $E$.

We have $(\dd \pi)(\frac{\partial}{\partial
u^\mu})=\frac{\partial}{\partial x^\mu}$ and
$(\dd\pi)(\frac{\partial}{\partial y^A})=0$. Thus
$(\frac{\partial}{\partial y^1}\ldots \frac{\partial}{\partial y^n})$
is a local frame of the vertical distribution $V$ defined on $W$. Any
local frame $e_0\ldots e_3$ of the horizontal distribution $H$ which
is defined on $W$ has the form:
\ben
\label{e}
e_\mu=e_\mu^{~\nu}\frac{\partial}{\partial u^\nu}+v_\mu^{~A} \frac{\partial}{\partial y^A}\, ,
\een
for some coefficient functions $e_\mu^{~\nu},
v_\mu^{~A}\in\cC^\infty(W,\R)$, where the determinant of the
matrix-valued function $e \eqdef (e_\mu^{~\nu})_{\mu,\nu=0\ldots
3}\neq 0$ does not vanish on $W$. Setting
$f:=(f_\mu^{~\nu})_{\mu,\nu=0\ldots 3}\eqdef
e^{-1}\in \cC^\infty(W,\mathrm{Mat}(4,\R))$, we have
$f_{\mu}^{~\nu}e_{\nu}^{~\rho}=e_{\mu}^{~\nu}f_\nu^{~\rho}=\delta_\mu^{~\rho}$
and:
\ben
\label{f}
\frac{\partial}{\partial u^\mu}=f_{\mu}^{~\nu}e_\nu-f_\mu^{~A} \frac{\partial}{\partial y^A}\, ,
\een
where we defined: 
\be
f_\mu^{~A}\eqdef f_{\mu}^{~\nu} v_\nu^{~A}~~.
\ee
We also have $(\dd \pi)(e_\mu)=e_\mu^{~\nu}\frac{\partial}{\partial
  x^\mu}$. Given a vector field on $M$ locally expanded as $X=X^\mu
\frac{\partial}{\partial x^\mu}$, its horizontal lift defined by $H$
is given by ${\bar X}= X^\nu f_\nu^{~\mu} e_\mu$, so $f_\mu^{~\nu}$
are the coefficients of the vertical lift morphism with respect to the
local frames $\frac{\partial}{\partial x^\mu}$ of $TX$ and $e_\mu$ of
$H$:
\be
\overline{\frac{\partial}{\partial x^\mu}}=f_\mu^{~\nu}e_\nu~~.
\ee 

\noindent	
Let: 
\be
h_{\mu\nu}\eqdef_W h(e_\mu,e_\nu)~~,~~h_{AB}\eqdef_W h(\frac{\partial}{\partial y^A}, \frac{\partial}{\partial y^B})~~,~~ g_{\mu\nu}\eqdef_U g(\frac{\partial}{\partial x^\mu}, \frac{\partial}{\partial x^\nu})~~,
\ee
where $h_{\mu\nu}$ and $h_{AB}$ are smooth functions defined on $W$
while $g_{\mu\nu}$ are smooth functions defined on $U$. Since $H$ and
$V$ are $h$-orthogonal, we have $h(e_\mu, \frac{\partial}{\partial
y^A})=0$. Since $\dd\pi$ restricts to an isometry between $H$ and
$(TM)^\pi$, we have:
\be
h_{\mu\nu}=g_{\rho\sigma}^\pi e_\mu^{~\rho}e_\nu^{~\sigma}~~,
\ee
where $g_{\mu\nu}^\pi \eqdef g_{\mu\nu}\circ \pi$. In the local
coordinates chosen, this means:
\be
h_{\mu\nu}(u,y)=h_{\mu\nu}(u) = h_{\mu\nu}(x\circ\pi) = h_{\mu\nu}(x)\, .
\ee
Setting:
\be
e^\mu\eqdef f_\rho^{~\mu}\dd u^\rho~~,~~e^A\eqdef \dd y^A-f_\mu^A \dd u^\mu~~,
\ee
we have:
\beqa
&& e^\mu(e_\nu)=\delta^\mu_\nu~~,~~e^\mu(\frac{\partial}{\partial y^B})=0 \\
&& e^A(e_\nu)=0~~,~~e^A(\frac{\partial}{\partial  y^B})=\delta^A_B~~,
\eeqa
which shows that $(e^\mu,e^A)$ is the coframe dual to the frame
$(e_\mu, \frac{\partial}{\partial y^A})$. The metric $h$ has the
well-known Kaluza-Klein form:
\ben
\label{hlocal}
h=_W h_{\mu\nu} e^\mu\odot e^\nu+h_{AB} e^A\odot e^B=g_{\mu\nu}^\pi 
\dd u^\mu \odot \dd u^\nu +h_{AB}(\dd y^A-f_\mu^A\dd u^\mu)\odot (\dd y^B-f_\mu^B\dd u^\mu) ~~,
\een
which depends only on $f_\mu^A$. Thus: 
\beqa
&& h_V=_W h_{AB} e^A\odot e^B=h_{AB}(\dd y^A-f_\mu^A\dd u^\mu)\odot (\dd y^B-f_\mu^B\dd u^\mu) \\ && \equiv h_{AB}(x,y)(\dd y^A-f_\mu^A(x,y)\dd x^\mu)\odot (\dd y^B-f_\mu^B(x,y)\dd x^\mu)\, ,\\
&& h_H=_W h_{\mu\nu} e^\mu\odot e^\nu=g_{\mu\nu}^\pi \dd u^\mu\odot \dd u^\nu\equiv g_{\mu\nu}(x) \dd x^\mu\odot \dd x^\nu~~,
\eeqa
where after the sign $\equiv$ we identified $u$ with $x$ by abuse of
notation. Here $h_{AB}$ and $f_\mu^A$ depend on both $x$ and $y$.

\subsection{Local expression of the vertical differential and of the vertical first fundamental form of a section}

Any section $s\in \Gamma(\pi)$ satisfies $\pi\circ s=\id_M$, which
implies $u^\mu\circ s=_U x^\mu$. Defining:
\be
\varphi^A \eqdef y^A \circ s|_U\in \cC^\infty(U,\R)~~,
\ee
the section has the local expression:
\be
u^\mu(s(x))=x^\mu~~,~~y^A=\varphi^A(x)
\ee
in the coordinate charts $(U,x)$ of $M$ and $(W,x,y)$ of $E$. Setting
$\partial_\mu \varphi^A\eqdef \frac{\partial\varphi^A}{\partial
x^\mu}$, we have:
\be
(\dd s)(\frac{\partial}{\partial x^\mu})=\frac{\partial}{\partial u^\mu}+ \partial_\mu\varphi^A\frac{\partial}{\partial y^A}= f_{\mu}^{~\nu}e_\nu+(\partial_\mu \varphi^A-f_\mu^{~A}) \frac{\partial}{\partial y^A}~~,
\ee
which gives the local expression of the vertical differential of $s$:
\be
(\dd^v s)(\frac{\partial}{\partial x^\mu})=(\partial_\mu\varphi^A-f_\mu^A)\frac{\partial}{\partial y^A}~~.
\ee
In turn, this gives the local expression of the vertical first fundamental form: 
\be
s^\ast_v(h_V)(\frac{\partial}{\partial x^\mu}, \frac{\partial}{\partial x^\nu})=
(h_{AB}\circ s) (\partial_\mu\varphi^A-f_\mu^A\circ s) (\partial_\nu \varphi^B -f_\nu^B\circ s)\, .
\ee

\subsection{Local form of the Lagrange density of the section sigma model}

Using the formulas obtained above, we find that the Lagrange density
of the section sigma model has the coordinate expression:
\begin{eqnarray}
\label{evlocal}
e^v_\bPhi(s)=_U g^{\mu\nu} (h_{AB}\circ s)(\partial_\mu\varphi^A-f_\mu^A\circ s)(\partial_\nu \varphi^B-f_\nu^B\circ s)+\bPhi(s)\, ,
\end{eqnarray}
i.e.:
\ben
\label{evlocalcoords}
e^v(s)(x)=_U g^{\mu\nu}(x)h_{AB}(x,\varphi(x)) [\partial_\mu\varphi^A(x)-f_\mu^A(x,\varphi(x))] [\partial_\nu \varphi^B-f_\nu^B(x,\varphi(x))]+\bPhi(x,\varphi(x))~~.
\een
Hence the action for a relatively compact subset $U_0\subset U$ takes the form: 
\begin{eqnarray*}
\label{Slocal}
&& S_{\sc,U_0}[s]=S_{\sc,U_0}[\varphi]=\int_{U_0}\dd^4 x\sqrt{|g|} 
\left\{ g^{\mu\nu}(x)h_{AB}(x,\varphi(x)) [\partial_\mu\varphi^A(x)-f_\mu^A(x,\varphi(x))][\partial_\nu \varphi^B-f_\nu^B(x,\varphi(x))]\right. \\  && \left.  + \bPhi(x,\varphi(x))\right\}\, ,
\end{eqnarray*}
where $|g|\eqdef |\det (g_{\mu\nu})|$.

\subsection{Local form of the vertical tension field}

Let $\Gamma_{ij}^k=\Gamma_{ji}^k$ denote the Levi-Civita coefficients
of $h$ in the coordinate chart $(W,x,y)$. Thus:
\beqa
&& \nabla_{\frac{\partial}{\partial u^\mu}}(\frac{\partial}{\partial u^\nu})=
\Gamma_{\mu\nu}^\rho \frac{\partial}{\partial u^\rho} + \Gamma^A_{\mu\nu} \frac{\partial}{\partial y^A}=
\Gamma_{\mu\nu}^\rho f_\rho^\sigma e_\sigma + (\Gamma^A_{\mu\nu}-\Gamma_{\mu\nu}^\rho f_\rho^A) \frac{\partial}{\partial y^A}
\nn\\
&& \nabla_{\frac{\partial}{\partial u^\mu}}(\frac{\partial}{\partial y^A})=\nabla_{\frac{\partial}{\partial y^A}}(\frac{\partial}{\partial u^\mu})=
\Gamma_{\mu A}^\nu \frac{\partial}{\partial u^\nu} + \Gamma^B_{\mu A} \frac{\partial}{\partial y^B}=
\Gamma_{\mu A}^\nu f_\nu^{~\rho}e_\rho + (\Gamma^B_{\mu A}-\Gamma_{\mu A}^\nu f_\nu^B) \frac{\partial}{\partial y^B}
\nn\\
&& \nabla_{\frac{\partial}{\partial y^A}}(\frac{\partial}{\partial y^B})=\Gamma_{AB}^\mu \frac{\partial}{\partial u^\mu} 
+ \Gamma^C_{AB} \frac{\partial}{\partial y^C}=\Gamma_{AB}^\mu f_{\mu}^{~\nu}e_\nu 
+ (\Gamma^C_{AB}-\Gamma_{AB}^\mu f_\mu^{C}) \frac{\partial}{\partial y^C}~~,
\eeqa
with the symmetry properties $\Gamma_{\mu\nu}^\rho=\Gamma_{\nu\mu}^\rho$, $\Gamma_{AB}^C=\Gamma_{BA}^C$ , $\Gamma_{AB}^\mu=\Gamma_{BA}^\mu$, $\Gamma_{A\mu}^\nu=\Gamma_{\mu A}^\nu$ and $\Gamma_{A\mu}^B=\Gamma_{\mu A}^B$. This gives:
\beqa
&& \nabla^v_{\frac{\partial}{\partial u^\mu}}(\frac{\partial}{\partial
	u^\nu})=(\Gamma^A_{\mu\nu}-\Gamma_{\mu\nu}^\rho f_\rho^A) \frac{\partial}{\partial y^A}\\
&& \nabla^v_{\frac{\partial}{\partial u^\mu}}(\frac{\partial}{\partial
	y^A})= \nabla^v_{\frac{\partial}{\partial
		y^A}}(\frac{\partial}{\partial
	u^\mu})=(\Gamma^B_{\mu A}-\Gamma_{\mu A}^\nu f_\nu^B)\frac{\partial}{\partial y^B} \\ 
&&\nabla^v_{\frac{\partial}{\partial
		y^A}}(\frac{\partial}{\partial
	y^B})=(\Gamma^C_{AB}-\Gamma_{AB}^\mu f_\mu^{C})\frac{\partial}{\partial y^C}~~. 
\eeqa
On the other hand, we have: 
\be
\nabla_{\frac{\partial}{\partial x^\mu}}(\frac{\partial}{\partial x^\nu})=K_{\mu\nu}^\rho \frac{\partial}{\partial x^\rho}\, .
\ee
Let $\partial_A f_\nu^C\eqdef \frac{\partial f_\nu^C}{\partial y^A}$. Using the relations above, we compute:
\beqa
&& \nabla^v_{(\dd s)(\frac{\partial}{\partial x^\mu})} (\dd^v s(\frac{\partial}{\partial x^\nu}))=\nabla^v_{\frac{\partial}{\partial u^\mu}+
	\partial_\mu\varphi^A\frac{\partial}{\partial y^A}}\left[(\partial_\nu\varphi^B-f_\nu^B)\frac{\partial}{\partial y^B}\right]=\nn\\
&&=\nabla^v_{\frac{\partial}{\partial u^\mu}}\left[(\partial_\nu\varphi^B-f_\nu^B)\frac{\partial}{\partial y^B}\right]+ \partial_\mu\varphi^A\nabla^v_{\frac{\partial}{\partial y^A}}\left[(\partial_\nu\varphi^B-f_\nu^B)\frac{\partial}{\partial y^B}\right]=\nn\\
&&=\left[\partial_\mu(\partial_\nu\varphi^C-f_\nu^C)-\partial_\mu\varphi^A\partial_Af_\nu^C+
(\Gamma_{\mu A}^C-\Gamma_{\mu A}^\rho f_\rho^C)(\partial_\nu\varphi^A-f_\nu^A)+
(\Gamma^C_{AB}-\Gamma^\rho_{AB}f_\rho^C)\partial_\mu\varphi^A
(\partial_\nu\varphi^B-f_\nu^B)\right]\frac{\partial}{\partial y^C}~~
\eeqa
and:
\be
\dd^v s(\nabla_{\frac{\partial}{\partial x^\mu}}(\frac{\partial}{\partial x^\nu}))= K_{\mu\nu}^\rho (\partial_\rho\varphi^C-f_\rho^C)\frac{\partial}{\partial y^C}~~.
\ee
Hence the vertical tension filed of $s$ is given by:
\be
\tau^v(s)= g^{\mu\nu}\left[\nabla^v_{(\dd s)(\frac{\partial}{\partial x^\mu})} (\dd^v s(\frac{\partial}{\partial x^\nu}))-
\dd^v s(\nabla_{\frac{\partial}{\partial x^\mu}}(\frac{\partial}{\partial x^\nu}))\right]=\tau^v(s)^C\frac{\partial}{\partial y^C}~~,
\ee
where:
\begin{eqnarray*}
\tau^v(s)^C= g^{\mu \nu}\left[ \partial_\mu(\partial_\nu\varphi^C-f_\nu^C)-\partial_\mu\varphi^A\partial_Af_\nu^C+ (\Gamma_{\mu A}^C-\Gamma_{\mu A}^\rho f_\rho^C)(\partial_\nu\varphi^A-f_\nu^A)+\right. \\  \left. (\Gamma^C_{AB}-\Gamma^\rho_{AB}f_\rho^C)\partial_\mu\varphi^A	(\partial_\nu\varphi^B-f_\nu^B)-K_{\mu\nu}^\rho (\partial_\rho\varphi^C-f_\rho^C)
\right]\, .
\end{eqnarray*}
This can also be written as:
\beqan
\label{tauvlocal}
&&\tau^v(s)^C=\partial^\mu\partial_\mu \varphi^C + \nn\\
&& + ~(\Gamma^C_{AB}-\Gamma^\rho_{AB}f_\rho^C)\partial_\mu\varphi^A\partial^\mu\varphi^B  +
[\Gamma_{\mu A}^C-\Gamma_{\mu A}^\rho f_\rho^C -\partial_A f_\mu^C+ (\Gamma^\rho_{AB}f_\rho^C-\Gamma_{AB}^C)f_\mu^B] \partial^\mu\varphi^A \\
&&\nn - g^{\mu\nu}K_{\mu\nu}^\rho \partial_\rho\varphi^C + ~g^{\mu\nu}K_{\mu\nu}^\rho f_{\rho}^C+ g^{\mu\nu}(\Gamma_{\mu A}^\rho f_\rho^C-\Gamma_{\mu A}^C)f_\nu^A  -\partial^\mu f_\mu^C\, .
\eeqan
In particular, the harmonic section equation for $s$ amounts to a
system of second order PDEs for the functions $\varphi^C$, which
contain terms of order zero and can be viewed as deformations of the
d'Alembert equation.

\subsection{Local form of the curvature of $H$}

The Lie bracket of the vector fields given in \eqref{e} has the form:
\beqa
&& [e_\mu,e_\nu]=(e_\mu^{~\rho}\partial_\rho
e_\nu^{~\sigma}-e_\nu^{~\rho}\partial_\rho e_\mu^{~\sigma}+ e_\mu^A \partial_A
e_\nu^{~\sigma}-e_\nu^A\partial_A e_\mu^{~\sigma})\frac{\partial}{\partial
	u^\sigma}\nn \\ && + (e_\mu^{~\rho} \partial_\rho e_\nu^C -e_\nu^{~\rho} \partial_\rho e_\mu^C + e_\mu^A\partial_A e_\nu^B- e_\nu^A\partial_A e_\mu^B)\frac{\partial}{\partial y^C}=\nn\\
&& =(e_\mu^\rho\partial_\rho
e_\nu^\sigma-e_\nu^{~\rho}\partial_\rho e_\mu^{~\sigma}+ e_\mu^A \partial_A
e_\nu^{~\sigma}-e_\nu^A\partial_A e_\mu^{~\sigma})f_\sigma^{~\tau} e_\tau+ \nn\\
&& [e_\mu^{~\rho} \partial_\rho e_\nu^C -e_\nu^{~\rho}
\partial_\rho e_\mu^C + e_\mu^A\partial_A e_\nu^C- e_\nu^A\partial_A e_\mu^C
-f_\sigma^C(e_\mu^{~\rho}\partial_\rho
e_\nu^{~\sigma}-e_\nu^{~\rho}\partial_\rho e_\mu^{~\sigma}+ e_\mu^A \partial_A
e_\nu^{~\sigma}-e_\nu^A\partial_A e_\mu^{~\sigma}) ]\frac{\partial}{\partial y^C}~~.
\eeqa
Hence:
\be
\cF(e_\mu,e_\nu)=\cF_{\mu\nu}^C\frac{\partial}{\partial y^C}~~,
\ee
where: 
\be
\cF_{\mu\nu}^C=e_\mu^{~\rho} \partial_\rho e_\nu^C -e_\nu^{~\rho} \partial_\rho
e_\mu^C + e_\mu^A\partial_A e_\nu^C- e_\nu^A\partial_A
e_\mu^C-f_\sigma^C(e_\mu^{~\rho}\partial_\rho
e_\nu^{~\sigma}-e_\nu^{~\rho}\partial_\rho e_\mu^{~\sigma}+ e_\mu^A
\partial_A e_\nu^{~\sigma}-e_\nu^A\partial_A e_\mu^{~\sigma})~~.
\ee
Thus $H$ is integrable iff $\cF_{\mu\nu}^C=0$. The previous
calculation uses a local frame $\left\{ e_{\mu}\right\}_{\mu =
  0,\hdots,3}$ of $H$. Any other local frame of $H$ defined above $W$
has the form:
\be
e'_\mu=a_\mu^{~\nu}e_\nu~~,
\ee
where $a=(a_\mu^{~\nu})_{\mu,\nu=0\ldots 3}$ is an invertible
matrix. We have $e'=ae$ and $(e')_\mu^A=a_\mu^{~\nu}e_\nu^{~A}$, thus
$f'=fa^{-1}$ and $(f')_\mu^{~A}=f_\mu^{~A}$. The metric \eqref{hlocal}
is invariant under such changes of the local frame of $H$. Since $e$
is non-degenerate, by taking $a=e^{-1}$ we can always insure that
$e'=I_4$.  Hence $H$ always admits a local frame with
$e_\mu^{~\nu}=\delta_\mu^\nu$, which we call a {\em special
  frame}. For such a frame of $H$, we have
$f_\mu^{~\nu}=\delta_\mu^\nu$, $f_\mu^A=v_\mu^A$ and:
\be
e_\mu=\frac{\partial}{\partial u^\mu}+v_\mu^{~A} \frac{\partial}{\partial y^A}=
\frac{\partial}{\partial u^\mu}+f_\mu^{~A} \frac{\partial}{\partial y^A}~~.
\ee
When expressed using a special frame of $H$, the curvature coefficients become: 
\ben
\cF_{\mu\nu}^C=\partial_\mu f_\nu^C-\partial_\nu f_\mu^C+f_\mu^A\partial_A f_\nu^C-f_\nu^A\partial_A f_\mu^C~~.
\een

\subsection{The case when $H$ is integrable}
\label{subsec:Hintlocal}
A simple particular case arises when $H|_W$ is Frobenius
integrable. In this situation, one can find adapted coordinates
$(W,u,y)$ such that $H$ identifies locally with the integrable
distribution $H=(TM)^\pi$ (which is spanned by
$\frac{\partial}{\partial u^0}\ldots \frac{\partial}{\partial u^3}$)
and such that $\frac{\partial}{\partial u^\mu}$ are invariant under
the Ehresmann transport of $H$.  We can then take
$e_\mu=\frac{\partial}{\partial u^\mu}$, which corresponds to
$e_\mu^{~\nu}=\delta_\mu^\nu$ and $e_\mu^{~A}=0$. Thus $f_\mu^A=0$ and
the Kaluza-Klein metric reduces to:
\ben
\label{h0}
h=_W g_{\mu\nu}^\pi \dd u^\mu\odot \dd u^\nu+h_{AB} \dd y^A \odot \dd y^B~~.
\een
Since $h_{V}$ is invariant under Ehresmann transport (which proceeds
through isometries of the fibers) the coefficients $h_{AB}$ are
independent of $x$ and coincide with the metric coefficients of the
fiber:
\ben
\label{hflat}
h_{AB}=\cG_{AB}(y)~~.
\een
Hence the metric \eqref{h0} reduces to the product metric $g\times
\cG$. We have $\Gamma_{\mu A}^C=0$, hence the coefficients
\eqref{tauvlocal} of the vertical tension field reduce to the local
form of an ordinary sigma model:
\be
\tau^v(s)^C=\partial^\mu\partial_\mu \varphi^C+\Gamma^C_{AB}\partial_\mu\varphi^A\partial^\mu\varphi^B - 
g^{\mu\nu}K_{\mu\nu}^\rho \partial_\rho\varphi^C
\ee
On the other hand, the restriction of $\bPhi$ to $W$ is a function:
\begin{equation}
\bPhi|_{W}\colon W\to \R\, ,
\end{equation}
which in principle depends both on $u^{\mu}$ and $y^{A}$. 
Equation \eqref{Vvert} implies:
\begin{equation}
\dd\bPhi|_{W} = \frac{\partial}{\partial u^{\mu}} \bPhi|_{W}\, \dd u^{\mu} + 
\frac{\partial}{\partial y^{A}} \bPhi|_{W}\, \dd y^{A}= \left( \frac{\partial}{\partial y^{A}} \bPhi|_{W} - f^{~A}_{\nu}\frac{\partial}{\partial y^{A}} \bPhi|_{W}\right)\, \dd y^A\, .
\end{equation}  
Since integrability of $H|_W$ implies $f^{A}_{\mu} = 0$ (see above), we
find that $\bPhi|_W$ only depends on $y^{A}$:
\begin{equation}
\frac{\partial}{\partial u^{\mu}} \bPhi|_{W} = 0\, , \quad \dd\bPhi|_{W} = \frac{\partial}{\partial y^{A}} \bPhi|_{W}\, \dd y^A\, . 
\end{equation}
Thus $\bPhi|_W(x,y)=\Phi_W(y)$ for some function $\Phi_W$. This shows
that the section scalar sigma model is locally equivalent with the
ordinary scalar sigma model when $H$ is integrable. A more geometric
explanation of this fact is given in Section
\ref{sec:scalarsection}.

\subsection{Local expression of the fundamental bundle form $\bTheta$ and of the fundamental bundle field $\bPsi$}

Let $\bcD$ be a scalar-electromagnetic bundle of type $\bcD =
(\pi,\bPhi,\bXi)$ with associated Kaluza-Klein space $\pi\colon
(E,h)\to (M,g)$. Let $\bDelta = (\bcS,\bomega,\bD)$ be the
corresponding duality bundle, namely $\bcS$ is a real vector bundle of
rank $2r$ over $E$ endowed with a symplectic structure $\bomega$ and a
compatible flat connection $\bD$. As explained in Section
\ref{sec:scalarelectro}, a taming $\bJ$ of $(\bcS,\bomega)$ is said to
be \emph{vertical} if it satisfies $\bD_{X}\circ \bJ = \bJ\circ
\bD_{X}$ for all $X\in\Gamma(E,H)$.  Let $(W,u^0,\ldots, u^3, y^1,
\ldots, y^n)$ be a $\pi$-adapted coordinate chart on $E$, with
corresponding coordinate chart $(U,x^0,\ldots, x^3)$ on $M$, where
$U=\pi(W)\subset M$ and $u^\mu=x^\mu\circ \pi$. Let $\cE := (
\epsilon_{M})_{M=1,\ldots, 2r}$ be a local flat symplectic frame of
$\bcS$ over $W\subset E$ and let $\cE^{\ast}
:=(\epsilon^{M})_{M=1,\ldots, 2r}$ denote the dual frame of
$\bcS^\ast$. In terms of $\cE$ and $\cE^{\ast}$, the restriction of
$\bJ$ to $W$ can be written as follows:
\ben
\label{Jexp}
\bJ|_{W} = \bJ_{M}^{\,\, N} \epsilon_{N}\otimes \epsilon^{M}\, ,
\een
where $\bJ_{M}^{\,\, N}\in C^{\infty}(W,\mathbb{R})$. Let
$H=h(h)\subset TE$ be the horizontal distribution of the Kaluza-Klein
space $\bcD$.  As already explained, $H$ is locally spanned on $W$ by
the tangent vector fields $e_\mu$ of equation \eqref{e}.  Since $\cE$
is a local flat frame, we have:
\begin{equation*}
\bD(\epsilon_{M}) = 0~~.
\end{equation*}
Using this relation, the verticality condition for $J$ amounts to
\begin{equation*}
\bD_{e_{\mu}}\bJ|_W = 0~~,
\end{equation*}
which in turn is equivalent with:
\ben
\label{Jvertlocal}
\epsilon_{\mu}^{\,\,\nu}\frac{\partial}{\partial u^{\nu}} \bJ_{M}^{\,\,N} + v_{\mu}^{A}\frac{\partial}{\partial y^{A}} \bJ_{M}^{\,\,N} = 0~~,
\een
where we used \eqref{Jexp} and \eqref{e}. 

Assume now that $H$ is integrable. Without loss of
generality, we can than take $e_{\mu}^{\,\,\nu} = \delta^{\nu}_{\mu}$ and
$v_{\mu}^{\,\, A} =0$, thus $e_\mu=\frac{\partial}{\partial u^\mu}$
(see Subsection \ref{subsec:Hintlocal}). In this case, the verticality
condition \eqref{Jvertlocal} for $\bJ$ reduces to:
\ben
\label{Jvertlocalint}
\frac{\partial}{\partial u^{\mu}} \bJ_{M}^{\,\,N} = 0\, , \qquad \mu = 0,\hdots, 3\, .
\een
This shows that $ \bJ_{M}^{\,\,N}$ depend only on $y^{A}$. Using the definition of the fundamental bundle
form $\bTheta$, we obtain:
\begin{equation*}
\bTheta|_{W}(e_{M}) = (\bD^{\ad}\bJ)|_{W} (e_{M}) = \bD (\bJ|_{W}(e_{M})) = \left(\frac{\partial}{\partial y^{A}} \bJ_{M}^{\,\, N}\right) \dd y^{A} \otimes e_{N}\, .
\end{equation*}
Similarly, the fundamental bundle field $\bPsi$ is given by:
\begin{equation*}
\bPsi|_{W}(e_{M}) = h^{AB} \left(\frac{\partial}{\partial y^{A}} \bJ_{M}^{\,\, N}\right) \frac{\partial}{\partial y^{B}} \otimes e_{N}~~.
\end{equation*}
Let $s$ be a section of $\pi$. The pull-backs by $s$ of $\bTheta$ and $\bPsi$ are given by:
\begin{equation}
\label{eq:localThetaPsi}
\bTheta^{s}|_{U}(e^{s}_{M}) = \left(\frac{\partial \bJ_{M}^{\,\, N}}{\partial y^{A}}\right)\circ \varphi \, \dd\varphi^{A} \otimes e^{s}_{N}\, , 
\quad \bPsi^{s}|_{U_{\alpha}}(e^{s}_{M}) = \cG^{AB}\circ \varphi \left(\frac{\partial \bJ_{M}^{\,\, N}}{\partial y^{A}} \right)\circ \varphi \, 
\left(\frac{\partial}{\partial y^{B}}\right)^s \otimes e^{s}_{N}\, ,
\end{equation}
where we used equation \eqref{hflat} and we defined:
\begin{equation*}
\varphi^{A} = y^{A}\circ s|_{U} \in \cC^{\infty}(U,\mathbb{R})~~.
\end{equation*}
Notice that the pullback of $\bJ$ by $s$ has the local form:
\begin{equation*}
\bJ^{s}|_{U} = \bJ_{M}^{\,\,N}\circ \varphi \, e^{s}_{N}\otimes e^{s M}~~.
\end{equation*}
Let $\cD = (\cS, \omega, D, J)$ be the type of $\bcD$ and consider a
special trivializing atlas $(U_{\alpha}, \mq_{\alpha})_{\alpha\in I}$
for the scalar-electromagnetic bundle $\bcD$. This induces local
isometric trivializations:
\begin{equation*}
q_{\alpha}\colon E_{\alpha} \to U_{\alpha}\times \cM
\end{equation*}
of $\pi\colon (E,h)\to (M,g)$ and unbased isomorphisms of electromagnetic structures:
\begin{equation*}
\mq_{\alpha}\colon (\bcS_{\alpha},\bomega_{\alpha},\bD_{\alpha},\bJ_{\alpha})\stackrel{\sim}{\rightarrow}  (\bcS^{0}_{\alpha},\bomega^{0}_{\alpha},\bD^{0}_{\alpha},\bJ^{0}_{\alpha})\, ,
\end{equation*}
where $\bcS_{\alpha}\eqdef \bcS|_{U_\alpha}$ etc. and:
\begin{equation*}
\bcS^{0}_{\alpha} = \cS^{p^{0}_{\alpha}}\, , \quad \bomega^{0}_{\alpha} = \omega^{p^{0}_{\alpha}}\, , \quad \bD^{0}_{\alpha} = D^{p^{0}_{\alpha}}\, , \quad \bJ^{0}_{\alpha} = J^{p^{0}_{\alpha}}~~.
\end{equation*}
In turn, $\mq_\alpha$ induce isomorphisms:
\begin{equation*}
\mq_\alpha^{(s)}:(\bcS_\alpha^s, \bomega_\alpha^s, \bD_\alpha^s,
\bJ_\alpha^s)\stackrel{\sim}{\rightarrow} (\cS^{\varphi^{\alpha}},
\omega^{\varphi^{\alpha}}, D^{\varphi^{\alpha}},
J^{\varphi^{\alpha}})\, .
\end{equation*}
Setting $U=U_\alpha$ in equation \eqref{eq:localThetaPsi} gives:
\begin{eqnarray}
\label{eq:localThetaPsiII}
\Theta^{\alpha}(e^{\alpha}_{M}) &=& \left(\frac{\partial}{\partial \varphi^{A}} J_{M}^{\,\, N}(\varphi^{\alpha})\right) \dd\varphi^{A} \otimes e^{\alpha}_{N}~~, \nonumber\\ 
\Psi^{\alpha}(e^{\alpha}_{M}) &=& \cG^{AB}(\varphi^{\alpha}) \left(\frac{\partial}{\partial \varphi^{A}} J_{M}^{\,\, N}(\varphi^{\alpha})\right) \frac{\partial}{\partial \varphi^{B}} \otimes e^{\alpha}_{N}~~.
\end{eqnarray}
Here $\Theta^{\alpha} \eqdef (\varphi^{\alpha})^{\ast} \Theta$ and
$\Psi^{\alpha} \eqdef (\varphi^{\alpha})^{\ast} \Psi$ (where $\Phi$
and $\Psi$ are the fundamental form and fundamental vector field of
$\cD$ defined in \cite{GESM}), while $e^{\alpha}_{M} =
\mq^{s}_{\alpha}(e^{s}_{M})$ form a flat symplectic frame $\cE^\alpha$
of $\cS^{\varphi^\alpha}$. We conclude that the local form of
$\bTheta^{s}$ and $\bPsi^{s}$ is consistent with that of generalized
Einstein-Scalar-Maxwell theories. We can identify $\bcV|_{U_\alpha}$
with the bundle-valued form:
\begin{equation*}
\cV^{\alpha} = \mq^{(s)}_{\alpha}(\bcV_{\alpha}) = F^{\Lambda} e^{\alpha}_{\Lambda} + G_{\Lambda} e^{\alpha, \Lambda} \in
\Omega^{2}_{\cl}(U_{\alpha},S^{\varphi^{\alpha}})~~,
\end{equation*}
where $F^{\Lambda}, G_{\Lambda} \in \Omega^{2}_{\cl}(U_{\alpha})$ for
$\Lambda = 1,\hdots,r$. In the symplectic frame $\cE^{\alpha}$, the
form $\varphi^{\alpha}(\omega)\in \Omega^2(U_\alpha)$ has matrix
components $\omega^{\alpha}_{MN}$ ($M, N =1,\hdots, 2r$) corresponding
to the standard symplectic matrix. The previous formulas imply:
\begin{eqnarray*}
(\ast \bcV , \bPsi^s\bcV) &=_{U_{\alpha}}&
(\ast\cV^{N}, \cV^{P})_{g}\, \omega^{\alpha}_{NM}
\left(\frac{\partial}{\partial \varphi^{A}} J_{P}^{\,\,
  M}(\varphi)\right) = -2 (\partial_{A} G_{\Lambda}, \ast
F^{\Lambda})_{g}\, ,\\ \bcV\loslash \bcV &=_{U_{\alpha}}&
(\omega^{\alpha}_{NM} J_{P}^{\,\, M})\, \cV^{N}_{\mu\rho}
\cV^{P}_{\delta\nu} g^{\rho\delta} \dd x^{\mu}\otimes \dd x^{\nu}\, ,
\end{eqnarray*}
as required by local consistency with standard ESM
theories. It is now easy to see that the
positive polarization condition:
\begin{equation*}
\ast_g \bcV=-\bJ^s\bcV\, ,
\end{equation*}
takes the following form when restricted to $U_{\alpha}$:
\begin{equation}
\label{eq:twistedself}
\ast_{g}\cV = -J^{\varphi^{\alpha}} \cV~~,
\end{equation}
thus reducing to the local twisted self-duality condition 
by of an ordinary ESM theory defined on $U_\alpha$.

\begin{acknowledgements}
The work of C. I. L. is supported by grant IBS-R003-S1. The work of
C. S. S. is supported by a Humboldt Research Fellowship from the
Alexander Von Humboldt Foundation. C.S.S. wishes to thank Vicente
Cort\'es for enlightening discussions and for pointing out that the
geodesic completeness condition is not necessary for the construction
presented in the paper.
\end{acknowledgements}

\end{document}